\definecolor{ForestGreen}{RGB}{34,139,34}  
\tikzset{currarrow/.style={-Stealth}}
\newcommand{\naturals}{\mathbb{N}}
\newcommand{\integers}{\mathbb{Z}}
\newcommand{\reals}{\mathbb{R}}
\newcommand{\coss}[1]{\cos\mkern-2mu\left(#1\right)}
\newcommand{\arccoss}[1]{\arccos\mkern-2mu\left(#1\right)}
\newcommand{\sinn}[1]{\sin\mkern-2mu\left(#1\right)}
\newcommand{\tann}[1]{\tan\mkern-2mu\left(#1\right)}
\renewcommand{\d}[1]{\operatorname{d}\mkern-2mu{#1}}
\newcommand{\dd}{\mathrm{d}}
\newcommand{\shoreline}[1]{$\mathcal S_{#1}$}
\newcommand{\pshoreline}[1]{$\mathcal P(#1)$}
\newcommand{\tradeoff}[1]{$\mathcal T(#1)$}
\newcommand{\ashoreline}[3]{$\mathcal{A}\mkern-4mu\left(#1,#2,#3\right)$}
\newcommand{\inspect}[1]{\mathcal I(#1)}
\renewcommand\subsubsection{\@startsection{subsubsection}{3}{\z@}%
  {-3.25ex\@plus -1ex \@minus -.2ex}%
  {1.5ex \@plus .2ex}%
  {\normalfont\normalsize\bfseries}}%
\newcommand{\ignore}[1]{}
\newtheorem{aalgorithm}[theorem]{Algorithm}
\begin{document}

\title{
Disk and Partial Disk Inspection: \\
Worst- to Average-Case and Pareto Upper Bounds
\thanks{
This is the full version of the paper~\cite{conley2024multi-sirocco}, which is included in the proceedings of the 32nd International Colloquium on Structural Information and Communications Complexity (SIROCCO 2025), held on June 2–4, 2025, in Delphi, Greece.
}
}

\author{
James Conley
\and 
Konstantinos Georgiou
\thanks{Research supported in part by NSERC Discovery grant
and by the Toronto Metropolitan University Faculty of Science Dean’s Research Fund.
}
}
\authorrunning{J. Conley and K. Georgiou}

\institute{Department of Mathematics, 
Toronto Metropolitan University, 
Toronto, ON, Canada
\email{j1conley,konstantinos@torontomu.ca}
}

\maketitle

\begin{abstract}
We consider $n$ unit-speed mobile agents initially positioned at the center of a unit disk, tasked with inspecting all points on the disk’s perimeter. 
A perimeter point is considered covered if an agent positioned outside the disk’s interior has unobstructed visibility of it, treating the disk itself as an obstacle. 
For $n=1$, this problem is referred to as the shoreline problem with a known distance and was originally proposed as a more tractable variant of Bellman's famous lost-in-the-forest problem~\cite{bellman1956minimization}. 
Isbell~\cite{isbell1957optimal} in 1957 derived an optimal trajectory that minimizes the worst-case inspection time for that problem, while Gluss~\cite{gluss1961alternative} in 1961 proposed some heuristics for the average-case inspection problem. 
For $n\geq 2$ agents, worst-case optimal trajectories were shown in~\cite{AcharjeeGKS19,dobrev2020improved}.

Our contributions are threefold. First, and as a warm-up, we extend Isbell’s findings by deriving worst-case optimal trajectories addressing the \emph{partial inspection} of a section of the disk, hence deriving an alternative proof of optimality for inspecting the disk with $n \geq 2$ agents. Second, we analyze the average-case inspection time, assuming a uniform distribution of perimeter points (equivalent to randomized inspection algorithms). Using spatial discretization and Nonlinear Programming (NLP), we propose feasible solutions to the continuous problem and evaluate their effectiveness compared to NLP solutions, improving upon the heuristics reported in~\cite{gluss1961alternative}. 
Third, we establish Pareto-optimal bounds for the multi-objective problem of jointly minimizing the worst-case and average-case inspection times.

\vspace{0.5cm}
\noindent
{\bf Mobile Agents, Search, Disk, Shoreline Problem}
\end{abstract}

\newpage
\setcounter{tocdepth}{3}
\tableofcontents
\newpage

\section{Introduction}
In 1955, Bellman~\cite{bellman1956minimization} posed a seminal  question in search theory: given a random point $P$ within a region $R$, what trajectory minimizes the expected time to reach the boundary, and what is the maximum time required? This question laid the groundwork for search theory, inspiring subsequent research into optimal trajectories under uncertainty. Bellman also proposed studying specific cases, such as when $R$ represents the area between two parallel lines separated by a known distance or when $R$ is a semi-infinite plane with a known distance to the boundary.
This problem also appears in Bellman's classic work on dynamic programming~\cite{bellman1958dynamic} (Problem 6, page 133), now recognized as the ``lost-in-a-forest'' problem. Over the decades, this question has inspired various modifications, with only a few specific cases fully resolved and many still open (see Section~\ref{sec: related work} for additional references).
This question is a central reference in search theory~\cite{berzsenyi1995lost} and has even appeared in the USSR Olympiad Problem Book~\cite{ussr_olympiad_russian}. Notably, the variant with two parallel lines is among the few examples that have been fully solved (see Problem 7, page 133, in~\cite{bellman1958dynamic}).

Bellman~\cite{bellman1956minimization} introduced a notable variation of the problem as a simpler alternative to the lost-in-a-forest problem. This worst-case variation, solved by Isbell~\cite{isbell1957optimal} in 1957, is known as the \emph{shoreline problem with known distance}. The problem seeks a minimum-length curve in $ \mathbb{R}^2 $ whose convex hull contains a disk of radius $1$ centered at one endpoint. Alternatively, in modern mobile agent terms, it involves a unit-speed mobile agent tasked with finding a hidden infinite (shore)line at a known distance of $1$, though its location is unknown. Isbell’s solution provides the agent’s optimal trajectory, minimizing the worst-case time to reach the shoreline.
Motivated by the optimal results in~\cite{isbell1957optimal}, and in particular by properties of the underlying optimal trajectories for the worst-case problem, Gluss~\cite{gluss1961alternative} proposed heuristic solutions to the average-case problem, yielding the first (and to our knowledge, the only) upper bounds.

The shoreline can also be interpreted as the visibility boundary for an agent tasked with inspecting (or patrolling) the perimeter of a unit disk, starting from its center. This application could represent inspecting the perimeter of a circular building for defects or intruders, where visibility of the exterior is possible only from outside the building. Specifically, a point on the disk is considered visible (inspected) if no convex combination of the agent’s position and the point lies within the disk’s interior, treating the disk as an obstacle to visibility.
We refer to this task as the \emph{1-agent inspection problem}, which Isbell resolved optimally by minimizing the worst-case inspection cost.

To the best of our knowledge, we are the first to study the \emph{multi-agent} variant of this problem in both worst- and average-case settings. For the worst-case setting, we derive optimal results for the partial inspection problem, which yield alternative proofs of optimality when inspection is performed by $n$ agents. For the average-case setting, which is the main novelty of this work, we introduce a systematic methodology based on discretization and nonlinear programming, and we establish upper bounds that improve on the heuristics of~\cite{gluss1961alternative} both in value and in method. This framework also applies to the multi-objective setting, where we provide Pareto upper bounds that balance worst-case and average-case costs.

\subsection{Related Work}
\label{sec: related work}

The study of search problems began in the 1950s and 1960s~\cite{beck1964linear,bellman1963optimal} and has since become a distinct field within operations research and theoretical computer science. Foundational surveys and texts~\cite{ahlswede1987search,alpern2013search,alpern2006theory,CGK19search,gal2010search} document its development. Early research focused on single-agent line search~\cite{baeza1988searching,kleinberg1994line}, while advances in robotics spurred interest in multi-agent variants~\cite{CzyzowiczGGKMP14,pattanayak2018evacuating}.

In search problems, agents aim to locate hidden targets within a domain. 
Linear search problems, such as the classic cow-path problem, are fundamental and have been extensively studied for single~\cite{baezayates1993searching} and multiple agents~\cite{ChrobakGGM15}. Recent work introduces variations minimizing weighted completion times~\cite{GLweightedLine2023} and adapts searches to complex domains such as rays~\cite{BrandtFRW20}, terrains~\cite{CzyzowiczKKNOS17}, and graphs~\cite{AngelopoulosDL19}.

Two-dimensional search introduces added complexity, with early results for polygons~\cite{FeketeGK10} and disks~\cite{CzyzowiczGGKMP14}, and expanded to include shapes like triangles, squares, $\ell_p$ disks, and regular polygons~\cite{BagheriNO19,CzyzowiczKKNOS15,georgiou2022triangle,GLLKllp2023,georgiou2025multi}. Multi-agent models in these contexts address coordination, speed~\cite{BampasCGIKKP19}, and fault tolerance~\cite{behrouz2023byzantine,BGMP2022pfaulty,CzyzowiczGGKKRW17}. The only related work addressing worst-case and average-case trade-offs in mobile agent search is in~\cite{chuangpishit2020multi}, examining single-agent search and evacuation in a unit disk.

The two-dimensional search problem most relevant to our work is Bellman's shoreline problem~\cite{bellman1958dynamic}, which was first solved optimally in the worst-case for a single agent by Isbell~\cite{isbell1957optimal}, with the target shoreline at a known distance from the agent's starting point. Searching for a shoreline at a known distance with $n \geq 4$ agents was solved optimally in the worst-case sense in~\cite{AcharjeeGKS19}, while the cases of $n=2,3$ were treated in~\cite{dobrev2020improved}. To the best of our knowledge, the only upper bounds known for the average-case problem are due to the heuristics in~\cite{gluss1961alternative}, where it was conjectured that optimal trajectories share structural properties with worst-case solutions. In the present work, we systematize the search for efficient trajectories and obtain inspection strategies that not only improve the previously reported upper bounds for the single-agent case, but also, interestingly, diverge from the conjectured structural properties.

The more general lost-in-a-forest problem has been surveyed in~\cite{berzsenyi1995lost,finch2004lost}. Recent progress on the area include that for convex polygons in~\cite{gibbs2016bellman} and well as for general shapes using competitive analysis in~\cite{kubel2021approximation}. 
Searching for a (shore)line without knowledge of its distance was introduced for a single robot in~\cite{baeza1988searching}, with extensions for $n \geq 2$ robots explored in~\cite{baeza1997searching,baeza1995parallel,baezayates1993searching,jez2009two}. For $n=1$, a logarithmic spiral achieves the best known competitive ratio of $13.81$~\cite{baeza1988searching,finch2005searching}, with lower bounds of $6.3972$ unconditionally~\cite{baeza1995parallel} and $12.5385$ for cyclic trajectories~\cite{langetepe2012searching}. 
For multiple robots, a double logarithmic spiral yields a competitive ratio of $5.2644$ for $n=2$~\cite{baeza1995parallel}, and for $n \geq 3$, robots moving along rays achieve competitive ratio at most $1/\cos(\pi/n)$~\cite{baeza1995parallel}, with matching lower bounds provided for $n \geq 4$ and $n=3$ in~\cite{AcharjeeGKS19} and~\cite{dobrev2020improved}, respectively. Variants where partial information is available yield varying competitive ratios based on distance, slope, or orientation knowledge~\cite{baeza1988searching,isbell1957optimal,gluss1961minimax,jez2009two}.

Beyond shoreline search, strategies have been developed for circles~\cite{gluss1961minimax} and other planar regions. Langetepe~\cite{langetepe2010optimality} proved that the spiral search is optimal for single-agent point search in two dimensions, while multi-agent and grid-based searches with memory constraints have been examined in~\cite{emek2015many,Emekicalp2014,fricke2016distributed,LangnerKUW15}. Recent studies address point search in planar and geometric terrains~\cite{bouchard2018deterministic,pelc2018reaching}, with multi-speed agents~\cite{georgiou2025spirals} and explore trade-offs between search cost and information~\cite{pelc2018information,pelc2019cost}.

\subsection{Roadmap}

Section~\ref{sec: Inspection Problems and New Contributions} presents the necessary preliminaries and result statements. Specifically, Section~\ref{sec: definitions notation and auxiliary problems} formally introduces the problems studied and defines the notation, while Section~\ref{sec: Formal Description of New Results} summarizes all results with appropriate quantification, and in Section~\ref{sec: comparison} we provide a comprehensive comparison to previously known results. 

The worst-case analysis begins in Section~\ref{sec: worst case inspection for shoreline n agents}, with key inspection-related observations made and proved in Section~\ref{sec: Some Preliminary Observations}. Section~\ref{sec: solve inspection by partial inspection} then addresses the partial inspection problem, followed by the $n$-agent inspection problem for the worst case.
The average-case problem is examined in Section~\ref{sec: average case inspection for shoreline n agents}. Specifically, Section~\ref{sec: discrete average} presents a discretized algorithm for the problem and quantifies its performance. This foundation allows us to address the partial continuous average inspection problem in Section~\ref{sec: solution to avg partial and shoreline-n}.
Finally, Section~\ref{sec: tradeoffs} explores the multi-objective problem, and Section~\ref{sec: conclusions} concludes the paper with a discussion of the established results.

\section{Inspection Problems and New Contributions}
\label{sec: Inspection Problems and New Contributions}

\subsection{Formal Definition \& Some Auxiliary Problems}
\label{sec: definitions notation and auxiliary problems}

We formally define the main problems studied below, along with auxiliary problems whose solutions aid in deriving key results. All definitions are provided in this section for convenience.

\paragraph{$n$-agent Disk Inspection Problem, \shoreline{n}:}
In this problem, $n$ unit-speed agents (searchers) start from the center of a unit-radius disk. The agents can move freely within the plane of the disk, changing directions without added time costs. For a given trajectory, a point $P$ on the disk's perimeter is said to be inspected if an agent passes through a point $A$ in the plane such that no convex combination of $A$ and $P$ lies within the disk. The first time $P$ is inspected by an agent is its \emph{inspection time}, denoted by $\inspect{P}$. Feasible solutions to this problem are agents' trajectories that inspect every point on the disk's perimeter. In \shoreline{n}, the focus is on minimizing either the worst-case inspection time, $\sup_P \inspect{P}$, or the average-case inspection time, $\mathbb{E}_P \inspect{P}$, where the expectation is taken over a uniform distribution of $P$.

The problem of minimizing $\sup_P \inspect{P}$ for \shoreline{1} was originally introduced and optimally solved in~\cite{isbell1957optimal}. Also known as the \textit{shoreline problem with known distance}, it seeks the shortest trajectory of a unit-speed agent to reach a hidden line located at a distance of 1 from the agent. Equivalently, the problem asks for the shortest continuous 1-dimensional curve, with one endpoint at the disk's center, whose convex hull encloses the disk. Minimizing $\sup_P \inspect{P}$ for \shoreline{n} extends this description with $n$ such curves.

For the problem of minimizing $\mathbb{E}_P \inspect{P}$ in \shoreline{n}, the relevant trajectories correspond to deterministic algorithms, evaluated by their average-case performance. Alternatively, these algorithms can be viewed as randomized, where the circle is uniformly rotated before execution. Under this randomized interpretation, the expected inspection time $\inspect{P}$ is identical for all points $P$ and matches the expected performance of the deterministic algorithm under a uniformly random choice of $P$. Thus, minimizing $\mathbb{E}_P \inspect{P}$ can also be framed as minimizing the worst-case (expected) performance of a randomized algorithm.

\paragraph{Single-Agent Partial Disk Inspection Problem, \pshoreline{c}:}
Our approach to minimizing $\sup_P \inspect{P}$ for \shoreline{n} is facilitated by examining the single-agent problem \pshoreline{c}, which involves inspecting an arc of length $c \in [0, 2\pi]$ on the disk’s perimeter. Noting that in an optimal solution for \shoreline{n}, all agents should conclude their search simultaneously, each agent should inspect an arc of length $2\pi/n$. Thus, \shoreline{n} is equivalent to \pshoreline{2\pi/n}.

\paragraph{Single-Agent Auxiliary Disk Inspection Problem, \ashoreline{k}{\theta}{c}:}
This problem is defined for parameters $k \in \mathbb{N} \cup \{\infty\}$, $\theta < \pi/2$, and $c \geq 2\theta$. It is inspired by optimal (worst-case performance) algorithms for \shoreline{n}. Here, a single-agent partial inspection problem is introduced where the agent’s starting position is not at the origin but at a carefully selected point outside the disk’s interior. This setup enables deriving results also for minimizing $\mathbb{E}_P \inspect{P}$ for \shoreline{n}.

The parameter $k$ designates whether the inspection domain is discrete or continuous, with $k$ specifying the number of discrete inspection points if applicable. The parameter $c$ represents the range of points to be inspected, and $\theta$ specifies the agent's initial position. For simplicity, we consider a unit-radius disk centered at the origin in the 2D Cartesian plane, denoted $O$, with its perimeter parameterized as 
$$\mathcal{P}_t := (\coss{t}, \sinn{t}),$$
where $t \in [0, 2\pi].$

In \ashoreline{k}{\theta}{c}, the agent starts at $(1, \tan{\theta})$. When $k = \infty$, the objective is to inspect points $P_\phi$ with $\phi \in [2\theta, c]$, noting that points $P_\phi$ in $[0, 2\theta]$ are already inspected due to the agent's initial placement, as we will prove later. When $k \in \mathbb{N}$, the goal is to inspect discrete points $P_i := \mathcal{P}_{\phi_i}$, where
$$
\phi_i := 2\theta + \frac{c - 2\theta}{k} i, \quad i = 0, \ldots, k.
$$
Here, $\phi_i$ implicitly depends on the fixed values of $k$, $\theta$, and $c$. As with \shoreline{n}, \ashoreline{k}{\theta}{c} may aim to minimize either the worst-case or average-case inspection time. Later, we will connect and quantify solutions for \ashoreline{\infty}{\theta}{c} and \ashoreline{k}{\theta}{c}, ultimately relating them to \shoreline{n} for carefully selected values of $k$, $\theta$, and $c$ based on $n$.

\paragraph{Single-Agent Trade-off Problem \tradeoff{\lambda}:}
With our results for \shoreline{n} established, we consider the multi-objective optimization problem of minimizing both worst-case and average-case inspection times with a single agent. Alternatively, we approach bounding the Pareto-optimal set for these objectives by minimizing
$$
\lambda \cdot \sup_P \inspect{P} + (1-\lambda) \cdot \mathbb{E}_P \inspect{P},
$$
where $\lambda \in [0,1]$. The trade-off problem \tradeoff{\lambda} addresses real-life applications in which minimizing average performance is prioritized, subject to a manageable worst-case bound, such as cases with energy constraints that limit the operational duration of mobile agents.

\subsection{Main Results Made Formal \& Motivation}
\label{sec: Formal Description of New Results}

In this section we outline our main contributions, while in the following section, we provide a detailed comparison to known results. 
The motivating starting point for the present work is the optimal worst-case solution to \shoreline{1}~\cite{isbell1957optimal} (or equivalently to $\mathcal P(2\pi)$) which we generalize with the following result. 

\begin{theorem}
\label{thm: optimal worst case partial inspection sol}
The optimal worst case inspection cost to \pshoreline{c} is 
$$
\begin{cases}
    \frac{1}{\coss{c/2}}, & \text{if } 0 \leq c \leq \frac{2\pi}{3}, \\
    1 - 2\coss{c}, & \text{if } \frac{2\pi}{3} < c \leq \frac{5\pi}{6}, \\
    1 + \sqrt{3} + c - \frac{5\pi}{6}, & \text{if } \frac{5\pi}{6} < c \leq 2\pi.
\end{cases}
$$
\end{theorem}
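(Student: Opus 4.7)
By rotational symmetry, assume WLOG the arc to inspect corresponds to angles $[-c/2, c/2]$, with endpoints $E_\pm := \mathcal P_{\pm c/2}$. Recall that $\mathcal P_\phi$ is visible from an exterior position $A$ iff $A \in H_\phi := \{X \in \reals^2 : X \cdot \mathcal P_\phi \geq 1\}$, and that from $A$ at distance $r \geq 1$ in direction $\alpha$, the visible sub-arc has half-width $\arccoss{1/r}$. Any feasible trajectory starts at $O$ and must visit $H_\phi$ for every $\phi \in [-c/2, c/2]$, in particular both $H_+$ and $H_-$. The plan is to prove matching upper and lower bounds separately in each of the three regimes.

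For the \emph{upper bounds}, I would construct an explicit trajectory in each regime. In Regime 1 ($c \leq 2\pi/3$), take the straight segment from $O$ to the common tangent intersection $T = (\secc{c/2}, 0)$: its visible sub-arc has half-width $\arccoss{\coss{c/2}} = c/2$, so it exactly covers the target arc, with total length $\secc{c/2}$. In Regime 2 ($2\pi/3 < c \leq 5\pi/6$), use a ``bent path'' $O \to P_+ \to P_-$ with $P_\pm \in \partial H_\pm$. Reflecting $O$ across $\partial H_+$ yields $O'_+ := 2 E_+$, and by the reflection principle the shortest such bent path has length $\operatorname{dist}(O'_+, \partial H_-) = 1 - 2\coss{c}$. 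One then checks (i) the segment $P_+P_-$ stays outside the disk interior, its distance from $O$ being $2\sinn{c} \geq 1$ throughout this regime, and (ii) the visible sub-arcs at $P_+$ and $P_-$ (of half-widths $\pi - c$ and $\arccoss{1/\sqrt{5 - 4\cos^2 c}}$ respectively) together cover the target arc, with equality exactly at $c = 5\pi/6$. In Regime 3 ($c > 5\pi/6$), extend the $c = 5\pi/6$ bent path (length $1 + \sqrt{3}$, inspecting a $5\pi/6$-arc) by an Isbell-style continuation of length $c - 5\pi/6$ that inspects the residual arc at unit rate; at $c = 2\pi$ this recovers Isbell's classical bound $1 + \sqrt{3} + 7\pi/6$.

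For the \emph{lower bounds} in Regimes 1 and 2, any feasible trajectory must visit both $H_+$ and $H_-$, so its length is at least the minimum of $\secc{c/2}$ (the shortest distance from $O$ into $H_+ \cap H_-$) and $1 - 2\coss{c}$ (the bent-path minimizer). Using the factorization $\secc{c/2} - (1 - 2\coss{c}) = (\coss{c/2} + 1)(2\coss{c/2} - 1)^2 / \coss{c/2} \geq 0$, which vanishes only at $c = 2\pi/3$, the first bound is tight in Regime 1 and the second in Regime 2. In Regime 3 the endpoint-only argument falls short; I would adapt Isbell's convex-hull-containment argument, showing that the convex hull of any feasible trajectory must have support function at least $1$ in every direction $\mathcal P_\phi$ with $\phi \in [-c/2, c/2]$, and that the shortest curve starting from $O$ with this property has length $1 + \sqrt{3} + (c - 5\pi/6)$. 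The main obstacle will be this Regime 3 lower bound, which requires an Isbell-style global argument beyond the reflection principle; a secondary technical point is the Regime 2 coverage verification, which reduces to checking the trigonometric inequality $2(\pi - c) + 2\arccoss{1/\sqrt{5 - 4\cos^2 c}} \geq c$ throughout this regime.
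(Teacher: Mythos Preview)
Your overall strategy is different from the paper's and, once patched, is a valid alternative. The paper works globally: it invokes Isbell's convexity result, reduces to two one-parameter families of curves (Type-1: two straight segments; Type-2: segment, tangent, arc, tangent), writes their lengths $f_1(\theta),f_2(\theta)$, and optimizes over $\theta$ by calculus. You instead build explicit trajectories regime by regime and, for Regimes~1--2, argue the lower bound using only the two endpoint halfspaces $H_\pm$. That endpoint-only lower bound is more elementary than the paper's Isbell-based argument and does suffice in those two regimes, but your write-up of it contains a genuine slip.

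The slip is in the Regime-1 lower bound. You assert that any feasible path has length at least $\min\bigl(\secc{c/2},\,1-2\coss{c}\bigr)$, and your factorization shows $\secc{c/2}\ge 1-2\coss{c}$ everywhere; read literally, this yields only the bound $1-2\coss{c}$, which is \emph{strictly smaller} than $\secc{c/2}$ for $c<2\pi/3$, so nothing matches the Regime-1 upper bound. The point you are missing is that the reflection value $\operatorname{dist}(O'_+,\partial H_-)=|2\coss{c}-1|$ is always a lower bound on $|OA|+|AB|$ with $A\in\partial H_+$, $B\in\partial H_-$, but it is only \emph{achieved} when the segment from $O'_+$ to the perpendicular foot $B^\star$ actually crosses $\partial H_+$. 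Computing $B^\star\cdot E_+ = 2-2\cos^2 c+\cos c$, the crossing condition $B^\star\cdot E_+<1$ is equivalent to $(2\coss{c}+1)(\coss{c}-1)>0$, i.e.\ $c>2\pi/3$. For $c\le 2\pi/3$ the reflected optimizer is infeasible, and a direct one-variable minimization of $|OA|+\operatorname{dist}(A,H_-)$ over $A\in\partial H_+$ shows the true minimum is $\secc{c/2}$, attained at $A=T$. With this case split in place, your endpoint-halfspace argument does give the tight lower bound in both Regimes~1 and~2. One further remark on Regime~3: the ``Isbell-style continuation'' you describe is not a tail appended after $P_-$ but an arc of length $c-5\pi/6$ \emph{inserted} at the tangency point of the $c=5\pi/6$ bent segment (which is tangent to the disk precisely there since $2\sinn{5\pi/6}=1$); this is exactly the paper's Type-2 curve with $\theta=\pi/6$.
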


As a corollary, (and setting $c=2\pi/n$), we obtain an alternative proof of the following known optimal results. 

\begin{theorem}
\label{thm: general wrs optimal}
The optimal worst case cost to \shoreline{n} ia
$$
\begin{cases}
1+\sqrt{3}+2\pi/n -5\pi/6,		& \textrm{ if }~ n=1,2, \\
1/\coss{\pi/n},					& \textrm{ if }~ n\geq 3. \\
\end{cases}
$$
\end{theorem}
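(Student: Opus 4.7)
The plan is to reduce \shoreline{n} to \pshoreline{2\pi/n} and then apply Theorem~\ref{thm: optimal worst case partial inspection sol} with $c=2\pi/n$. For the upper bound I would deploy $n$ agents in rotational symmetry: agent $k$ (for $k=0,\dots,n-1$) follows the optimal \pshoreline{2\pi/n} trajectory guaranteed by Theorem~\ref{thm: optimal worst case partial inspection sol}, rotated about the origin by $2\pi k/n$. The $n$ inspected arcs tile the perimeter, and every agent terminates in time equal to the optimal \pshoreline{2\pi/n} cost, giving the stated value as an upper bound on \shoreline{n}.

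For the lower bound, any feasible \shoreline{n} schedule of worst-case cost $T$ yields inspected sets $S_0,\dots,S_{n-1}$ whose union is the whole perimeter, so by pigeonhole some $|S_{i^{\ast}}|\geq 2\pi/n$. The key technical claim — which I view as the main obstacle — is that inspecting a (possibly disconnected) subset of total measure $m$ cannot be faster than inspecting a single arc of length $m$: a contiguous arc is the ``easiest'' target because its tangent half-planes share a common region on one side of the disk, whereas a disconnected set forces travel between widely separated half-planes. Granting this monotonicity, $T$ is at least the optimal \pshoreline{2\pi/n} cost, matching the upper bound. Alternatively, since the theorem is framed as an \emph{alternative} proof of known optimality, the lower bound can also be inherited from~\cite{AcharjeeGKS19, dobrev2020improved}.

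Once the equivalence to \pshoreline{2\pi/n} is in place, what remains is a case split on which branch of Theorem~\ref{thm: optimal worst case partial inspection sol} is active. For $n\geq 3$ we have $c=2\pi/n\in(0,2\pi/3]$, so the first branch applies and gives $1/\coss{\pi/n}$. For $n=1,2$ we have $c\in\{2\pi,\pi\}\subset(5\pi/6,2\pi]$, so the third branch applies uniformly and yields $1+\sqrt{3}+2\pi/n-5\pi/6$. The two formulas agree at the boundary $n=3$, since $1/\coss{\pi/3}=2=1-2\coss{2\pi/3}$, providing a sanity check that no case has been mishandled and the piecewise expression is continuous across the transition.
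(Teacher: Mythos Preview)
Your proposal is correct and follows essentially the same approach as the paper: reduce \shoreline{n} to \pshoreline{2\pi/n} via a symmetric upper-bound construction and a lower bound that replaces each agent's (possibly disconnected) inspected set by a contiguous arc of the same total length, then invoke Theorem~\ref{thm: optimal worst case partial inspection sol} and split on whether $2\pi/n$ lands in the first or third branch. The only cosmetic difference is that the paper, after making arcs contiguous, equalizes the arc lengths across agents using the monotonicity of the \pshoreline{c} cost, whereas you shortcut this with a pigeonhole to a single agent carrying measure at least $2\pi/n$; both routes rest on the same ``contiguous is no harder'' claim that you correctly identify as the crux and that the paper likewise asserts without further elaboration.
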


Our next main result concerns upper bounds for the average-case inspection time of \shoreline{n}, which for $n=1$ improves upon the previously reported bounds in~\cite{gluss1961alternative}. Notably, the results in~\cite{gluss1961alternative} contain a numerical error, which we identify and correct in the next section, where we also provide a detailed comparison between their heuristic approach and ours.

\begin{theorem}
\label{thm: general avg upper bound}
\shoreline{n} can be solved in average-case inspection time as outlined in the next table. 
\begin{table}[h!]
\centering
\begin{tabular}{|c|c|c|c|}
\hline
$n$                   & $1$                  & $2$                  & $\geq 3$ \\ \hline
average cost  & $3.5509015$          & $1.7946051$          & $\frac{n}{2\pi} \log \left( \frac{1+\sin{\pi/n}}{1-\sin{\pi/n}} \right)$ \\ \hline
\end{tabular}
\caption{Upper bounds to the average inspection cost of \shoreline{n}.
Indicatively, we also get performance equal to $\frac{3 \log \left(4 \sqrt{3}+7\right)}{2 \pi }\approx 1.2576$ and $\frac{2 \log \left(2 \sqrt{2}+3\right)}{\pi }\approx 1.1222$, for 
$n=3,4$, respectively, while the values continue to drop and asymptotically tend to $1+\Theta(1/n^2)$. 
}
\end{table}
\ignore{
$n$   			average cost 								$\theta$
$1$			$3.5509015$							$0.5910554$	
$2$			$1.7946051$							$0.8054878$
$\geq 3$		$\frac{n}{2\pi} \log \left( \frac{1+\sinn{\pi/n}}{1-\sinn{\pi/n}} \right)$		$\pi/n$
\begin{table}[h!]
\centering
\begin{tabular}{|c|l|l|}
\hline
$n$            & \multicolumn{1}{c|}{average cost}                                                     & $\theta$    \\ \hline
$1$            & $3.5509015$                                                                            & $0.5910554$ \\ \hline
$2$            & $1.7946051$                                                                            & $0.8054878$ \\ \hline
$\geq 3$       & $\frac{n}{2\pi} \log \left( \frac{1+\sinn{\pi/n}}{1-\sinn{\pi/n}} \right)$                & $\pi/n$     \\ \hline
\end{tabular}
\caption{Cost and Angle Values}
\end{table}
}
\end{theorem}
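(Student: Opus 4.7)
The plan is to exploit the rotational symmetry of \shoreline{n}: because $\mathbb{E}_P \inspect{P}$ is taken over a uniform $P$ on the perimeter, I may place the $n$ agents $n$-fold symmetrically so that each inspects an arc of length $c := 2\pi/n$, and the full-circle average then coincides with the average over any single such arc. The task therefore reduces to exhibiting, for each regime of $n$, a single-agent trajectory for \pshoreline{c} and bounding its average inspection cost. I will split into the ``large $n$'' regime ($n \geq 3$, where $c \leq 2\pi/3$) and the ``small $n$'' regime ($n = 1, 2$, where $c \geq \pi$).

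For $n \geq 3$ I would take the very trajectory of Theorem~\ref{thm: optimal worst case partial inspection sol}: the straight segment of length $\sec(\pi/n)$ from the origin to $A := (1, \tan(\pi/n))$. A short tangent-line computation shows that the two tangent points from $A$ to the unit disk are exactly $P_0$ and $P_{2\pi/n}$, so the whole arc $[0, 2\pi/n]$ is visible from $A$ and thus fully inspected on arrival. For the inspection-time function I would write the agent's position as $(t\cos(\pi/n), t\sin(\pi/n))$, impose that it lie on the tangent line $x\cos\phi + y\sin\phi = 1$ to $P_\phi$, and read off $t(\phi) = \sec(\pi/n - \phi)$. Averaging and applying $\int \sec u\, du = \log\lvert \sec u + \tan u\rvert$ together with the identity $\frac{\sec u + \tan u}{\sec u - \tan u} = \frac{1 + \sin u}{1 - \sin u}$ yields the claimed closed form. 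The asymptotic $1 + \Theta(1/n^2)$ follows by Taylor-expanding $\log\frac{1+x}{1-x} = 2x + \frac{2x^3}{3} + O(x^5)$ at $x = \sin(\pi/n)$.

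For $n = 1, 2$ the point $A$ is undefined or at infinity, so a single straight segment cannot complete the inspection. My plan is to use the auxiliary problem \ashoreline{\infty}{\theta}{c} with a two-phase trajectory: phase one is a straight segment from the origin to $(1, \tan\theta)$ for some well-chosen $\theta < \pi/2$, which inspects $[0, 2\theta]$ along the way (with the same $\sec(\theta - \phi)$ times as in the previous paragraph); phase two is a trajectory inside \ashoreline{\infty}{\theta}{c}, entered at time $\sec\theta$, that handles the remaining arc $[2\theta, c]$. To construct a good phase-two trajectory I would invoke the discretization \ashoreline{k}{\theta}{c} developed in Section~\ref{sec: discrete average}, solve the associated nonlinear program for increasing $k$, and then optimize over the exterior angle $\theta$. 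This would produce the quoted $\theta \approx 0.5910554$ with average cost $3.5509015$ for $n = 1$, and $\theta \approx 0.8054878$ with average cost $1.7946051$ for $n = 2$.

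The hard part will be the $n = 1, 2$ case, precisely because the NLP output must be converted into a rigorous upper bound for the continuous problem. Two technical steps are needed: (i) a lifting argument, showing that any discrete feasible solution of \ashoreline{k}{\theta}{c} extends to a continuous feasible trajectory for \ashoreline{\infty}{\theta}{c} with quantifiable cost overhead, and (ii) a limiting (or monotonicity) argument ensuring that the discrete optimum, as $k \to \infty$, upper-bounds the continuous optimum. Both are exactly what Sections~\ref{sec: discrete average} and~\ref{sec: solution to avg partial and shoreline-n} are designed to provide, so I would invoke them directly, then add the cost of phase one ($\sec\theta$ plus the $\sec(\theta - \phi)$ integral over $[0, 2\theta]$) to the phase-two bound to extract the numerical values in the table.
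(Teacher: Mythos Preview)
Your proposal is correct and follows essentially the same approach as the paper: reduce \shoreline{n} to \pshoreline{2\pi/n} by $n$-fold symmetry, handle $n\geq 3$ with the single segment $O\to(1,\tan(\pi/n))$ and the $\sec(\pi/n-\phi)$ integral, and handle $n=1,2$ via the \textsc{ExtendedPolySegment} construction optimized through~\eqref{NLP-avg}. One small clarification: your item (ii) is unnecessary, since item (i) already suffices---the paper's Lemma~\ref{lem: polysegment performance for continuous avg} shows that for any \emph{fixed} $k$ (the paper uses $k=2000$), a feasible discrete solution yields a feasible continuous trajectory whose average cost is at most $(1+1/k)$ times the discrete value, so the NLP objective itself is a rigorous upper bound with no limit required.
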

As before, our results are based on designing efficient average-cost inspection algorithms for the general problem \pshoreline{c}, $c \in [0, 2\pi]$, which may be of independent interest. The reader is referred to Figure~\ref{fig: avg case trajectories n=1,2} for the derived trajectories minimizing the average-cost inspection time for $n=1,2$, corresponding to solutions for \pshoreline{c} with $c=\pi$ and $c=2\pi$. Figure~\ref{fig: trajectories 5,6,7pOver4} on page~\pageref{fig: trajectories 5,6,7pOver4} displays the optimal trajectories derived for \pshoreline{c} for additional values of $c$.
\begin{figure}[h!]
    \centering
    \begin{subfigure}[t]{0.47\textwidth}
        \centering
        \includegraphics[width=5cm]{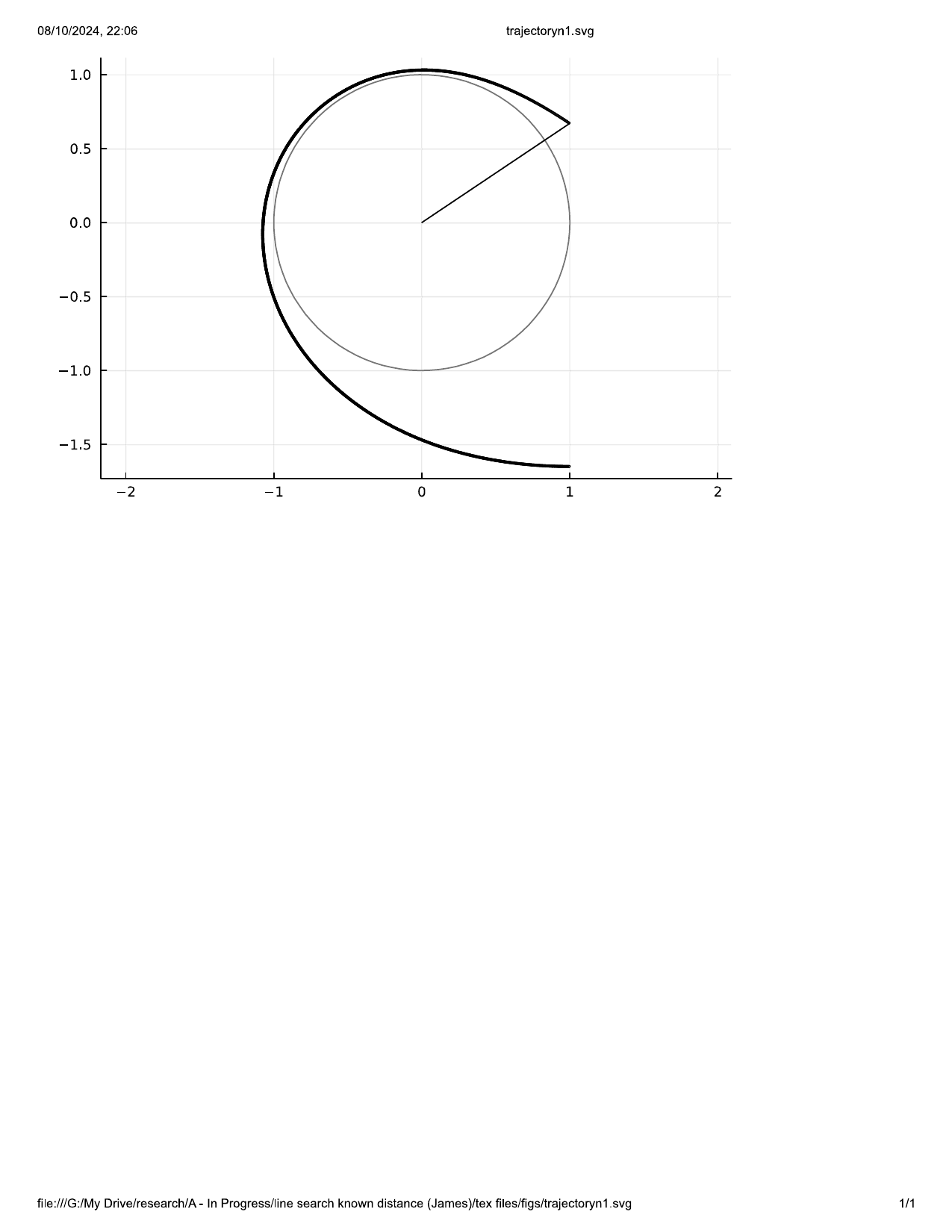}
        \caption{The trajectory proving the average inspection upper bound to \shoreline{1}.}
        \label{fig: avg trajectory n=1}
    \end{subfigure}
    \hfill
    \begin{subfigure}[t]{0.47\textwidth}
        \centering
        \includegraphics[width=4.8cm]{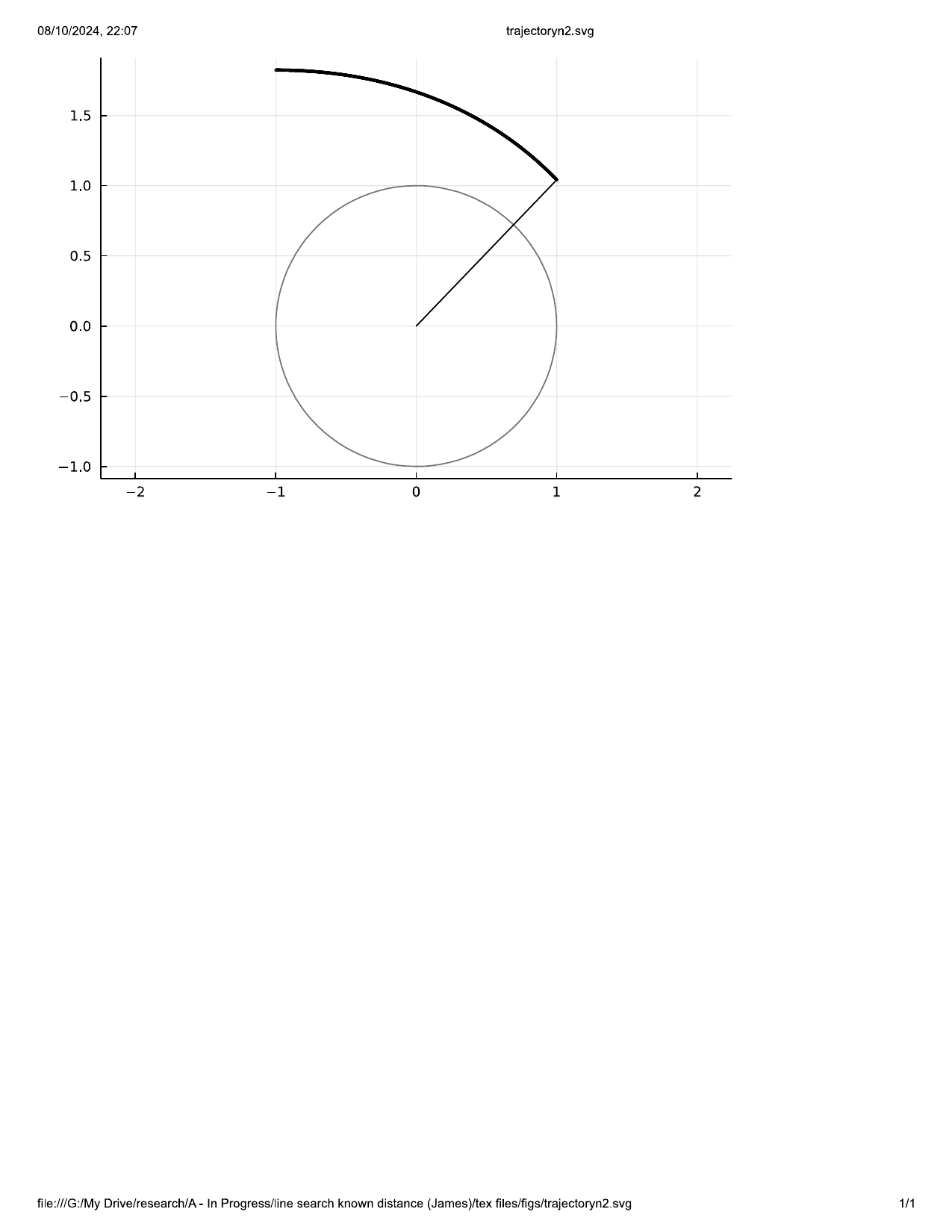}
        \caption{The trajectory proving the average inspection upper bound to \shoreline{2}.}
        \label{fig: avg trajectory n=2}
    \end{subfigure}
    \caption{Agent's trajectories of Theorem~\ref{thm: general avg upper bound} for $n=1,2$.}
    \label{fig: avg case trajectories n=1,2}
\end{figure}

Notably, the naive inspection algorithm for \shoreline{n}, where each agent inspects an arc of length $2\pi/n$ along the disk, yields an average inspection cost of $1 + \pi/n$, which is approximately 4.14159 and 2.5708 for $n=1$ and $n=2$, respectively. This improvement is substantial for all values of $n$, and asymptotically, our results show that the average inspection cost converges to $1 + \frac{\pi^2}{6n^2} + o\left( \frac{1}{n^3}\right)$.

In Section~\ref{sec: tradeoffs}, we analyze the trade-offs between worst-case and average-case inspection costs for \shoreline{1}, focusing on minimizing the multi-objective function $\{\sup_P \inspect{P}, \mathbb{E}_P \inspect{P}\}$. This is achieved by minimizing
$
\lambda \cdot \sup_P \inspect{P} + (1-\lambda) \cdot \mathbb{E}_P \inspect{P},
$
for $\lambda \in [0,1]$. Theorem~\ref{thm: general wrs optimal} (originally due to~\cite{isbell1957optimal} for \shoreline{1}) provides the worst-case optimal inspection time for \shoreline{1} as $1+\sqrt{3}+\frac{7 \pi }{6} \approx 6.39724$, while the same trajectory has an average inspection cost of 
$$\frac{5}{2 \sqrt{3}}+\frac{91 \pi }{144}+\frac{\log(6)}{2 \pi } \approx 3.71386,$$
as shown in Lemma~\ref{lem: isbell performance}.

At the other end, Theorem~\ref{thm: general avg upper bound} gives an upper bound on the average inspection cost at 3.5509015, corresponding to a trajectory length of approximately 6.8673829. We are thus motivated to establish upper bounds on average-case performance when the trajectory length varies between 6.39724 and 6.8673829.
To achieve this, we first determine the Pareto curve for the Isbell-type trajectories used in the proof of Theorem~\ref{thm: original Isbell}, parameterized by $\theta$. For suitable values of $\theta$, we compute in Lemma~\ref{lem: isbell performance} both the worst-case and average-case costs for this family of algorithms, which serve as benchmarks for our subsequent results.

Finally, to improve the upper bounds for the multi-objective inspection problem, we extend our approach to minimize average-case inspection costs. This is accomplished using a discrete trajectory optimized through a Non-Linear Program with carefully selected parameters. The bounds for the Pareto curve of the multi-objective inspection problem are summarized in the following theorem.

\begin{theorem}
\label{thm: pareto bounds}
Single Agent trade-off problem \tradeoff{\lambda}, where $\lambda \in [0,1]$ admits inspection trajectories with worst-case cost and average-case cost to \shoreline{1} that are depicted in Figure~\ref{fig: all tradeoff}.
\begin{figure}[h!]
    \centering
    \includegraphics[width=0.6\textwidth]{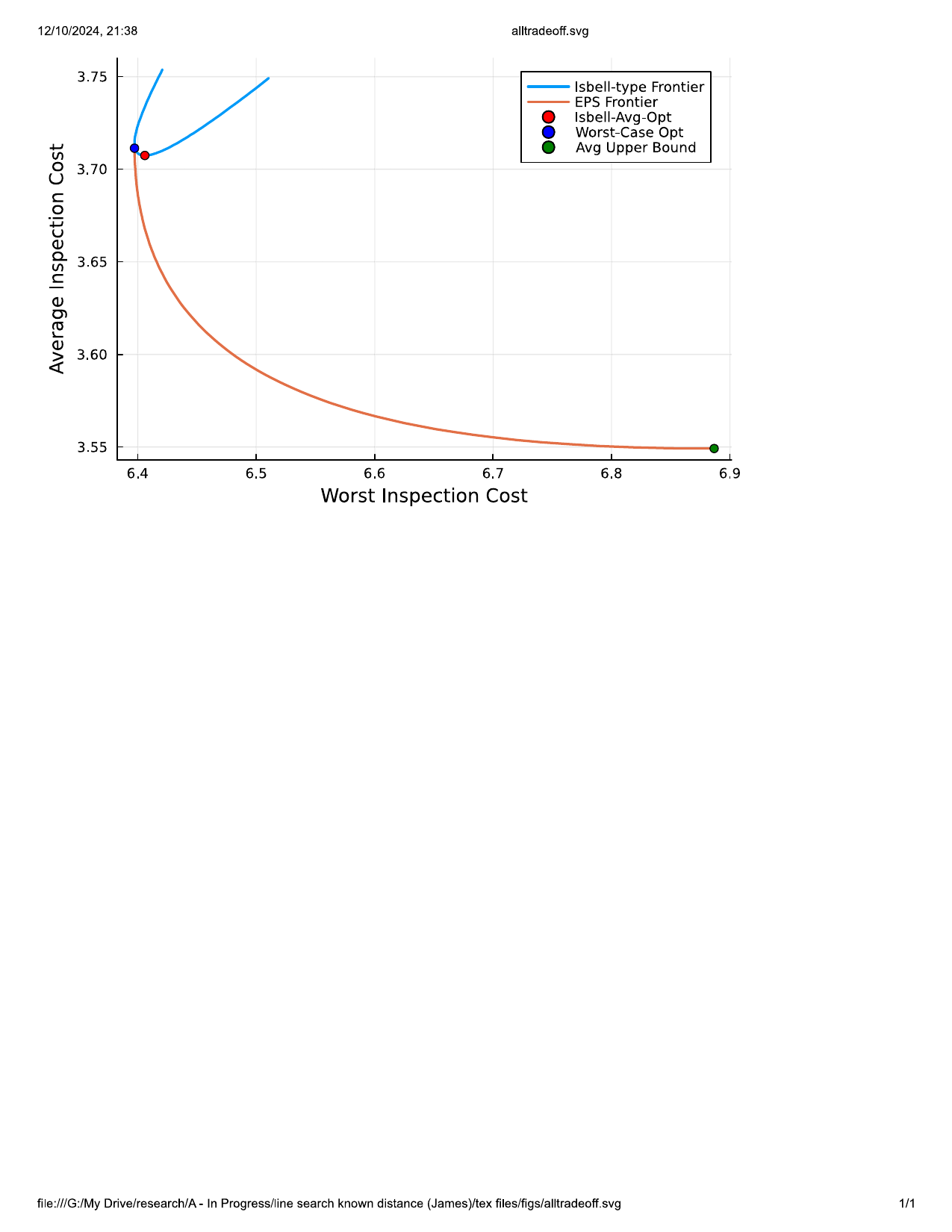}
    \caption{
        Our Pareto upper bounds are depicted in orange and are achieved by the Extended-Poly-Segment (EPS) trajectory we define later. The blue line depicts the Pareto Frontier of the Isbell-type trajectories, parameterized by the initial deployment angle $\theta$. The provable worst-case optimal solution and its corresponding average inspection performance are shown as the blue Pareto point. The red Pareto point corresponds to the performance of the Isbell-type trajectory that is average cost optimal. Finally, the green Pareto point is the performance of the EPS-type trajectory with the smallest average inspection cost.
    }
    \label{fig: all tradeoff}
\end{figure}
\end{theorem}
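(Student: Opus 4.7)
The plan is to establish the two curves appearing in Figure~\ref{fig: all tradeoff} separately and then argue that the union of points they realize gives a feasible upper bound on the Pareto frontier of \tradeoff{\lambda}. The blue curve is the easier of the two: I would take the Isbell-type trajectory family parameterized by the initial deployment angle $\theta$ (i.e., the agent walks from the origin in a straight line to $(1,\tann{\theta})$, wraps along the disk's boundary for an appropriate arc, then continues tangentially), which is exactly the family whose worst-case optimal member proves Theorem~\ref{thm: general wrs optimal} for $n=1$. For each admissible $\theta$ I would compute the worst-case cost $W(\theta)$ as the length of the trajectory (whose last inspection point attains the sup) and the average-case cost $A(\theta)$ by splitting the integral $\mathbb{E}_P\inspect{P} = \frac{1}{2\pi}\int_0^{2\pi} \inspect{P_t}\,\dd t$ into pieces corresponding to the straight segment, the arc-hugging part, and the tangential extension. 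This is the content of Lemma~\ref{lem: isbell performance} specialized to a single benchmark value, and I would extend that computation to yield a closed-form curve $\lambda \mapsto (W(\theta_\lambda),A(\theta_\lambda))$, where $\theta_\lambda$ is the minimizer of $\lambda\, W(\theta)+(1-\lambda)\,A(\theta)$.

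The orange (EPS) curve is the substantive contribution. Here I would leverage the framework of Section~\ref{sec: average case inspection for shoreline n agents}: the Extended-Poly-Segment trajectory is a finite-resolution polyline whose breakpoints are the free variables of a Non-Linear Program whose objective is precisely the discretized average cost developed in Section~\ref{sec: discrete average}, and whose constraints are (i) each breakpoint inspects the corresponding perimeter sample via the geometric visibility condition established in Section~\ref{sec: Some Preliminary Observations}, and (ii) the total trajectory length does not exceed a parameter $L$. By sweeping $L$ from the worst-case optimum $1+\sqrt{3}+7\pi/6$ up to the length $\approx 6.8673829$ of the unconstrained average-case minimizer of Theorem~\ref{thm: general avg upper bound}, each solve of the NLP yields a single point on the orange Pareto curve. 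The endpoint $L\approx 6.8673829$ reproduces the green point, while as $L$ decreases the NLP is forced into configurations that progressively resemble, but (as illustrated in the figure) strictly dominate, the Isbell family, and in the limit $L = 1+\sqrt{3}+7\pi/6$ collapses onto the blue Pareto point.

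To convert individual $(W,A)$ pairs into a statement about \tradeoff{\lambda}, I would observe that for any $\lambda\in[0,1]$ the convex combination $\lambda W + (1-\lambda) A$ is bounded above by its value at any feasible point on the orange curve; optimizing over the sweep parameter $L$ then yields an upper bound parameterized by $\lambda$, which is exactly what Figure~\ref{fig: all tradeoff} depicts once one plots the two axes. Feasibility of the discretized EPS solution as a continuous trajectory follows from the argument, already used in Section~\ref{sec: solution to avg partial and shoreline-n}, that inspecting a sufficiently dense grid of perimeter samples forces inspection of the intermediate continuum up to a controlled error, which can be absorbed into a final monotone extension of the last segment.

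The main obstacle, and the reason the EPS curve cannot be given in closed form, is the nonconvexity of the NLP: the visibility constraints are nonlinear in the breakpoint coordinates, so there is no guarantee that a local optimizer is global. I would therefore warm-start the solver at each $L$ from the solution found at the neighboring value of $L$, verify numerically that the resulting curve is monotone and convex in the $(W,A)$-plane (as one expects from any bona fide Pareto frontier), and certify each plotted point post hoc by recomputing its worst-case and average-case costs from the explicit breakpoint coordinates. The resulting orange curve is then a provable upper bound for \tradeoff{\lambda}, even if not necessarily tight, which is exactly what the theorem claims.
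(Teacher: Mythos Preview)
Your proposal is essentially correct and would produce the same orange curve, but you trace it via a different scalarization of the bi-objective problem. The paper puts the trade-off directly into the objective: for each $\lambda\in[0,1]$ (stepped by $0.01$) it solves a single NLP with objective $\lambda\, S_W(\theta,t)+(1-\lambda)\,S_A(\theta,t)$ and the \emph{same} constraints as \eqref{NLP-avg}, then plots the resulting pair $\bigl(S_W,S_A\bigr)$. You instead propose the $\epsilon$-constraint method: minimize $S_A$ subject to an added length constraint $S_W\le L$ and sweep $L$. Both are standard ways to generate Pareto upper bounds; yours has the theoretical advantage of reaching non-convex portions of the frontier that weighted sums can miss, while the paper's is slightly lighter to implement (no new constraint, just a reweighted objective, and the sweep parameter $\lambda$ is exactly the one appearing in the definition of \tradeoff{\lambda}). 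For the blue curve the paper does not optimize over $\theta$ for each $\lambda$ as you suggest; it simply plots the parametric curve $\theta\mapsto(I_W(\theta),I_A(\theta))$ from Lemma~\ref{lem: isbell performance} over a range of $\theta$, which of course contains your $\theta_\lambda$ points.

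One point in your write-up should be corrected. Continuous feasibility of the EPS trajectory is not handled by ``absorbing error into a final monotone extension of the last segment.'' The paper enforces, as hard constraints in the NLP, the lower bounds $t_i \ge \bigl(1-\cos\tfrac{c-2\theta}{k}\bigr)\big/\sin\tfrac{c-2\theta}{k}$; by Lemma~\ref{lem: inspection of sector} these guarantee that each segment $A_{i-1}\!\to\!A_i$ inspects the \emph{entire} sub-arc $[\phi_{i-1},\phi_i]$, so the polyline is genuinely feasible for the continuous problem \pshoreline{2\pi}. The $(1+1/k)$ factor that appears (Lemma~\ref{lem: polysegment performance for continuous avg}) is a slack in the average-cost \emph{bound}, not a feasibility patch. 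If you retain your $L$-constraint formulation, you should keep these $t_i$ lower bounds as well; otherwise the solver may return a polyline that cuts through the open disk between sample points and fails to inspect intermediate perimeter points.
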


\subsection{Comparison to Known Results and the Work of Gluss~\cite{gluss1961alternative} (1961)}
\label{sec: comparison}

In this section, we place our contributions from Section~\ref{sec: Formal Description of New Results} in the context of existing results. For the worst-case Disk Inspection problem, we review the known optimal trajectories. For the average-case problem, we examine in greater detail the early work of Gluss~\cite{gluss1961alternative}, identify a numerical error in his reported bounds, and contrast his heuristic approach with our systematic method.

\begin{figure}[h!]
    \centering
    \begin{subfigure}[t]{0.32\textwidth}
        \centering
        \includegraphics[width=4.6cm]{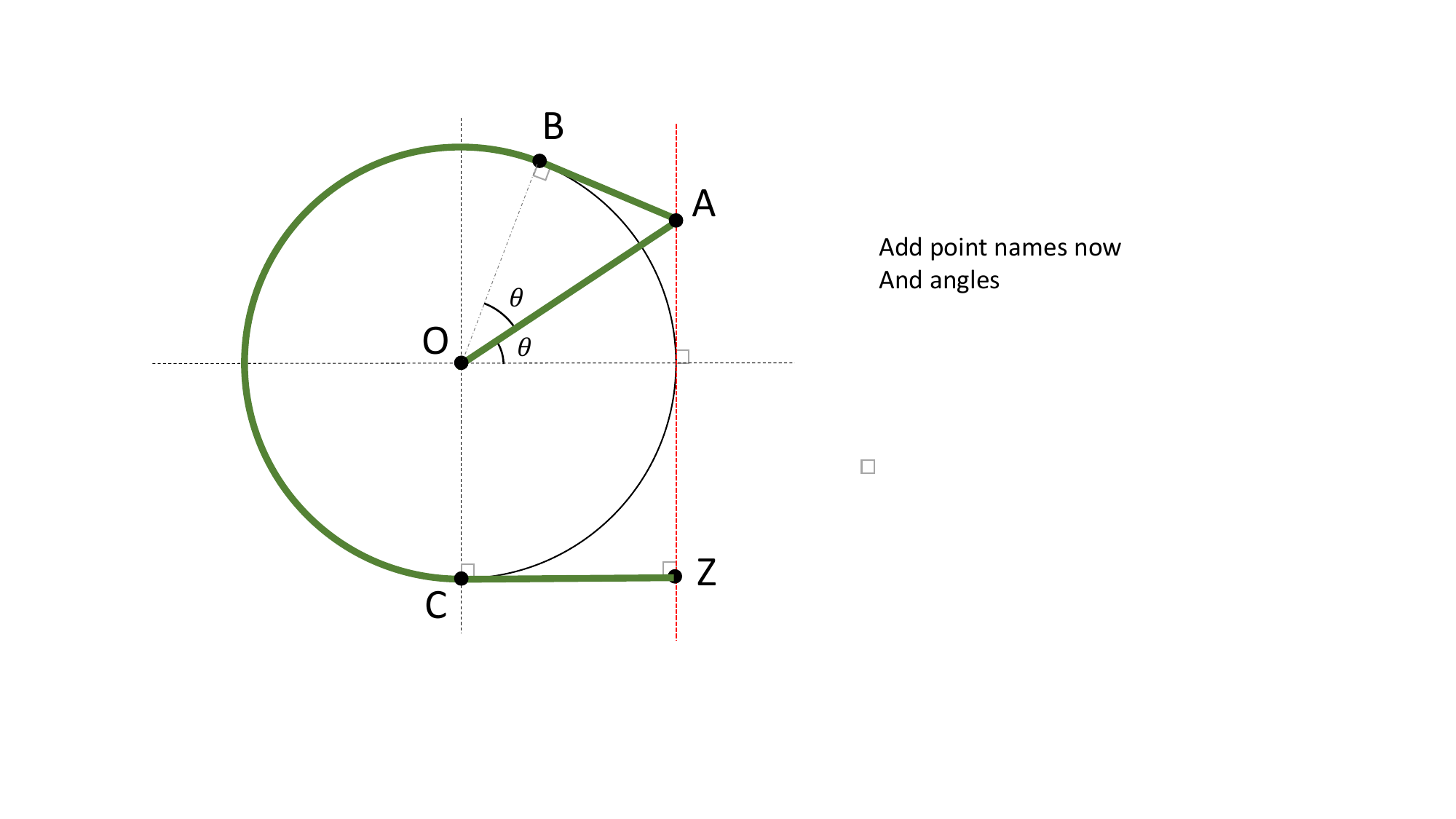}
        \caption{Isbell’s optimal worst-case trajectory~\cite{isbell1957optimal} for \shoreline{1}.}
        \label{fig: isbel}
    \end{subfigure}
    \hfill
    \begin{subfigure}[t]{0.32\textwidth}
        \centering
        \includegraphics[width=4.6cm]{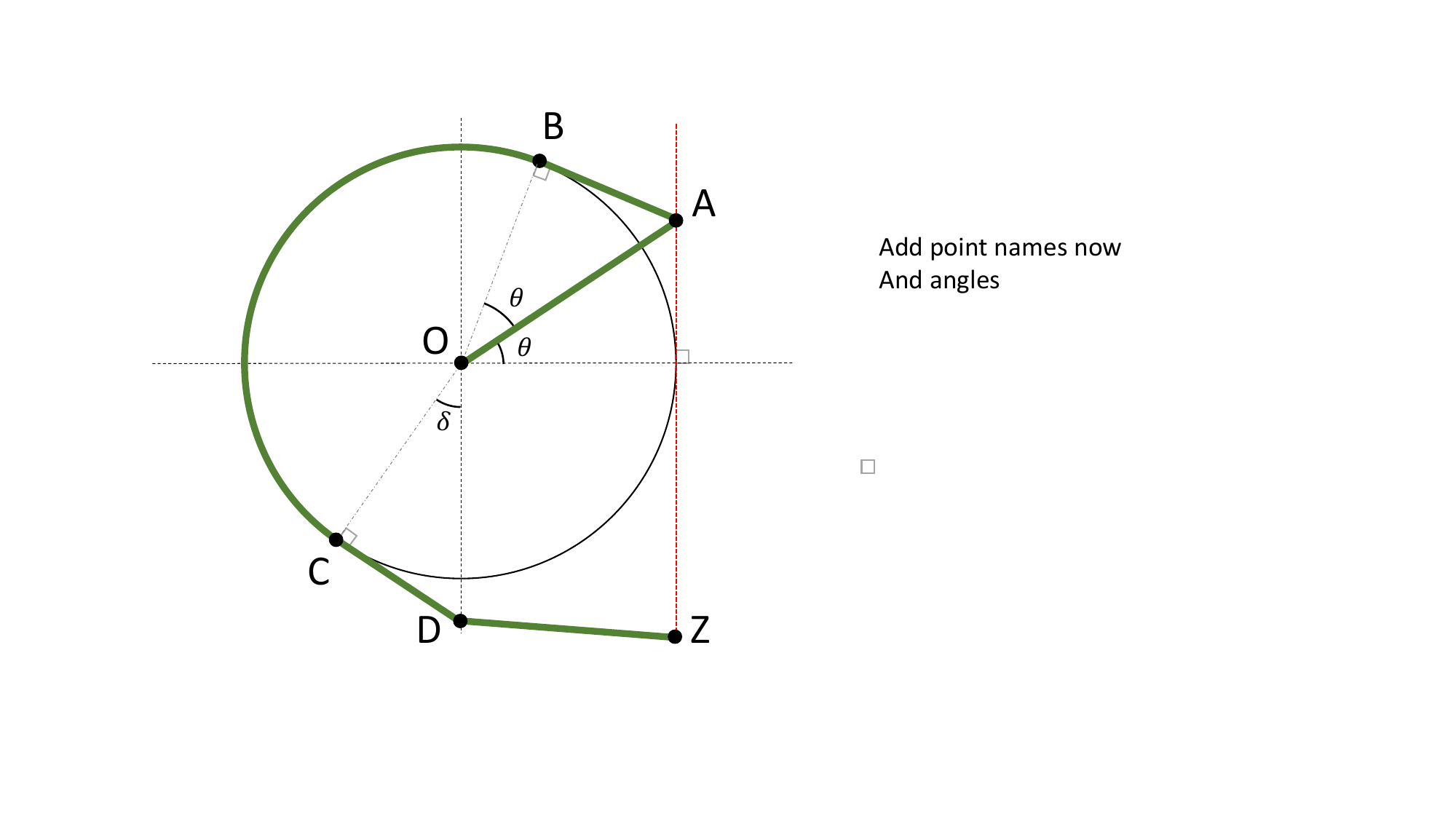}
        \caption{First heuristic trajectory of Gluss~\cite{gluss1961alternative}.}  
\label{fig: gluss1}
    \end{subfigure}
    \hfill
    \begin{subfigure}[t]{0.32\textwidth}
        \centering
        \includegraphics[width=4.6cm]{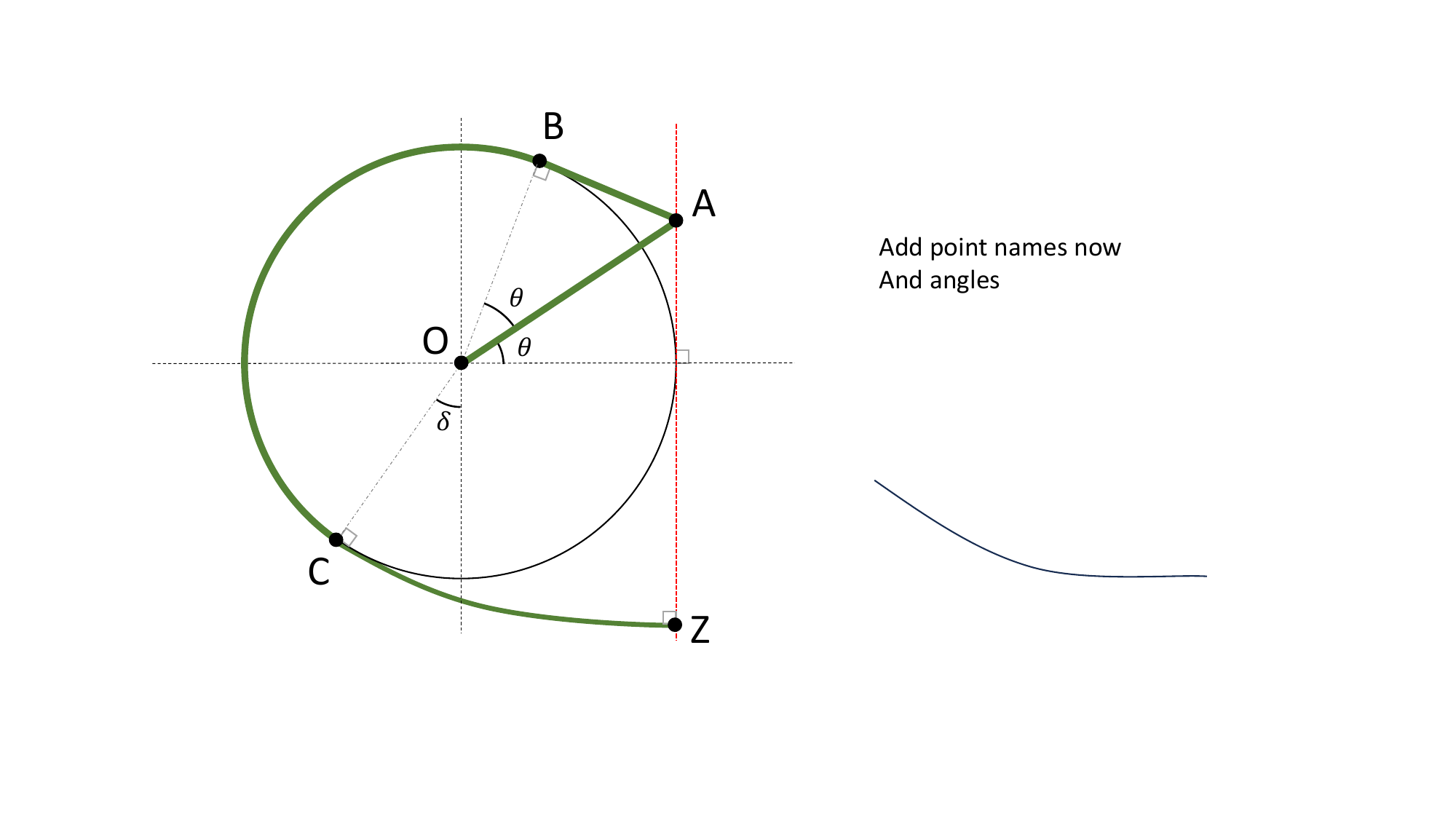}
        \caption{Second heuristic trajectory of Gluss~\cite{gluss1961alternative}.}  
\label{fig: gluss2}
    \end{subfigure}
    \caption{Previously known trajectories for Disk Inspection \shoreline{1}.}
    \label{fig: previous results}
\end{figure}

\subsubsection{Improvements to the Worst-Case Disk Inspection}

The first result on Disk Inspection, originally phrased as searching for a shoreline at known distance, is due to Isbell~\cite{isbell1957optimal}, and reads as follows. 

\begin{theorem}
\label{thm: original Isbell}
The optimal worst-case cost to \shoreline{1} is $1+\sqrt{3}+\tfrac{7 \pi }{6} \approx 6.39724$. 
\end{theorem}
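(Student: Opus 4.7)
The plan is to obtain Theorem~\ref{thm: original Isbell} as an immediate corollary of the general result in Theorem~\ref{thm: optimal worst case partial inspection sol}, already stated above. Observe that \shoreline{1} is, by definition, identical to the partial inspection problem \pshoreline{2\pi}: a single agent must inspect the entire disk perimeter. Since $2\pi > 5\pi/6$, substituting $c = 2\pi$ into the third branch of Theorem~\ref{thm: optimal worst case partial inspection sol} gives the optimal worst-case cost
$$
1 + \sqrt{3} + 2\pi - \tfrac{5\pi}{6} = 1 + \sqrt{3} + \tfrac{7\pi}{6} \approx 6.39724.
$$
Equivalently, setting $n=1$ in Theorem~\ref{thm: general wrs optimal} returns the same value, and the theorem follows.

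All genuine work is therefore deferred to Theorem~\ref{thm: optimal worst case partial inspection sol}, whose proof the paper develops later via the auxiliary problem \ashoreline{k}{\theta}{c}. For orientation, the upper bound in the regime $c > 5\pi/6$ would be witnessed by an explicit trajectory that moves radially to the disk boundary, sweeps along a circular arc, and exits along a tangent so as to inspect the remaining perimeter from outside the disk. Verification reduces to the tangent-visibility fact that an agent at distance $d$ from the centre sees a perimeter arc of angular extent $2\arccos(1/d)$, together with a bookkeeping check that the three phases compose to the claimed length $1 + \sqrt{3} + c - 5\pi/6$.

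The main obstacle, and the crux of Isbell's original theorem, is the matching lower bound. The plan is an adversarial argument: given any candidate trajectory, produce a perimeter point whose inspection time is at least the stated minimum. The standard technique leverages the convex-hull characterization of the inspected set---the points seen up to time $t$ are determined by the convex hull of the trajectory so far together with the disk obstruction---and separates the contributions of radial reach and angular sweep, so that the optimum trade-off in the regime $c > 5\pi/6$ is exactly $1 + \sqrt{3} + c - 5\pi/6$. Specialising to $c = 2\pi$ then recovers Isbell's value of $1 + \sqrt{3} + 7\pi/6$, as claimed.
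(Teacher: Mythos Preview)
Your core derivation is correct and matches the paper's approach: Theorem~\ref{thm: original Isbell} is obtained as the special case $c=2\pi$ of Theorem~\ref{thm: optimal worst case partial inspection sol} (equivalently, $n=1$ in Theorem~\ref{thm: general wrs optimal}).

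Two small inaccuracies in your supplementary commentary are worth correcting. First, the proof of Theorem~\ref{thm: optimal worst case partial inspection sol} does \emph{not} go through the auxiliary problem \ashoreline{k}{\theta}{c}; that machinery is introduced only for the average-case analysis. The worst-case proof instead invokes Isbell's convexity lemma directly and then optimizes over two explicit families of curves (Type-1 and Type-2). Second, your sketch of the optimal trajectory (``moves radially to the disk boundary, sweeps along a circular arc, and exits along a tangent'') is not quite right: the Isbell-type curve first goes from the origin to a point $A_0=(1,\tan\theta)$ strictly \emph{outside} the disk, then along a tangent segment to the boundary at $\mathcal P_{2\theta}$, then along the arc to $\mathcal P_{3\pi/2}$, and finally along a unit segment perpendicular to the line $x=1$. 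Optimizing over $\theta$ gives $\theta=\pi/6$ and the stated length.
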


The optimal trajectory is illustrated in Figure~\ref{fig: isbel}. Consider the unit disk centered at the origin $O$ in the Cartesian plane. The agent moves first to a point $A$ on the line $x=1$ in the first quadrant, chosen so that $OA$ makes an angle $\theta \in [0,\pi/2)$ with the horizontal axis. From $A$ the trajectory continues along the tangent $AB$ to the disk at point $B$ (counterclockwise, with $A=B$ if $\theta=0$). The agent then traverses the perimeter from $B$ to $C=\mathcal P_{3\pi/2}$, and finally moves along the projection from $C$ to $D$ on the line $x=1$ in the fourth quadrant. The cost of this trajectory is a function of $\theta$, optimized at $\theta=\pi/6$, yielding the bound in Theorem~\ref{thm: original Isbell}.

This trajectory is notable in that it follows a long arc of the disk boundary (segment $BC$). Related constructions yield optimal worst-case trajectories for \shoreline{2} and \shoreline{3}, as shown by Dobrev et al.~\cite{dobrev2020improved}, with the final tangent chosen at $\mathcal P_{\pi}$ (line $x=-1$, see Figure~\ref{fig: avg trajectory n=2}) and $\mathcal P_{2\pi/3}$, respectively. For $n \geq 4$, optimal trajectories were established in~\cite{AcharjeeGKS19}. For $n\geq 3$, the optimal solution consists of a single line segment from the origin of length $1/\coss{\theta}$, inspecting an arc of length $2\theta$ (see segment $OA$ in Figure~\ref{fig: initialpoint}). Although the arguments in~\cite{AcharjeeGKS19,dobrev2020improved} were originally given in the unknown-distance setting, the lower-bound proofs apply also when the distance is known, effectively showing that the adversary gains no advantage from that information in these cases.

All these results for \shoreline{n} are summarized in Theorem~\ref{thm: general wrs optimal}, which we prove as a corollary of our more general result for the partial inspection problem $\mathcal P(c)$, valid for all $c\in [0,2\pi]$ (see Theorem~\ref{thm: optimal worst case partial inspection sol}), rather than only for $c=2\pi/n$, $n\in \integers$. Our analysis also identifies optimal trajectories for arc lengths $c\in (2\pi/3,5\pi/6)$, a regime not covered in previous work. Moreover, coupling these worst-case solutions with our average-case upper bounds for $\mathcal P(c)$ yields nontrivial Pareto-optimal trade-offs. 

Finally, we note that the methodology we develop for the average case also recovers these worst-case results. By discretizing the set of perimeter points to be inspected and solving the resulting nonlinear program, we obtain trajectories that remain feasible in the continuous setting. Applying this to the trade-off objective
\[
\lambda \cdot \sup_P \inspect{P} + (1-\lambda) \cdot \mathbb{E}_P \inspect{P},
\]
with $\lambda=1$, reproduces the worst-case problem. The computer-assisted trajectory obtained in this way coincides with Isbell’s optimal solution, as shown in Figure~\ref{fig: trajectoryl1}.

\subsubsection{Improvements to the Average-Case Disk Inspection}

The first attempt to bound the average-case cost of \shoreline{1} is due to Gluss~\cite{gluss1961alternative}. His work proposed heuristic trajectories, which we revisit here in order to clarify a numerical error that led to under-reported bounds and to explain how his approach differs from ours. 

Gluss reported that the average-case performance of the Isbell-type algorithm (Figure~\ref{fig: isbel}) was $3.6276$, apparently for deployment angle $\theta = \pi/3$, without providing the supporting derivation. This value disagrees with the provable performance of Isbell-type trajectories established in Theorem~\ref{lem: isbell performance} (page~\pageref{lem: isbell performance}). Our calculations show that for $\theta=\pi/3$ the average-case cost equals $3.71386$, while optimization over all $\theta$ yields $3.70737$, which we believe are the correct values.

More significantly, the final numerical evaluations reported in~\cite{gluss1961alternative} for his improved heuristic trajectories are incorrect. Gluss described two heuristics, derived a common analytic expression for their expected cost (which we have verified), but in evaluating this expression numerically he introduced an error. He reported costs $3.4795$ and $3.4691$, respectively, both lower than our bound $3.5509015$ (Theorem~\ref{thm: general avg upper bound}). In fact, the correct evaluations of his formulas are $3.64529$ and $3.63489$, respectively, which are higher than our bound. Thus, our results do provide a genuine improvement, in addition to introducing a systematic methodology that produces a sequence of trajectories converging to the optimum. We next describe the heuristics and the source of the numerical error.

For the heuristics, Gluss was motivated by the worst-case optimal trajectory, which is tangent to the disk. He considered two variants (Figures~\ref{fig: gluss1},\ref{fig: gluss2}). The initial segment coincides with Isbell’s. From the origin, first follow segment $OA$ on line $x=1$ at angle $\theta$, then along $AB$ which is tangent to the disk at $B$. The path then follows the boundary counterclockwise from $B$ to $C=\mathcal P_{3\pi/2-\delta}$, with $\delta \in [0,\pi/2]$ (Isbell’s algorithm corresponds to $\delta=0$). Points $\mathcal P_\phi$ with $\phi \in [3\pi/2-\delta,2\pi]$ remain unvisited. In the first heuristic (Figure~\ref{fig: gluss1}), the trajectory continues along the tangent at $C$ up to point $D$ on the line $x=0$ (uniquely identified by $C$), then proceeds to $Z$ on $x=1$ via $D$. In the second heuristic (Figure~\ref{fig: gluss2}), the final segment $CDZ$ is replaced by a parabola smoothly tangent at $C$ and meeting $x=1$ orthogonally at $Z$. In both cases, the trajectory cost is optimized over the free parameters.

The analysis is summarized as follows. Since $C$ lies past $\mathcal P_\pi$, the average-case cost is
\[
\mathbb{E}_\phi \inspect{\mathcal P_{\phi}}
= \frac{1}{2\pi}\left(\int_{0}^{\pi} \inspect{\mathcal P_\phi}\, d\phi + \int_{\pi}^{2\pi} \inspect{\mathcal P_\phi}\, d\phi\right).
\]
Let $d$ denote the contribution of the final segment beyond $\mathcal P_\pi$. Then the cost can be expressed as
\[
\mathbb{E}_\phi \inspect{\mathcal P_{\phi}} = x(\theta) + d/2,
\]
where $d$ depends on the heuristic and $x(\theta)$ is given in~\cite{gluss1961alternative} as
\begin{equation}
\label{eq: gluss equation}
x(\theta)=
1
+\frac{1}{2\pi}
\log \left( \frac{1+\sinn{\theta}}{1-\sinn{\theta}} \right)
+(1 - \theta / \pi)( \sec(\theta) + \tan(\theta)- 1)
+
\frac{3\pi}4 - 2\theta +\frac{\theta(\theta-1)}{\pi}.
\end{equation}

The function $x(\theta)$ is minimized at $\theta_0=0.592334$ ($33^\circ~56'~17.52''$). Gluss reported the same minimizer (rounded to $33^\circ~56'$, or $0.592248$ radians), consistent with our verification. However, evaluating at $\theta_0$ yields $x(\theta_0)\approx 3.007998991$, so
\[
\mathbb{E}_\phi \inspect{\mathcal P_{\phi}} = 3.007998991 + d/2.
\]
Instead, Gluss reported $x(\theta_0)\approx 2.8422$. Notably, if the factor $(1-\theta/\pi)$ in~\eqref{eq: gluss equation} is (erroneously) replaced by $(1-2\theta/\pi)$, then at $\theta'=0.592248$ the evaluation gives $2.842402198$, matching his reported value to three decimals. This suggests a transcription error in his numerical evaluation.

Substituting the correct $x(\theta_0)$ into the formulas for his two heuristics yields corrected average costs $3.64529$ and $3.63489$ (rather than $3.4795$ and $3.4691$). 

In summary, our analysis corrects two distinct issues in~\cite{gluss1961alternative}. First, we establish the correct average-case performance of Isbell-type trajectories. The reported value $3.6276$ is inaccurate, and the true minimum is $3.70737$. Second, we identify a numerical error in the evaluation of Gluss’s heuristics. The reported values $3.4795$ and $3.4691$ should be $3.64529$ and $3.63489$, respectively. Beyond these corrections, our contributions differ from his in three essential ways. (i) Our bound of $3.5509015$ is strictly stronger than his corrected heuristic bound of $3.63489$. (ii) Our method is systematic. By discretizing the search space rather than the trajectory, we obtain a nonlinear program whose solution is guaranteed to be within $1+1/k$ of the true optimum for parameter $k$, under the assumption that the global optimizer induces trajectory not touching the disk. Although global optimality of the NLP solutions cannot be certified and computational limits restrict the choice of $k$, this approach yields robust bounds. (iii) Our evidence suggests that the optimal trajectory may not touch the disk boundary, in contrast to Gluss’s conjecture.

\section{Optimal Worst Case Inspection for \shoreline{n}}
\label{sec: worst case inspection for shoreline n agents}

\subsection{Some Preliminary Observations}
\label{sec: Some Preliminary Observations}

To simplify the notation, we remind the reader of some existing abbreviations and introduce new ones. Throughout our analysis, we will work with a fixed $k \in \mathbb{N}$, and $\theta, c \in [0, 2\pi]$, where $\theta < \pi/2$ and $\theta \leq 2c$. For such fixed $k$, $c$, and $\theta$, we define:
\begin{align}
    \phi_i & := 2\theta + \frac{c - 2\theta}{k} i, \quad i = 0, \ldots, k \label{def:phii} \\
    P_i & := \mathcal{P}_{\phi_i}, \quad i = 0, \ldots, k \label{def:pii}
\end{align}
The values $\phi_i$ are chosen such that the intervals $[\phi_i, \phi_{i+1}]$ form a (non-disjoint) partition of the interval $[2\theta, c]$, for $i = 0, \ldots, k - 1$. Additionally, the points $P_i$ represent the endpoints of the corresponding arcs on the perimeter of the unit disk.

For some $\phi \in [0, 2\pi]$, we also consider the parametric equation of the line $\mathcal{L}_{\phi}(t)$, which is tangent to the disk at the point $\mathcal{P}_{\phi}$:
\begin{equation}
\label{equa: parametric tangent line}
\mathcal L_{\phi}(t):=
\left(
\begin{array}{c}
\cos(\phi) \\
\sin(\phi) 
\end{array}
\right)
+
t
\left(
\begin{array}{c}
\sin(\phi) \\
-\cos(\phi)
\end{array}
\right).
\end{equation}
For the same fixed $k,c,\theta$, we will also need to consider the tangent $\mathcal L_{\phi}(t)$ at $\phi=\phi_i$, hence it is convenient to introduce the abbreviation 
$L_i(t) := \mathcal L_{\phi_i}(t), ~~t\in \reals, ~~i=0,\ldots,k.$

The following lemma summarizes some important observations pertaining to conditions that guarantee when points on the perimeter of the disk are inspected. 

\begin{lemma}\label{lem: preliminary inspections}
\begin{enumerate}[label=(\alph*), ref=\thelemma(\alph*)]
    \item \label{lem: coverage} An agent at point with polar coordinates $(r, \phi)$ and $r\geq 1$ inspects all points $\mathcal P_t$ that satisfy $|t- \phi| \leq \arccoss{1/r}$. 

    \item \label{lem: inspection range of first point}
    For every $\theta \in [0,\pi/2)$, an agent at point $(1,\tann{\theta})$ inspects all $\mathcal P_t$, with $t\in [0,2\theta]$. 

    \item \label{lem: inspection on tangent line} 
    An agent inspects point $\mathcal P_\phi$ if and only if it lies in the closed halfspace defined by $\mathcal L_\phi$ that does not contain the interior of the unit disk. 
In particular, an agent residing on any point of $\mathcal L_\phi$ inspects $\mathcal P_\phi$.
    
    
        \item \label{lem: inspection of sector}
        Consider points $A= \mathcal L_\phi(t_1)$ and $B=\mathcal L_{\phi+\gamma}(t_2)$ satisfying $\gamma<\pi/2$, $t_1\geq 0$ and $t_2\geq \frac{1-\coss{\chi}}{\sinn{\chi}}$. Then, for each $\phi \leq \chi \leq \chi+\gamma$, point $\mathcal P_\chi$ is inspected by some convex combination of $A,B$. In other words, the movement $A\rightarrow B$, inspects all points $\mathcal P_\chi$, with $\chi \in [\phi, \phi+\gamma]$. 
\end{enumerate}
\end{lemma}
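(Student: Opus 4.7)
The plan is to establish parts (a)--(d) in order, since each leans on the previous. For (a), I would invoke the classical tangent-from-an-external-point argument: in the right triangle with vertices at the origin $O$, the agent $A$ at polar coordinates $(r,\phi)$, and a tangent point $T$ on the unit disk, one has $|OT|=1$, $|OA|=r$, and $\angle OTA=\pi/2$, so $\angle TOA=\arccoss{1/r}$; hence the two tangent points are $\mathcal P_{\phi\pm\arccoss{1/r}}$, and for $\mathcal P_t$ inside this arc the segment $A\mathcal P_t$ lies in the closed halfplane outside $\mathcal L_t$ and cannot cross the disk interior. Part (b) is then a one-line corollary: the Cartesian point $(1,\tann{\theta})$ has polar representation $(\secc{\theta},\theta)$ and $\arccoss{1/\secc{\theta}}=\theta$, giving the inspection range $[0,2\theta]$.

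For (c), I would rephrase visibility through the linear functional $\Phi(p):=p\cdot(\coss{\phi},\sinn{\phi})$, whose level set $\Phi=1$ is $\mathcal L_\phi$ and whose sublevel set $\Phi<1$ is the open halfplane containing the disk interior. The ``if'' direction is immediate by convexity: $\Phi(A)\geq 1$ implies every convex combination of $A$ and $\mathcal P_\phi$ also satisfies $\Phi\geq 1$, so the segment avoids the disk interior and $\mathcal P_\phi$ is inspected. For the converse, assuming $\Phi(A)<1$, I would parametrise the segment as $(1-s)A+s\mathcal P_\phi$ and compute the squared distance-to-origin $f(s)$; a direct calculation yields $f'(1)=2(1-\Phi(A))>0$, so $f(s)<f(1)=1$ for $s$ slightly below $1$, placing that portion of the segment strictly inside the disk interior and ruling out inspection. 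The final sentence of (c) is then the trivial special case $\Phi(p)=1$.

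The main substance of the lemma is in (d). My plan is to show that the endpoint $B$, which is itself a trivial convex combination of $A$ and $B$, already inspects every $\mathcal P_\chi$ with $\chi\in[\phi,\phi+\gamma]$. By (c), this amounts to proving $B\cdot(\coss{\chi},\sinn{\chi})\geq 1$. Expanding $B=\mathcal L_{\phi+\gamma}(t_2)$ via~(\ref{equa: parametric tangent line}) and applying sum-of-angle identities gives
\[
B\cdot(\coss{\chi},\sinn{\chi})=\coss{\phi+\gamma-\chi}+t_2\sinn{\phi+\gamma-\chi}.
\]
Setting $\eta:=\phi+\gamma-\chi\in[0,\gamma]$ and using $1-\cos\eta=2\sin^2(\eta/2)$ together with $\sin\eta=2\sin(\eta/2)\cos(\eta/2)$, the required inequality collapses to $t_2\geq\tan(\eta/2)$; monotonicity of $\tan$ on $[0,\pi/4)$ (valid since $\gamma<\pi/2$) combined with the hypothesis $t_2\geq(1-\cos\gamma)/\sin\gamma=\tan(\gamma/2)$ finishes the bound. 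To round out the picture, I would also verify (via the special case $\chi=\phi$) that both $A$ and $B$ lie in the closed halfspace $\{\Phi\geq 1\}$, so that by convexity the entire straight-line motion $A\to B$ remains outside the disk interior and is therefore admissible.

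The chief obstacle is a conceptual one in (d): recognising that the endpoint $B$ alone already delivers visibility for the whole arc $[\phi,\phi+\gamma]$, with the hypothesis $t_1\geq 0$ serving only to certify that the motion $A\to B$ stays outside the disk rather than contributing to inspection itself. Technically, the only care required is lining up the parametrisation of $\mathcal L_{\phi+\gamma}$ with the half-angle identity so that the bound on $t_2$ translates directly into the pointwise visibility condition without spurious algebra.
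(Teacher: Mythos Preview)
Your arguments for (a)--(c) align with the paper's: (a) and (b) are essentially identical (the paper computes the tangent points via a dot-product orthogonality condition rather than your right-triangle picture, but it is the same computation), and your converse for (c) is a slightly more explicit version of the paper's ray-intersection remark.

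Part (d) is where the two approaches genuinely differ. The paper rotates to $\phi=0$ and argues geometrically: it identifies the threshold $t_2=(1-\cos\gamma)/\sin\gamma$ as precisely the parameter at which $B$ lands on the line $\mathcal L_0=\{x=1\}$, so that for $t_2$ at or above this value the segment $AB$ lies entirely in the halfplane $\{x\geq 1\}$ and hence avoids the disk interior; it then relies on an (implicit) continuity/sweep principle to conclude that the moving point on $AB$ crosses every tangent line $\mathcal L_\chi$ with $\chi\in[0,\gamma]$. Your route bypasses the sweep entirely: using the inner-product criterion from (c), you verify directly that the single endpoint $B$ already satisfies $B\cdot(\cos\chi,\sin\chi)=\cos\eta+t_2\sin\eta\geq 1$ for every $\chi$ in range, which collapses to the half-angle inequality $t_2\geq\tan(\eta/2)$ with $\eta\leq\gamma$. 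This is cleaner and makes transparent why the threshold is exactly $\tan(\gamma/2)$. One minor over-claim in your final paragraph: you say $t_1\geq 0$ is what certifies the segment stays outside the disk, but your own $\chi=\phi$ case, together with $A\in\mathcal L_\phi$ (which forces $\Phi_\phi(A)=1$ for \emph{any} $t_1$), already places $AB\subseteq\{\Phi_\phi\geq 1\}$ regardless of the sign of $t_1$; in your argument that hypothesis is in fact entirely idle.
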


\begin{proof}
\begin{figure}[h!]
    \centering
    \begin{subfigure}[t]{0.45\textwidth}
        \centering
        \includegraphics[width=4cm]{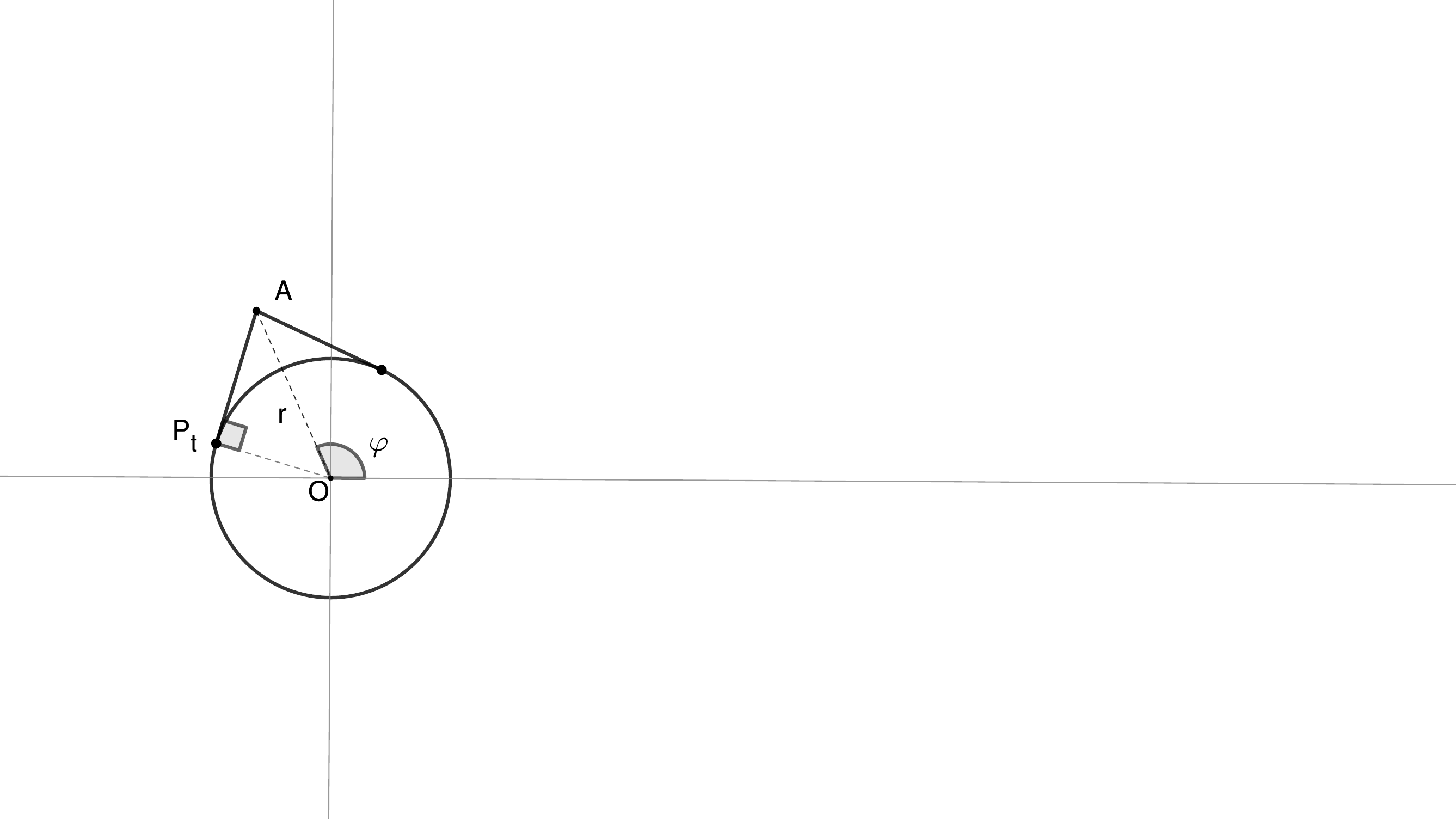}
        \caption{Supporting figure for Lemma~\ref{lem: coverage}.}
        \label{fig: pointpolar}
    \end{subfigure}
    \hfill
    \begin{subfigure}[t]{0.45\textwidth}
        \centering
        \includegraphics[width=4.8cm]{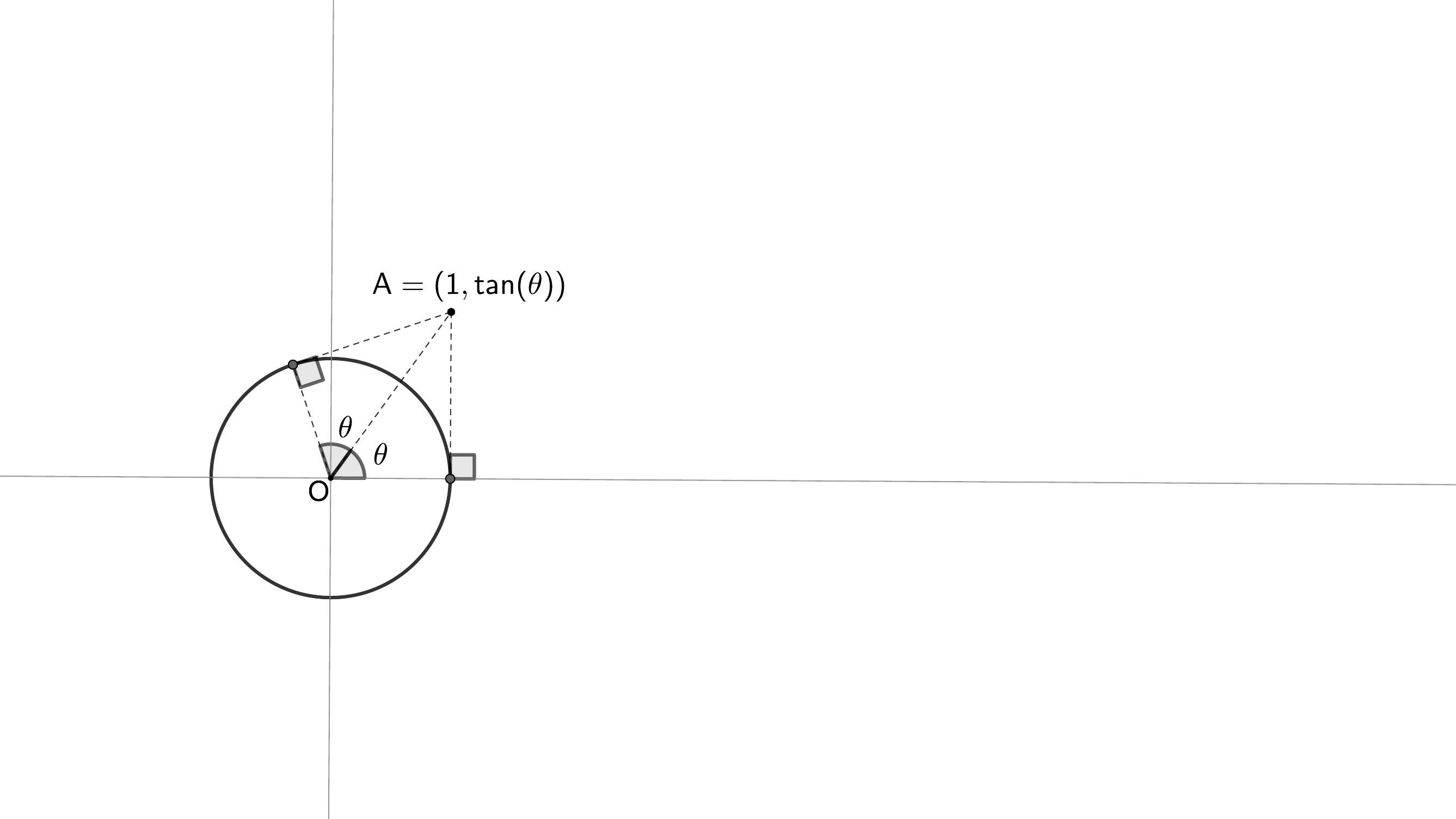}
        \caption{Supporting figure for Lemma~\ref{lem: inspection range of first point}.}
        \label{fig: initialpoint}
    \end{subfigure}
    
    \vspace{0.5cm}
    
    \begin{subfigure}[t]{0.45\textwidth}
        \centering
        \includegraphics[width=4cm]{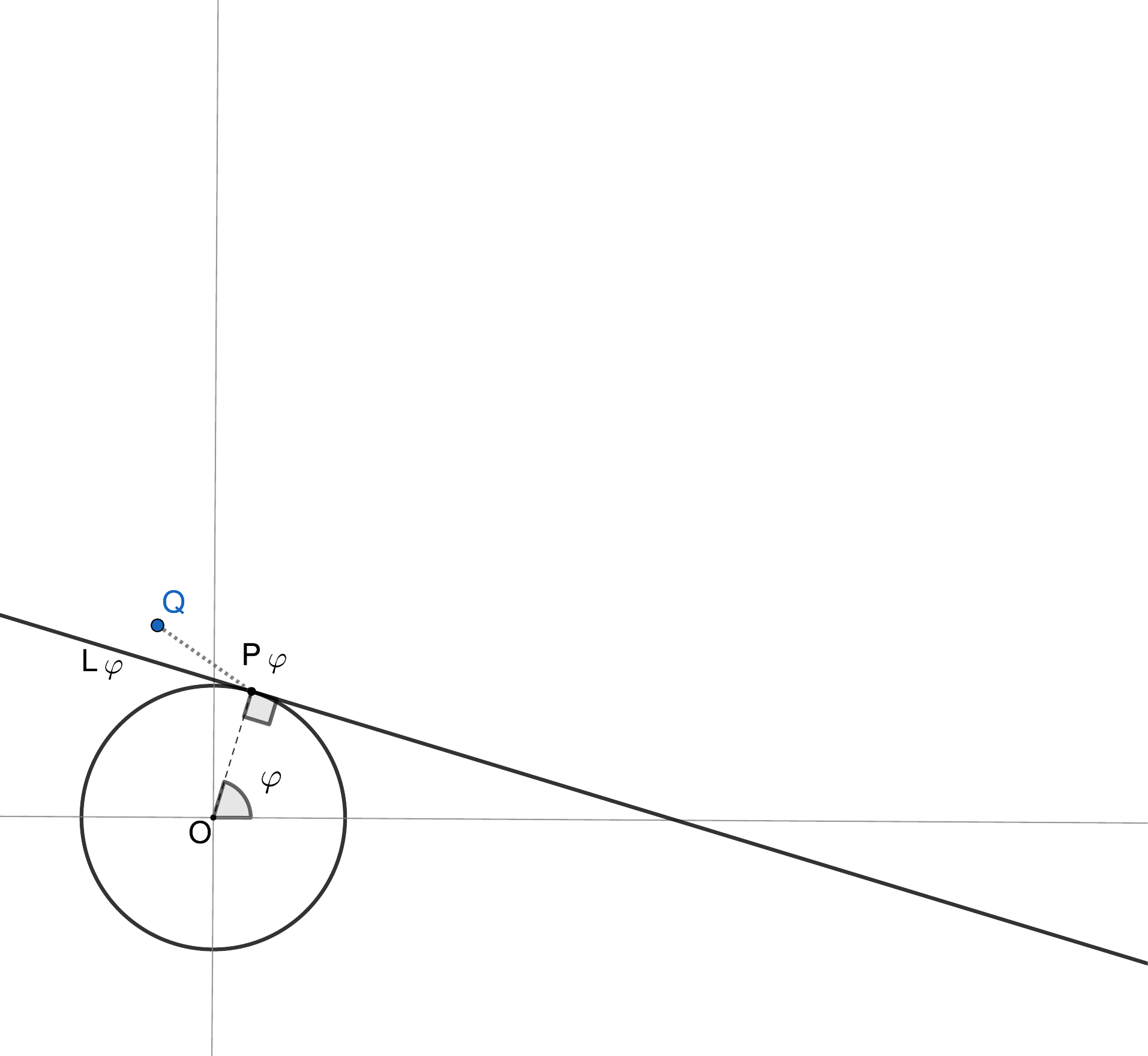}
        \caption{Supporting figure for Lemma~\ref{lem: inspection on tangent line}.}
        \label{fig: tangentline}
    \end{subfigure}
    \hfill
    \begin{subfigure}[t]{0.45\textwidth}
        \centering
        \includegraphics[width=5cm]{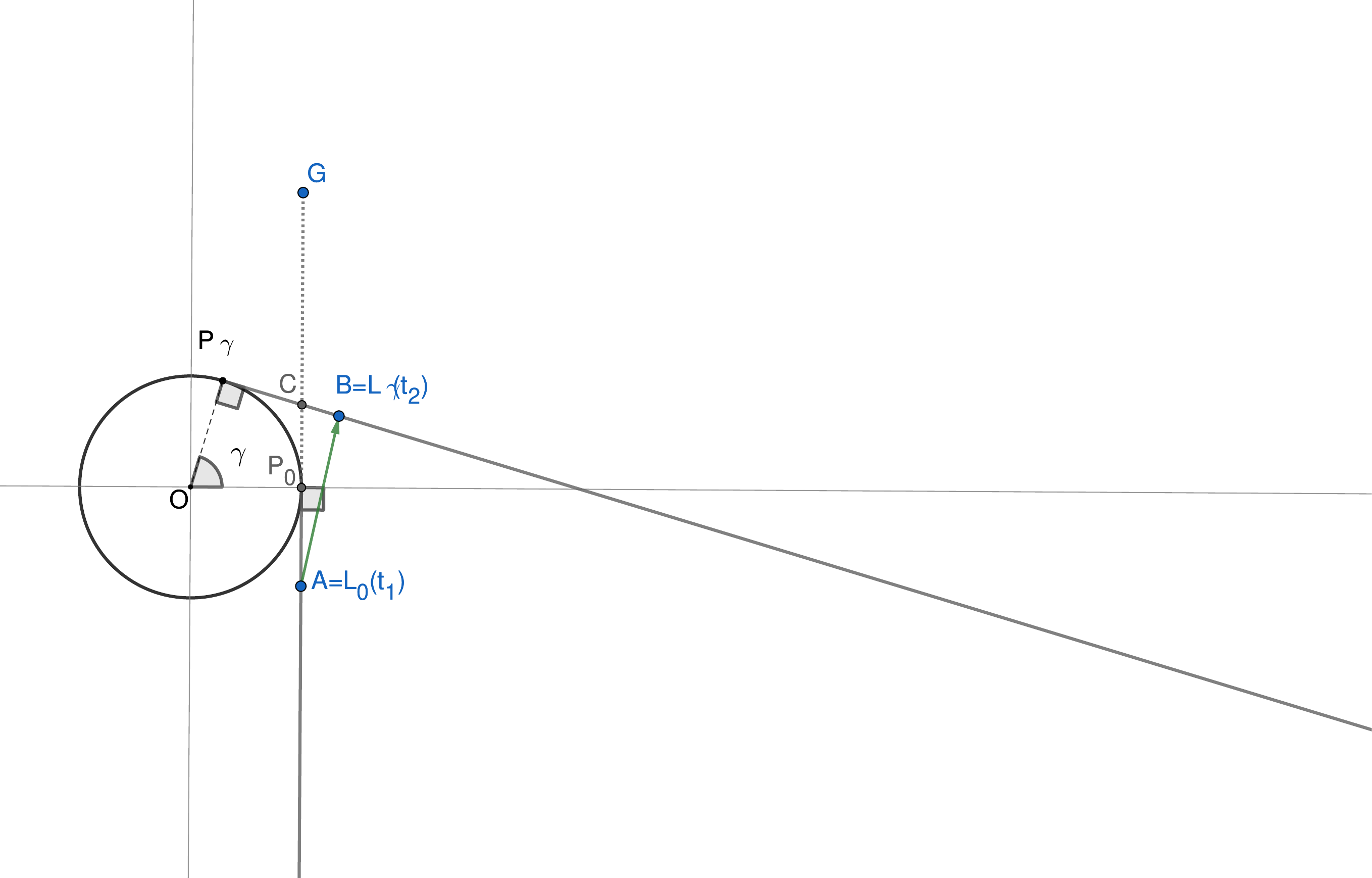}
        \caption{Supporting figure for Lemma~\ref{lem: inspection of sector}.}
        \label{fig: SectorInspection}
    \end{subfigure}
    \caption{Supporting figures for Lemma~\ref{lem: preliminary inspections}.}
    \label{fig: overall}
\end{figure}

\begin{enumerate}[label=(\alph*)]
\item See Figure~\ref{fig: pointpolar}.
A point with polar coordinates $(r, \phi)$ corresponds to Cartesian coordinates $A=r (\coss{\phi}, \sinn{\phi})$. if $r=1$, then $A$ is touching the perimeter of the disk, and it only inspects point $\mathcal P_\phi$. Otherwise, if $r>1$, we find the two lines that pass from $A$ and that are tangent to the disk. Then, all points in between the tangent points are inspected by point $A$. 

For this, consider an arbitrary point $\mathcal P_t$ on the disk. The line passing through $A,\mathcal P_t$ is tangent to the disk if and only if $\overrightarrow{O\mathcal P_t}=(\coss{t}, \sinn{t})$ is orthogonal to $\overrightarrow{\mathcal P_tA}=(r\coss{\phi} -\coss{t},r\sinn{\phi} -\sinn{t})$. But then, we see that 
$$
\overrightarrow{O\mathcal P_t}
\cdot 
\overrightarrow{\mathcal P_tA}
=
r\coss{\phi}\coss{t} + r\sinn{\phi}\sinn{t} -1 = r\coss{t-\phi}-1,
$$
and therefore the vectors are orthogonal exactly when $t=\phi \pm \arccoss{1/r}$, as wanted. 

\item See Figure~\ref{fig: initialpoint}.
 We have that $\|(1,\tann{\theta})\| = 1/\coss{\theta}$, which also implies that the angle of the polar coordinates of the point is $\theta$. Overall, by Lemma~\ref{lem: coverage}, an agent at polar coordinates $(1/\coss{\theta},\theta)$ inspects all points $\mathcal P_t$ satisfying $|t-\theta| \leq \arccoss{\coss{\theta}}$, that is all $t \in [0,2\theta]$. 

\item See Figure~\ref{fig: tangentline}.
 If an agent lies at point $Q$ in the closed halfspace defined by $\mathcal L_\phi$ not containing the interior of the unit disk, then no convex combination of $Q,\mathcal P_\phi$ is in the interior of the disk, hence she inspects $\mathcal P_\phi$.

For the other direction, assume that an agent inspects point $\mathcal P_\phi$ from point $Q$. Note that $\mathcal P_\phi$ cannot coincide with $Q$. 
If $Q$ is not in the closed halfsapce not containing the disk, then the ray (halfline) starting from $\mathcal P_\phi$ and passing though $Q$ must intersect the disk, contradicting that the agent inspects $\mathcal P_\phi$.

\item See Figure~\ref{fig: SectorInspection}. By rotating by $-\phi$, we may assume that $\phi=0$, hence $A= \mathcal L_0(t_1)$ and $B=\mathcal L_{\gamma}(t_2)$ for $t_1\geq 0$ and $t_2\geq \frac{1-\coss{\chi}}{\sinn{\chi}}$. 
By Lemma~\ref{lem: inspection on tangent line}, we have that $A$ inspects $\mathcal P_0$ and $B$ inspects $\mathcal P_\gamma$. Points $D$ that are convex combinations of $A,B$ inspect points $\mathcal P_\chi$, with $\chi \in [\phi, \phi+\gamma]$, unless $D$ is less than 1 away from the origin, or in other words, unless segment $AB$ intersects at two distinct points the perimeter of the circle (and hence containing a point in the interior of the circle). We identify the value of $t_2$ for which segment $AB$ intersects once the circle (at point $C$ in Figure~\ref{fig: SectorInspection}. 
For this we observe that the equation of line $\mathcal L_0(t)$ simplifies to $x=1$, whereas
$\mathcal L_{\gamma}(t) = (\coss{\chi}, \sinn{\chi})+ t( \sinn{\chi},-\coss{\chi})$. We conclude that $\mathcal L_{\gamma}(t)$ intersects $x=1$ when 
$(\mathcal L_{\gamma}(t))_1=1$, that is, when $\coss{\chi}+t\sinn{x}=1$, giving rise to the threshold promised in the statement of the lemma.  
\end{enumerate}
\qed  \end{proof}

By Lemma~\ref{lem: inspection on tangent line}, $\mathcal P_\phi$ is inspected if an agent lies in the halfspace defined by $\mathcal L_\phi(t)$ that does not contain the disk. 
Therefore, unless an agent is not already in the halfspace of $\mathcal L_\phi(t)$ not containing the disk, $\inspect{\mathcal P_\phi}$ is also equal to the first time an agent hits line $\mathcal L_\phi(t)$. 


\subsection{Solving \shoreline{n} Using the Partial Inspection Problem \pshoreline{c}. }
\label{sec: solve inspection by partial inspection}

Theorem~\ref{thm: optimal worst case partial inspection sol} (on page~\pageref{thm: optimal worst case partial inspection sol}) establishes the optimal worst-case cost for solving the partial inspection problem \pshoreline{c}. Clearly, \pshoreline{2\pi} is equivalent to the problem \shoreline{1}, and indeed, the case $c=2\pi$ and the corresponding optimal result (see Theorem~\ref{thm: original Isbell}, first proved in~\cite{isbell1957optimal}) is obtained as a corollary. First we provide the proof of Theorem~\ref{thm: optimal worst case partial inspection sol}, and then we present its corollaries pertaining to problems~\shoreline{n}.

\begin{proof}[of Theorem~\ref{thm: optimal worst case partial inspection sol}]
Without loss of generality, the points to be inspected are all points on the arc $\mathcal{A} = \{ \mathcal{P}_\phi : \phi \in [0, c]\}$. Hence, the optimal inspection trajectory is identified by the curve $C$ (of minimum length), with one endpoint at the origin $O$, and whose convex closure contains $\mathcal{A}$.

The reader may consult Figure~\ref{fig: convexcurveoverall}.
Since the agent initially lies at the origin, she does not inspect any points. Since $C$ is continuous and by Lemma~\ref{lem: inspection on tangent line}, the curve $C$ must intersect $\mathcal{L}_0$, say at point $A_0 = \mathcal{L}_0(t_0)$, and $\mathcal{L}_c$ at point $A_c = \mathcal{L}_c(t_c)$. Without loss of generality, we assume that $A_0$ is the first to be hit, and that $t_0 \leq 0$ and $t_c \geq 0$. Because $C$ is of minimum length, the curve must start with the line segment $0A_0$.

Next, we use the main technical contribution of~\cite{isbell1957optimal}, which states that the curve $C$ must be convex about the origin $O$. As a result, $C$ is a convex curve that starts with the segment $OA_0$, ends at point $A_c$. Moreover, the (sub)curve with endpoints $A_0$ and $A_c$, call it $C'$, is the minimum convex curve around $O$ that does not intersect the interior of the disk.

Let $D$ denote the point on $C'$ which is the last (moving from $A_0$ to $A_c$) to lie on the perimeter of the disk, or $D = A_0$ if no such point exists. Again, by the minimality (of length) of $C'$, $DA_c$ must be perpendicular to line $\mathcal{L}_c$. Finally, if $C'$ does intersect the perimeter of the disk, let $D'$ denote the first such point (moving from $A_0$ to $A_c$). Again, by the minimality of $C'$, the curve should start as the line segment $A_0D'$, which by the convexity of the curve $C'$ should not intersect the interior of the disk. Moreover, the segment $A_0D'$ should be tangent to the disk (at $D'$), as otherwise, one could choose a convex combination $A_0'$ of $A_0$ and $\mathcal{P}_\phi$ such that $A_0D'$ is tangent to the disk at $D'$. Furthermore, by the triangle inequality, the length of $OA_0' + A_0'D'$ would be less than that of $OA_0' + A_0D'$, contradicting the minimality of $C'$. To that end, note also that the portion of $C'$ between $D'$ and $D$ must lie on the perimeter of the disk forming an arc. 

Next we identify all curves (inspection trajectories) that comply with the above characteristics. We quantify point $A_0$ by $\theta \in [0,\pi/2)$, such that $A_0 = (1,\tann{\theta})$, where also $\theta \leq c/2$. We distinguish two cases as to whether curve $C'$ has at most one point on the perimeter of the disk (hence $C$ has at 1 or 2), or infinitely many. For each of these cases, we describe a family of curves, type-1 and type-2, parameterized by $\theta$, and applicable only under conditions of $c,\theta$.

\emph{Type-1 curves} (see~Figure~\ref{fig: convexcurve1}) intersect the perimeter of the disk in 1 or 2 points. Hence for point $A_0 = (1,\tann{\theta})$ and the end point $A_c \in \mathcal L_c$ of $C$, we have that $A_0A_c$ is perpendicular to $\mathcal L_c$. This implies that $0\leq c \leq 3\pi/2$, and that $2\theta \geq c-\pi/2$. The cost of that trajectory equals $OA_0+A_0A_c$, where $OA_0=1/\coss{\theta}$. In order to find $A_0A_c$, draw a line parallel to $\mathcal L_c$, passing through the origin, intersecting $A_0A_C$ at point $K$. Clearly, $KA_c=1$, in triangle $OKA_0$, we have that $\sinn{c-\theta-\pi/2}=A_0K/OA_0$, and hence $A_0K=\sinn{c-\theta}/\coss{\theta}$. Overall, the length of that trajectory becomes 
$$
f_1(\theta) := 1/\coss{\theta}+\sinn{c-\theta-\pi/2}/\coss{\theta}+1.
$$

\emph{Type-2 (Isbell-type) curves} (see~Figure~\ref{fig: convexcurve2}) intersect the perimeter of the disk in infinitely many points. In other words, trajectory is made up by line segment $OA_0$, then tangent line segment $A_0D'$ (where $D'=\mathcal P_{2\theta}$), then arc $D'D$, and the line segment $DA_c$ (tangent to the disk at point $D=\mathcal P_{c-\pi/2}$ and) perpendicular to $\mathcal L_c$. For this configuration to be well defined, we need $\theta \leq c/2+\pi/2$, as well as $c\geq \pi/2$. Note that $A_0D'=A_0P_0 = \tann{\theta}$, the length of the arc $D'D$ equals $c-\pi/2-2\theta$, while $DA_c=1$. But then, the length of the curve, becomes 
$$
f_2(\theta) = 1/\coss{\theta}+\tann{\theta}+c-\pi/2-2\theta+1.
$$

It follows that when $c\leq \pi/2$ only type-1 curves are applicable, when $\pi/2\leq c \leq 3\pi/2$ both type-1 and type-2 curves are applicable, and when $c\geq 3\pi/2$ only type-2 curves are applicable. 

Lemmata~\ref{lem: minimizer of f1} and~~\ref{lem: minimizer of f2} in Section~\ref{sec: proofs from sec solve inspection by partial inspection} minimize $f_1(\theta)$ and $f_2(\theta)$. Therefore, for each $c\in [0,\pi/2]$ for which only type-1 curves are applicable, their optimal length is $f_1(c/2) = 1/\coss{c/2}$. Similarly, for each $c\in [3\pi/2,2\pi]$ for which only type-2 curves are applicable, their optimal length is $f_2(\pi/6) = \sqrt3+c-5\pi/6+1$. For the remaining values of $c \in [\pi/2,3\pi/2]$, we need to compare $f_1,f_2$, which is done in Lemma~\ref{lem: f1-f2 comparison} in Section~\ref{sec: proofs from sec solve inspection by partial inspection}, concluding the statement of the theorem. 
\qed  \end{proof}

\begin{figure}[h!]
    \centering
    \begin{subfigure}[t]{0.45\textwidth}
        \centering
    \includegraphics[width=6cm]{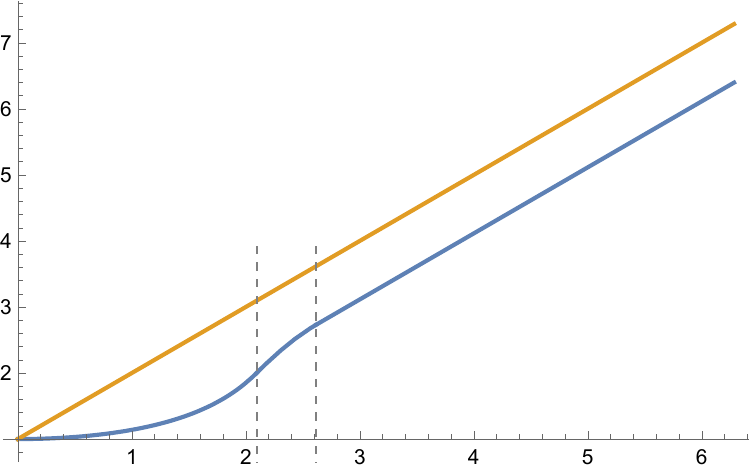}
    \caption{
The worst-case optimal inspection cost for solving the partial problem \pshoreline{c} (blue curve), as a function of $c$ (see Theorem~\ref{thm: optimal worst case partial inspection sol}). The dotted lines represent the values of $c$ at which the piecewise-defined cost function changes its formula. The yellow curve is given for comparison, and represents $1+c$ which is the cost of the naive algorithm for solving \pshoreline{c}. 
}
\label{fig: partialperformance}
    \end{subfigure}\hfill
    \begin{subfigure}[t]{0.45\textwidth}
        \centering
    \includegraphics[width=6cm]{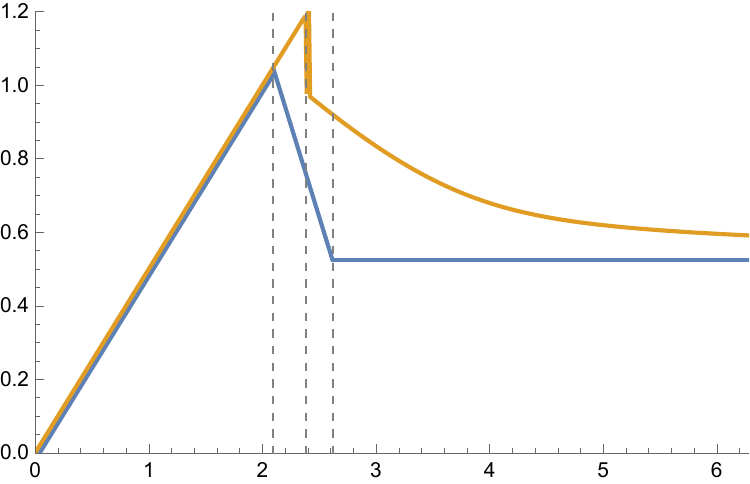}
    \caption{
The worst-case optimal deployment angle $\theta$ (in blue) of \textsc{EPS}$(\theta,t_1,\ldots,t_k)$ is shown as a function of the arc length $c$ to be inspected in \pshoreline{c} (see Theorem~\ref{thm: optimal worst case partial inspection sol}). 
The yellow plot illustrates the behavior of the angle $\theta=\theta(c)$ for the inspection algorithm \textsc{EPS}$(\theta,t_1,\ldots,t_k)$, when the parameters are chosen to optimize the average inspection cost for \pshoreline{c}, see Theorem~\ref{thm: upper bounds to average inspection  partial}.
The gray dotted lines represent transition thresholds for the optimizers $\theta$ in the two cases. 
}
\label{fig: worstcaseOptDeploymentAngle}
    \end{subfigure}\hfill
    \caption{Visualization of the results in Theorem~\ref{thm: optimal worst case partial inspection sol}, and Theorem~\ref{thm: upper bounds to average inspection  partial}.
}
    \label{fig: upper bound for partial inspection}
\end{figure}

A visualization of Theorem~\ref{thm: optimal worst case partial inspection sol} is provided in Figure~\ref{fig: upper bound for partial inspection}. 
To prove Theorem~\ref{thm: optimal worst case partial inspection sol} for \pshoreline{c}, we identify the shortest curve, $C$, that inspects all points along the arc of length $c$. The proof distinguishes between two types of curves: Type-1 (Figure~\ref{fig: convexcurve1}), which intersects the disk perimeter at one or two points and applies when $0 \leq c \leq 3\pi/2$, and Type-2 (Isbell-type) curves (Figure~\ref{fig: convexcurve2}), which intersect the perimeter infinitely and apply when $c \geq \pi/2$. For each configuration, we derive conditions for minimal length based on convexity constraints around the disk, ensuring that $C$ fully covers the arc while minimizing total distance. The detailed derivation of these cases is provided in the appendix.

\begin{figure}[h!]
    \centering
    \begin{subfigure}[t]{0.45\textwidth}
        \centering
        \includegraphics[width=5cm]{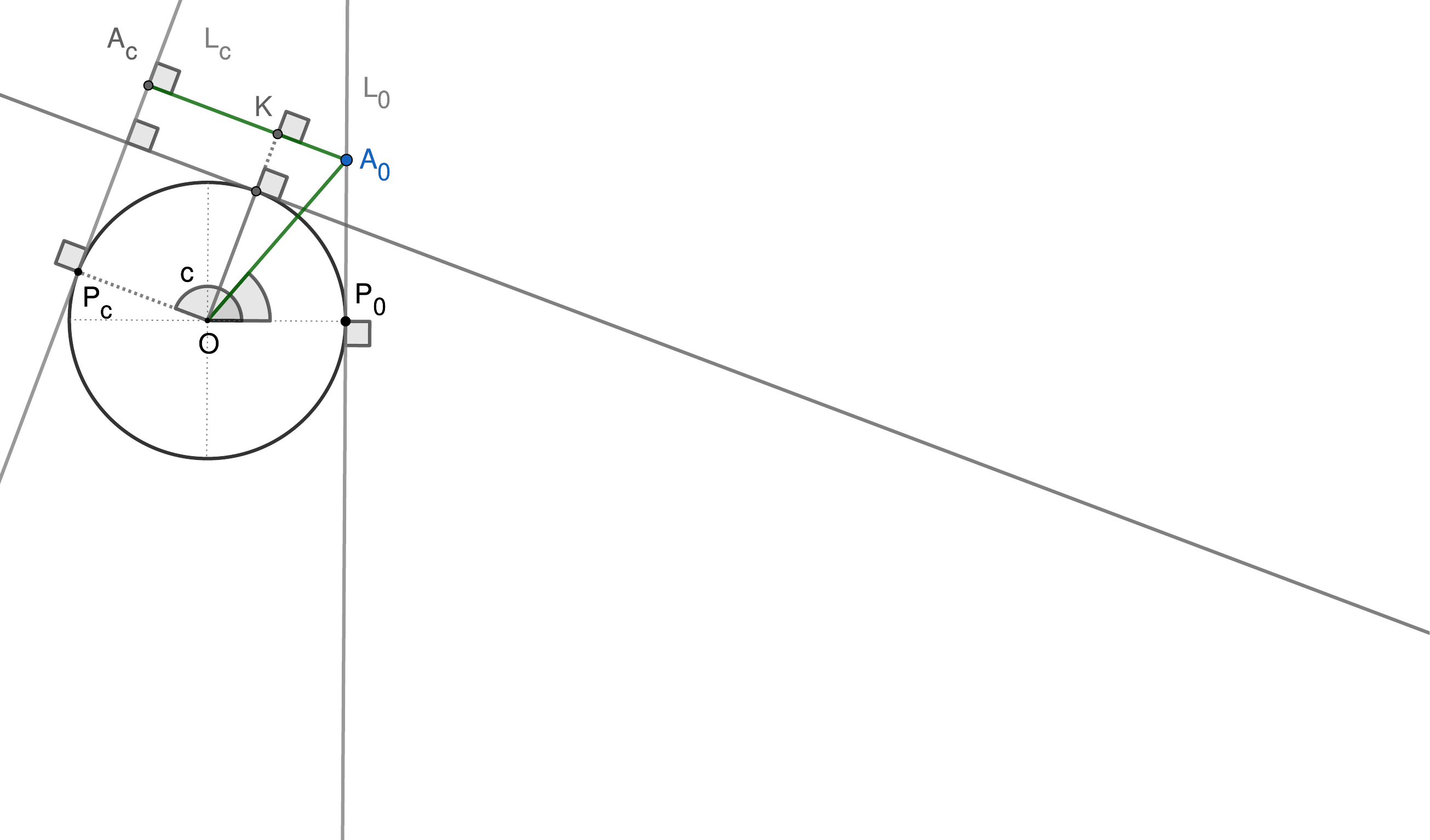}
        \caption{Type-1 curves (in green) for solving \pshoreline{c}, applicable when $0 \leq c \leq 3\pi/2$ and $2\theta \geq c-\pi/2$.}
        \label{fig: convexcurve1}
    \end{subfigure}
    \hfill
    \begin{subfigure}[t]{0.45\textwidth}
        \centering
        \includegraphics[width=4.8cm]{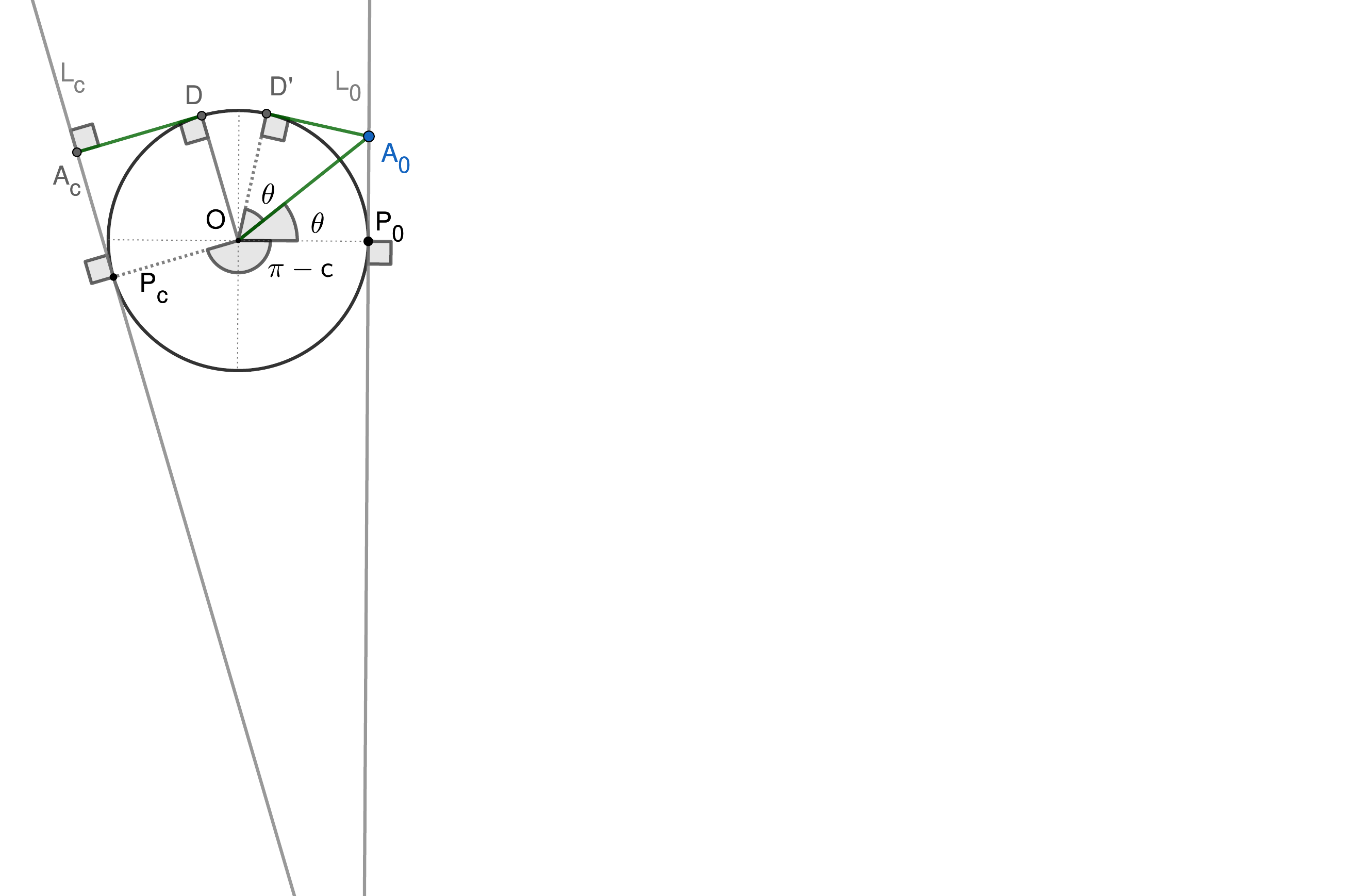}
        \caption{Type-2 (Isbell-type) curves (in green) for solving \pshoreline{c}, applicable when $\pi/2\leq c \leq 2\pi$ and $2\theta \leq c-\pi/2$.}
        \label{fig: convexcurve2}
    \end{subfigure}
    \caption{Trajectories achieving the worst-case optimal inspection cost to \pshoreline{c} as shown in Theorem~\ref{thm: optimal worst case partial inspection sol}.}
    \label{fig: convexcurveoverall}
\end{figure}

We note that the optimal worst case inspection cost to \pshoreline{c} is continuous at $c=2\pi/3$ and at $c=5\pi/6$ with corresponding worst case costs equal to $2$ and $1+\sqrt3$, respectively. Moreover, and not surprisingly, it is easy to see that the cost is monotonically increasing in $c$. 

As an immediate corollary we obtain the optimal worst case cost for the general inspection problem \shoreline{n}, as stated in Theorem~\ref{thm: general wrs optimal}.

\begin{proof}[of Theorem~\ref{thm: general wrs optimal}]
Consider any partition of the perimeter of the disk that assigns points to inspect to the $n$ agents. This defines a number of arcs covering the perimeter of the disk, each labeled according to the assigned agents. 
Arcs in the partition can be rearranged (each agent still being assigned the same total portion of the disk) so that those with the same label are contiguous, without increasing the cost.

Hence, each agent needs to inspect a contiguous arc. Next, we claim that all agents inspect arcs of the same length. If not, we could slightly decrease the largest arc and slightly increase the smallest arc. From Theorem~\ref{thm: optimal worst case partial inspection sol}, the optimal cost for covering any arc of length $c$ is monotonically increasing in $c$, and thus this adjustment would reduce the overall inspection cost.

We conclude that each agent inspects an arc of length $c = 2\pi/n$. The claim of the theorem follows from Theorem~\ref{thm: optimal worst case partial inspection sol}, which describes the worst-case optimal cost for inspecting an arc of length $c$, for every $c \in [0, 2\pi]$.
\qed  \end{proof}

\section{Average-Case Inspection Upper Bounds for \shoreline{n}}
\label{sec: average case inspection for shoreline n agents}

\subsection{Discrete Inspection Algorithms for \ashoreline{k}{\theta}{c} and \pshoreline{c}}
\label{sec: discrete average}

In this section, we introduce and analyze the trajectory (Algorithm) \textsc{PolySegment} for solving \ashoreline{k}{\theta}{c} and the trajectory (Algorithm) \textsc{ExtendedPolySegment}, which, under certain conditions, will solve the 1-agent partial inspection problem \pshoreline{c} and eventually the $n$-agent inspection problem \shoreline{n}. 

Recall that $\theta \in [0, \pi/2)$, and $c \in [0, 2\pi]$, with $c \geq 2\theta$. We fix these constants and also fix $k \in \naturals$, i.e., for the description of the trajectory, we require $k$ to be finite. The algorithm will be parameterized by $(t_1, \ldots, t_k)$, where $t_i \in \reals$. Recall that in \ashoreline{k}{\theta}{c}, we need to inspect points $P_i$ with angular values
$
\phi_i := 2\theta + \frac{c - 2\theta}{k} i, \quad i = 0, \ldots, k.
$
Note that the remaining of the necessary notation was introduced earlier in Section~\ref{sec: Some Preliminary Observations}.
Along with the algorithm description below, the reader may also consult Figure~\ref{fig: PolySegmentAlgorithm} on page~\pageref{fig: PolySegmentAlgorithm}.

\begin{figure}[h!]
    \centering
    \includegraphics[width=0.6\textwidth]{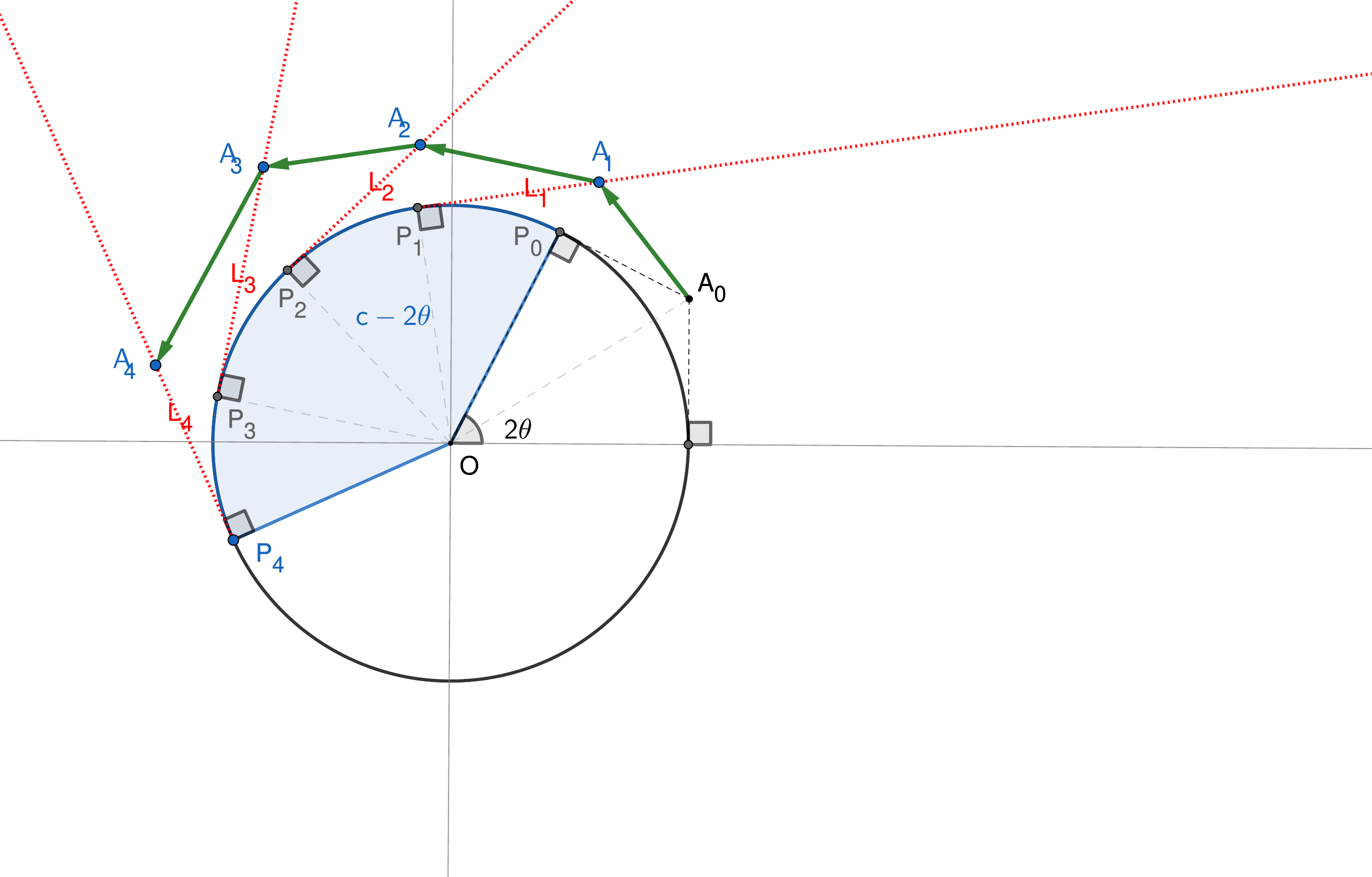}
    \caption{
        In this example $k=4$, and we demonstrate how \textsc{PS}$(t_1,\ldots,t_4)$ is feasible to \ashoreline{4}{\theta}{c}. 
        The disk segment of arc length $c-2\theta$, whose perimeter needs to be inspected, is depicted as the light blue sector. 
        The tangent lines $L_i(t)$ are depicted as dotted red lines, where only the halflines corresponding to $t\geq0$ are shown. 
        We let $A_0=(1,\tann{\theta})$, and we note that there exists $t_0$ such that $A_0 = \mathcal L_{\phi_0}(t_0)$, where $\phi_0=2\theta$. 
        For $i=1,2,3,4$, the points $A_i$ are obtained as $A_i=L_i(t_i)$ with $t_i>0$. 
        By inspection, we see that the choice of $t_i$ makes the trajectory $A_0\rightarrow A_1\rightarrow A_2\rightarrow A_3 \rightarrow A_4$ feasible to 
        \ashoreline{4}{\theta}{c} as well as to \ashoreline{\infty}{\theta}{c}. Feasibility to \ashoreline{\infty}{\theta}{c} will be examined later, and we stress that it is not sufficient to have $t_i>0$. 
    }
    \label{fig: PolySegmentAlgorithm}
\end{figure}

\begin{aalgorithm}[\textsc{PolySegment}$(t_1,\ldots,t_k)$]
Given $k,\theta,c$ with $\theta <\pi/2$, $c\geq 2\theta$, we set $A_0:=(1,\tann{\theta})$, and $A_i := L_{i}(t_i)$. Then, the trajectory is defined by the movement between points 
$A_0\rightarrow A_1\rightarrow \ldots \rightarrow A_k$. 
\end{aalgorithm}

For fixed $k,\theta,c$, our goal will be to determine optimal values for $(t_1,\ldots,t_k)$ so as to minimize the average case performance to \ashoreline{k}{\theta}{c}. For the ease of notation, we will refer to the trajectory as \textsc{PS}$(t_1,\ldots,t_k)$.
Next we comment on the algorithms feasibility and compute its average case inspection cost.

\begin{lemma}
\label{lem: polysegment performance for discrete}
For every $k \in \naturals$ and for all $t_i\in \reals$,  trajectory \textsc{PS}$(t_1,\ldots,t_k)$ is feasible to \ashoreline{k}{\theta}{c}.
Moreover, 
trajectory \textsc{PS}$(t_1,\ldots,t_k)$ has worst case cost 
$\sum_{j=1}^k \|A_{j-1}A_j\|$ 
and average case cost 
$$\frac{1}{k+1} \sum_{j=1}^k (k-j+1) \|A_{j-1}A_j\|.$$
\end{lemma}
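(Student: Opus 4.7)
The plan is to prove the lemma in two steps: first, establish feasibility by exhibiting, for each target $P_i$, an inspecting point on the trajectory; second, bound the worst- and average-case inspection times by taking the arrival time at $A_i$ as an upper bound on $\inspect{P_i}$, and then doing a standard swap-of-summation on the average.

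For feasibility, I would treat $P_0$ separately from $P_1,\ldots,P_k$. Since $A_0=(1,\tann{\theta})$ is the starting position, Lemma~\ref{lem: inspection range of first point} gives that the agent inspects every $\mathcal{P}_t$ with $t\in[0,2\theta]$, and in particular $P_0=\mathcal{P}_{2\theta}$, at time $0$ (alternatively, one can verify that $A_0=L_0(\tann{\theta})$ by direct substitution into~\eqref{equa: parametric tangent line}, using the half-angle identity $(1-\coss{2\theta})/\sinn{2\theta}=\tann{\theta}$, and then appeal to Lemma~\ref{lem: inspection on tangent line}). For $i\geq 1$, by construction $A_i=L_i(t_i)=\mathcal{L}_{\phi_i}(t_i)$, so $A_i$ lies on the tangent line $\mathcal{L}_{\phi_i}$, and Lemma~\ref{lem: inspection on tangent line} guarantees that $A_i$ inspects $P_i=\mathcal{P}_{\phi_i}$. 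Hence every $P_i$ is inspected by the time the agent has finished traversing the polyline $A_0\!\to\!A_1\!\to\cdots\to\! A_i$, and this conclusion holds for \emph{every} real choice of $t_1,\ldots,t_k$, as the statement requires.

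For the cost bounds, let $T_i:=\sum_{j=1}^{i}\|A_{j-1}A_j\|$ denote the arrival time at $A_i$ (with $T_0=0$). The feasibility argument yields $\inspect{P_i}\leq T_i$ for every $i=0,\ldots,k$. Taking the maximum over $i$ gives worst-case cost at most $T_k=\sum_{j=1}^{k}\|A_{j-1}A_j\|$, which is also the length of the trajectory and therefore the time the algorithm terminates. For the average-case cost, I average over the $k+1$ target points and swap the order of summation:
\begin{align*}
\frac{1}{k+1}\sum_{i=0}^{k}\inspect{P_i}
\;\leq\;\frac{1}{k+1}\sum_{i=0}^{k}T_i
\;=\;\frac{1}{k+1}\sum_{i=0}^{k}\sum_{j=1}^{i}\|A_{j-1}A_j\|
\;=\;\frac{1}{k+1}\sum_{j=1}^{k}(k-j+1)\,\|A_{j-1}A_j\|,
\end{align*}
where the last equality counts, for each fixed $j\in\{1,\ldots,k\}$, the number of indices $i\in\{0,\ldots,k\}$ with $i\geq j$, which is $k-j+1$.

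There is no real obstacle here; the proof is essentially a bookkeeping argument that invokes Lemma~\ref{lem: preliminary inspections}. The only subtle point worth flagging is that the lemma does not require the $t_i$ to be nonnegative or to satisfy the threshold in Lemma~\ref{lem: inspection of sector}: feasibility of the \emph{discrete} problem \ashoreline{k}{\theta}{c} is automatic once each $A_i$ lies on $\mathcal{L}_{\phi_i}$, precisely because we only need to inspect the $k+1$ sampled points rather than the entire arc. This is exactly what makes the upcoming analysis for the continuous problem \ashoreline{\infty}{\theta}{c} genuinely harder, and the lemma is stated in this mild form so that it applies verbatim inside the discretization-based framework developed in the next subsections.
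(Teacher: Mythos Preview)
Your proof is correct and follows essentially the same approach as the paper: invoke Lemma~\ref{lem: inspection on tangent line} to get feasibility (with Lemma~\ref{lem: inspection range of first point} for $P_0$), bound $\inspect{P_i}$ by the arrival time at $A_i$, and swap the order of summation for the average. Your exposition is slightly more explicit about why $A_0$ lies on $L_0$ and about the count $k-j+1$, but the argument is the same.
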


\begin{proof}
By Lemma~\ref{lem: inspection on tangent line}, an agent on $L_i(t)$ inspects point $P_i$, and we have one such $A_i=L_i(t_i)$, for each $i=0,\ldots,k$. Therefore, trajectory \textsc{PS}$(t_1,\ldots,t_k)$ is indeed feasible. 

The length of the trajectory equals $\sum_{j=1}^k \|A_{j-1}A_j\|$, and since every point $P_i$ is inspected no later than when the agent reaches $A_i$, it follows that no point is inspected later than the length of the trajectory, concluding the claim about the worst case. 

Next we calculate the average inspection cost of points $P_0, P_1,\ldots,P_k$, deriving this way the average case analysis. For this, we calculate the inspection time $\inspect{P_i}$ for each $P_i$, and recall that by Lemma~\ref{lem: inspection range of first point} we have that $\inspect{P_0}=0$. 
Also, we see that $\inspect{P_i} \leq \sum_{j=1}^i \|A_{j-1}A_j\|$, for $i=1,\ldots,k$ and hence the average inspection time is 
$$
\frac{1}{k+1}
\sum_{i=0}^k \inspect{P_i}
\leq
\frac{1}{k+1}
\sum_{i=1}^k \sum_{j=1}^i \|A_{j-1}A_j\|
= 
\frac{1}{k+1}
\sum_{i=1}^k (k-i+1) \|A_{i-1}A_i\|.
$$
\qed  \end{proof}

Trajectory \textsc{PolySegment}$(t_1,\ldots,t_k)$ will be eventually used to derive upper bounds to \shoreline{n}. Below we extend the trajectory, and we quantify its performance assuming it is feasible for the continuous partial inspection problem. In that direction, we present trajectory (algorithm) \textsc{ExtendedPolySegment} for solving \pshoreline{c}.

\begin{aalgorithm}[\textsc{ExtendedPolySegment}$(\theta,t_1,\ldots,t_k)$]
\label{alg: extendedpolysegment} ~ \\
Given $c\in [0,2\pi], k\in \naturals$ and $\theta\leq \{c/2,\pi/2-\epsilon\}$ for some $\epsilon>0$, we set 
$A_0:=(1,\tann{\theta})$, and $A_i := L_{i}(t_i)$.\footnote{Recall that the definition of lines $L_i$ is tailored to the fixed $k,\theta,c$.} 
Then, the trajectory of the agent is defined by the movement between points 
$O\rightarrow A_0\rightarrow A_1\rightarrow \ldots \rightarrow A_k$, where $O$ is the center of the disk. 
\end{aalgorithm}

For notational convenience, we will refer to the algorithm above as \textsc{EPS}$(\theta,t_1, \ldots, t_k)$. Next, we describe its performance on \pshoreline{c} as a function of its performance on the discrete problem, assuming also that it is feasible for the problem. We will introduce the feasibility conditions later.

\begin{lemma}
\label{lem: reduction from discrete to continuous}
Let $k\in \naturals$ and $\theta\leq \{c/2,\pi/2-\epsilon\}$ for some $\epsilon>0$.
Suppose also that \textsc{PS}$(t_1,\ldots,t_k)$\footnote{Note that the trajectory points are tailored to the fixed values of $k,\theta,c$.} is feasible to \ashoreline{\infty}{\theta}{c}
with worst case cost and the average case inspection cost equal to $s_1$ and $s_2$, respectively.  Then, 
\begin{enumerate}[label=(\alph*), ref=\thelemma(\alph*)]
\item 
\label{lem: reduction from discrete to continuous feasibile}
\textsc{EPS}$(\theta,t_1,\ldots,t_k)$ is feasible to \pshoreline{c}.
\item 
\label{lem: reduction from discrete to continuous wrs}
The worst case cost of \textsc{EPS}$(\theta,t_1,\ldots,t_k)$ to \pshoreline{c} is $\frac{1}{\coss{\theta}} + s_1$. 
\item 
\label{lem: reduction from discrete to continuous avg}
The average case cost of \textsc{EPS}$(\theta,t_1,\ldots,t_k)$ to \pshoreline{c} is 
$$
\frac{1}{c}
\log \left( \frac{1+\sinn{\theta}}{1-\sinn{\theta}} \right)
+
\left(1 - \frac{2\theta}{c} \right) \left(\frac{1}{\coss{\theta}} + s_2 \right).
$$
\end{enumerate}
\end{lemma}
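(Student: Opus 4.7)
The plan is to reduce each claim to statements already proved in the paper, and to handle the average-case claim (c) by decomposing the integral over the arc to be inspected into the part covered during the initial segment $O \to A_0$ and the part covered afterward by the \textsc{PS} trajectory.

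For (a), the feasibility of \textsc{EPS}$(\theta, t_1, \ldots, t_k)$ to \pshoreline{c} reduces to observing that the inspection requirement over $\phi \in [0, 2\theta]$ is handled by the fact that upon arrival at $A_0 = (1, \tan\theta)$, by Lemma~\ref{lem: inspection range of first point} the agent has inspected every $\mathcal P_t$ with $t \in [0, 2\theta]$. The inspection requirement over $\phi \in [2\theta, c]$ is handled by the continuation $A_0 \to A_1 \to \cdots \to A_k$, which is by hypothesis feasible to \ashoreline{\infty}{\theta}{c}. For (b), I simply add the length $\|OA_0\| = 1/\cos\theta$ to the cost $s_1$ of the \textsc{PS} tail.

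The substantive part is (c). I would write
\[
\mathbb E_\phi \inspect{\mathcal P_\phi} \;=\; \frac{1}{c}\left(\int_0^{2\theta} \inspect{\mathcal P_\phi}\,\dd\phi + \int_{2\theta}^{c} \inspect{\mathcal P_\phi}\,\dd\phi\right).
\]
The second integral is straightforward: for $\phi \in [2\theta, c]$ the inspection time equals $1/\cos\theta$ (the time to reach $A_0$) plus the inspection time under \textsc{PS} in isolation, so that integral equals $(c - 2\theta)/\cos\theta + (c - 2\theta)\,s_2$, using the fact that $s_2$ is the average inspection time over $\phi \in [2\theta, c]$ for \textsc{PS}. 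The first integral requires quantifying when each point $\mathcal P_\phi$ with $\phi \in [0, 2\theta]$ becomes inspected during the initial segment. Parameterizing the movement so that at time $r \in [0, 1/\cos\theta]$ the agent has polar coordinates $(r, \theta)$, Lemma~\ref{lem: coverage} tells us that $\mathcal P_\phi$ is first inspected at time $r = \sec(\phi - \theta)$, provided $r \geq 1$, which does hold throughout $\phi \in [0, 2\theta]$.

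The main obstacle is then just the evaluation
\[
\int_0^{2\theta} \sec(\phi - \theta)\,\dd\phi = \int_{-\theta}^{\theta} \sec u\,\dd u = \log\!\left(\frac{1+\sin\theta}{1-\sin\theta}\right),
\]
which is a standard antiderivative identity after the substitution $u = \phi - \theta$ and simplification using $\sec\theta + \tan\theta = (1+\sin\theta)/\cos\theta$. Collecting terms gives
\[
\mathbb E_\phi \inspect{\mathcal P_\phi} = \frac{1}{c}\log\!\left(\frac{1+\sin\theta}{1-\sin\theta}\right) + \left(1 - \frac{2\theta}{c}\right)\!\left(\frac{1}{\cos\theta} + s_2\right),
\]
exactly the expression claimed. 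No subtle optimality or adversarial arguments are needed here; the only care required is (i) invoking Lemma~\ref{lem: coverage} correctly to get $\inspect{\mathcal P_\phi} = \sec(\phi - \theta)$ during the initial segment, and (ii) handling the definition of $s_2$ as an average (normalized by $c - 2\theta$) when assembling the final sum.
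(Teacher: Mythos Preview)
Your proposal is correct and follows essentially the same approach as the paper: the same decomposition of the average into the $[0,2\theta]$ and $[2\theta,c]$ pieces, the same identification of $\inspect{\mathcal P_\phi}=\sec(\phi-\theta)$ on the initial segment, and the same secant integral. The only cosmetic differences are that the paper invokes Lemma~\ref{lem: inspection on tangent line} (tangent-line crossing plus a right-triangle computation) rather than Lemma~\ref{lem: coverage} to obtain $\sec(\theta-\phi)$, and it reduces the $[0,2\theta]$ integral to $2\int_0^\theta$ by symmetry before substituting, whereas you substitute directly to get $\int_{-\theta}^{\theta}\sec u\,\dd u$; these are equivalent.
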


\begin{proof}
\begin{enumerate}[label=(\alph*)]
\item We are given that \textsc{PS}$(t_1,\ldots,t_k)$ is feasible to \ashoreline{\infty}{\theta}{c}, therefore, by definition all points $\mathcal P_\phi$, with $\phi \in [2\theta,c]$, are inspected. Trajectory \textsc{PS}$(t_1,\ldots,t_k)$ has the additional movement $0\rightarrow A_0$, where $A_0=(1,\tann{\theta})$. By Lemma~\ref{lem: inspection range of first point}, once the agent reaches $A_0$, all points $\mathcal P_\phi$, with $\phi \in [0,2\theta]$ are inspected as well. Effectively, the first agent inspects all points along an arc of total length $c$.

\item 
The worst case performance of \textsc{EPS}$(\theta,t_1,\ldots,t_k)$ is just the length of the trajectory, which equals $\|0A_0\|$ plus the length of the trajectory of \textsc{EPS}$(\theta,t_1,\ldots,t_k)$. The claim follows by observing that $\|0A_0\|=\sqrt{1^2+\tan^2(\theta)} = 1/\coss{\theta}$, where the last equality is due to that $\theta\in [0,\pi/2)$. 

\item 
We are give that the average inspection cost to the auxiliary inspection problem \ashoreline{\infty}{\theta}{c} equals $s_2$, and that pertains to the cost of inspecting points 
$\mathcal P_{\phi}$ (in the partial problem) where $\phi \in [2\theta,c]$. For the inspection problem $\mathcal P(c)$, the inspection cost of the same points incur the additional cost of $1/\coss{\theta}$, which is the length of the initial deployment segment $OA_0$, see Figure~\ref{fig: initialpointavg}. 
Overall, this shows that 
$$
E_{\phi \in [2\theta,c]}[\inspect{\mathcal P_\phi} ] 
= \frac{1}{\coss{\theta}} + s_2
$$
\begin{figure}[h!]
    \centering
    \begin{minipage}{5.0cm}
        \includegraphics[width=\linewidth]{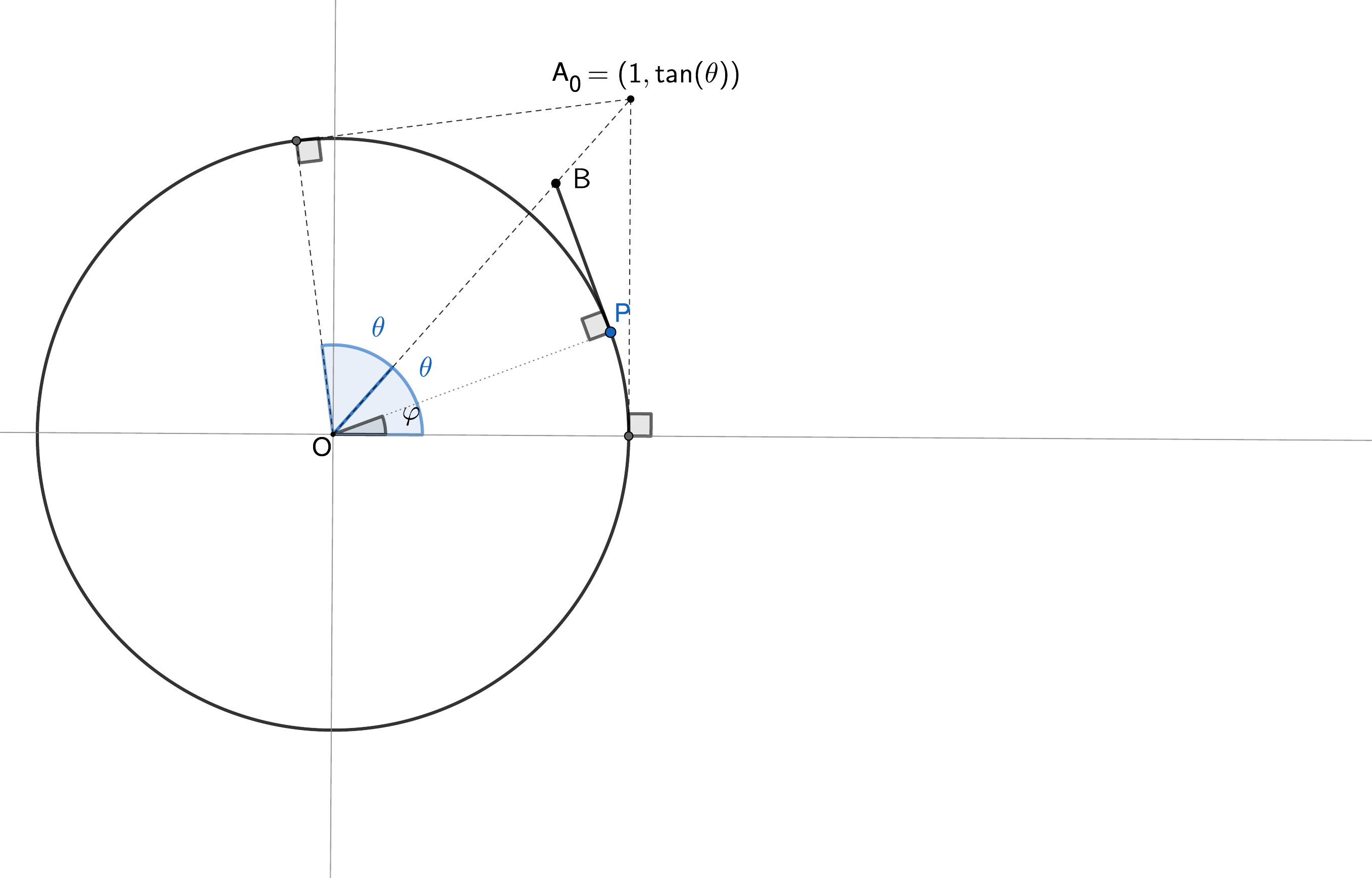}
    \end{minipage}%
    \hspace{1cm}%
    \begin{minipage}{0.55\textwidth}
        \caption{
        The inspection time of points $\mathcal P_\phi$, where $\phi \in [0,\theta]$, in the initial movement $O\rightarrow A_0$ of the trajectory \textsc{ExtendedPolySegment}.
        }
        \label{fig: initialpointavg}
    \end{minipage}
\end{figure}

Next, and for each $\phi \in [0,\theta]$ we calculate the inspection time $\inspect{\mathcal P_\phi}$ of $\mathcal P_\phi$, depicted as point $P$ in Figure~\ref{fig: initialpointavg}. That is, $B$ is the intersection of the line tangent on the disk at $\mathcal P_\phi$ with $OA_0$. 
For the analysis, let's focus first on the case that $\phi \leq \theta$. Note that by Lemma~\ref{lem: inspection on tangent line}, we have that 
$\inspect{\mathcal P_\phi} = \|OB\| = 1/\coss{\theta-\phi}$, where  the last equality is due to the right triangle $OPB$. By symmetry, the average inspection of points $\mathcal P_\phi$ is the same when $\phi \in [0,\theta]$ and when $\phi \in [\theta,2\theta]$. Therefore, the overall average inspection time is
\begin{align*}
E_{\phi \in [0,2\pi/n]}[\inspect{\mathcal P_\phi} ]
& =  \frac{1}{c} \int_{0}^{2\pi/n} \inspect{\mathcal P_\phi} \d{\phi}  \\
& = 
\frac{1}{c}
\left( 
\int_{0}^{2\theta} \inspect{\mathcal P_\phi} \d{\phi} 
+
\int_{2\theta}^{c} \inspect{\mathcal P_\phi} \d{\phi} 
\right) \\
& = 
\frac{1}{c}
\left( 
2 \int_{0}^{\theta} \frac{1}{\coss{\theta-\phi}}  \d{\phi} 
+ 
\left(c - 2\theta \right) s_2
\right)
\end{align*}
The claim follows by observing that 
$$\int_{0}^{\theta} \frac{1}{\coss{\theta-\phi}} \d{\phi} = \frac12 \log \left( \frac{1+\sinn{\theta}}{1-\sinn{\theta}}  \right),$$
which is based on that $\int \frac1{\coss{\phi}} \d \phi = \textrm{ArcCoth}( \sinn{\phi})$. 
\end{enumerate}
\qed  \end{proof}


We are now ready to introduce the feasibility conditions of the discrete inspection algorithm for the continuous inspection problems, as well as to relate the performance of the algorithm for the discrete and continuous inspection problems.
As a reminder, the parameter $k$ quantifies the refinement of the continuous space into a discrete set of points to be inspected. The value of $k$ that we will eventually use is on the order of thousands, and the higher it is, the better our results for the continuous problem. The requirement in the lemma that $k \geq 5$ becomes apparent only in the proof, where we need the points to be inspected to be strictly less than $\pi/2$ apart.

\begin{lemma}
\label{lem: polysegment performance for continuous avg}
Consider $k \geq 5 \in \naturals$ and $t_i\in \reals$, satisfying 
$t_i \geq \tfrac{1-\coss{\frac{c-2\theta}k}}{ \sinn{\frac{c-2\theta}k} }, i=1,\ldots,k$. 
Then, trajectory \textsc{PS}$(t_1,\ldots,t_k)$ is feasible to \ashoreline{\infty}{\theta}{c}.
Moreover, if $s$ is the average case cost for the same trajectory to \ashoreline{k}{\theta}{c}, then the average case cost to \ashoreline{\infty}{\theta}{c} is at most $\left(1+\frac{1}{k}\right) s$. 
\end{lemma}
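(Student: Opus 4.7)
The proof naturally splits into two parts: establishing feasibility of $\textsc{PS}(t_1,\ldots,t_k)$ for the continuous problem \ashoreline{\infty}{\theta}{c}, and then bounding the continuous average-case cost in terms of the discrete one.

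For feasibility, I would apply Lemma~\ref{lem: inspection of sector} to each consecutive segment $A_{j-1}\rightarrow A_j$ of the trajectory. Setting $\gamma:=(c-2\theta)/k$, we have $\phi_j-\phi_{j-1}=\gamma$, and the two hypotheses of that lemma must be verified: $\gamma<\pi/2$, and the parameter bounds $t_{j-1}\geq 0$ and $t_j\geq \frac{1-\coss{\gamma}}{\sinn{\gamma}}$. Since $c-2\theta\leq 2\pi$, the condition $\gamma<\pi/2$ forces $k>2(c-2\theta)/\pi$; the extremal case $c-2\theta=2\pi$ exactly accounts for the numerical threshold $k\geq 5$ in the hypothesis. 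The lower bound on $t_j$ is precisely what we assume; for $t_{j-1}$ with $j\geq 2$ it follows from the same hypothesis since $\frac{1-\coss{\gamma}}{\sinn{\gamma}}>0$, while for $j=1$ one verifies directly that $A_0=(1,\tann{\theta})=L_0(\tann{\theta})$ with $\tann{\theta}\geq 0$. Applying Lemma~\ref{lem: inspection of sector} to each segment then guarantees that every $\mathcal P_\phi$ with $\phi\in[\phi_{j-1},\phi_j]$ is inspected during $A_{j-1}\rightarrow A_j$; since the subintervals $[\phi_{j-1},\phi_j]$ cover $[2\theta,c]$, feasibility follows.

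For the average-case comparison, let $T_j:=\sum_{\ell=1}^{j}\|A_{\ell-1}A_\ell\|$ be the time the agent reaches $A_j$. From the feasibility argument, $\inspect{\mathcal P_\phi}\leq T_j$ for every $\phi\in[\phi_{j-1},\phi_j]$, so integrating over $[2\theta,c]$ and dividing by $c-2\theta$ gives
\begin{equation*}
\frac{1}{c-2\theta}\int_{2\theta}^{c}\inspect{\mathcal P_\phi}\,\dd\phi
\;\leq\; \frac{1}{c-2\theta}\sum_{j=1}^{k}\int_{\phi_{j-1}}^{\phi_j}T_j\,\dd\phi
\;=\; \frac{1}{k}\sum_{j=1}^{k}T_j.
\end{equation*}
On the other hand, the discrete average-case cost of $\textsc{PS}(t_1,\ldots,t_k)$ on \ashoreline{k}{\theta}{c}, as computed in Lemma~\ref{lem: polysegment performance for discrete}, is
$$s=\frac{1}{k+1}\sum_{j=1}^{k}(k-j+1)\|A_{j-1}A_j\|=\frac{1}{k+1}\sum_{j=1}^{k}T_j,$$
after an interchange of summation (equivalently, using $\inspect{P_0}=0$ and $\inspect{P_i}\leq T_i$). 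Hence the continuous average is at most $\frac{k+1}{k}s=\left(1+\frac{1}{k}\right)s$, as required.

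The step I expect to be the main obstacle is the careful bookkeeping in the feasibility argument, specifically treating $A_0$ (given by its Cartesian coordinates rather than directly as $L_0(t_0)$ for a specified $t_0$) on equal footing with the other segment endpoints, and making sure the required parameter lower bounds line up with those demanded by Lemma~\ref{lem: inspection of sector}. Once feasibility is established, the analytical part reduces to a clean comparison of a Riemann sum with its corresponding integral, with the factor $(k+1)/k$ arising directly from the different denominators ($k+1$ discrete points versus $k$ subintervals of equal length).
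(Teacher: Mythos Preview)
Your proposal is correct and follows essentially the same route as the paper's proof: feasibility via Lemma~\ref{lem: inspection of sector} applied to each segment $A_{j-1}\to A_j$ (with $k\geq 5$ ensuring $\gamma<\pi/2$), and the average-case bound via the same Riemann-sum comparison yielding the factor $(k+1)/k$. Your explicit verification that $A_0=L_0(\tan\theta)$ with $\tan\theta\geq 0$ is in fact more careful than the paper, which applies Lemma~\ref{lem: inspection of sector} without singling out this endpoint; the only detail you omit is the degenerate case $c=2\theta$, which is immediate from Lemma~\ref{lem: inspection range of first point}.
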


\begin{proof}
Fix $k \geq 5$ and $t_1,\ldots,k$ 
satisfying $t_i \geq \tann{\frac{c-2\theta}k}, i=1,\ldots,k$
along with the trajectory \textsc{PS}$(t_1,\ldots,t_k)$ which can be represented as $A_0\rightarrow A_1\rightarrow \ldots \rightarrow A_k$. 
First, we argue that the trajectory is feasible to \ashoreline{\infty}{\theta}{c}. For this, note that if $c=2\theta$, then by Lemma~\ref{lem: inspection range of first point} the only point that needs to be inspected, i.e. $\mathcal P_{2\theta}$, is indeed inspected due to $A_0$. Hence, we may assume that $c>2\theta$. 
We need to argue that all points $\mathcal P_\phi$, where $\phi \in [2\theta,c]$, are inspected. For this, observe that the interval can be partitioned into the (non-disjoint) intervals $[\phi_{i-1}, \phi_{i}]$ with $\phi_{i} - \phi_{i-1} = \frac{c-2\theta}k \leq \frac{2\pi}5<\pi/2$. 
Therefore by Lemma~\ref{lem: inspection of sector}, each such subinterval is inspected by the movement $A_{i-1}\rightarrow A_{i}$, where $i=1,\ldots,k$.
Hence the trajectory is feasible to \ashoreline{\infty}{\theta}{c}.

Next we provide an upper bound to the cost of the trajectory, as a function of the average case cost of the same trajectory to \ashoreline{k}{\theta}{c}. For this, we recall from the argument above that for every $\phi \in [\phi_{i-1},\phi_i]$, point $\mathcal P_{\phi}$ is inspected during the movement $A_{i-1}\rightarrow A_{i}$, ad hence for these points we have that $\inspect{\mathcal P_\phi}\leq \sum_{j=1}^i \|A_{j-1}A_{j}\|$. Noting also that $\inspect{\mathcal P_\phi}>0$ for all $\phi \in [2\theta,c]$, allows to calculate the average cost of the trajectory to \ashoreline{\infty}{\theta}{c} (as a function of the average case performance $s$ to \ashoreline{k}{\theta}{c}) as
\begin{align}
\notag \frac{1}{c-2\theta} \int_{2\theta}^c \inspect{\mathcal P_\phi} \d{\phi}
&= 
\frac{1}{c-2\theta} 
\sum_{i=1}^{k} \int_{\phi_{i-1}}^{\phi_i} \inspect{\mathcal P_\phi} \d{\phi} \\
\notag
&\leq  
\frac{1}{c-2\theta} 
\sum_{i=1}^{k} \int_{\phi_{i-1}}^{\phi_i} \sum_{j=1}^i \|A_{j-1}A_{j}\|  \d{\phi} \\
\notag
&= 
\frac{1}{c-2\theta} 
\sum_{i=1}^{k} \left( \phi_{i} - \phi_{i-1} \right) \sum_{j=1}^i \|A_{j-1}A_{j}\|   \\
&= 
\frac{1}{c-2\theta} 
\sum_{i=1}^{k} \left( \phi_{k} - \phi_{i-1} \right) \|A_{i-1}A_{i}\|.
\label{equa: upper bound to avg using discrete}
\end{align}
Finally, we see that 
$$
\phi_{k} - \phi_{i-1}
=
c - \left(
2\theta+\frac{c-2\theta}{k} (i-1)
\right)
= 
(c-2\theta)
\left(
1-\frac{i-1}{k}
\right)
$$
Therefore, continuing from~\eqref{equa: upper bound to avg using discrete}, we see that an upper bound to average case performance to \ashoreline{\infty}{\theta}{c} is 
$$
\sum_{i=1}^{k}  \left(
1-\frac{i-1}{k}
\right) \|A_{i-1}A_{i}\|
=
\frac{1}{k}\sum_{i=1}^{k}  \left(k-i+1
\right) \|A_{i-1}A_{i}\|
= \frac{k+1}{k} s,
$$
where the last equality follows from Lemma~\ref{lem: polysegment performance for discrete}.
\qed  \end{proof}

\subsection{Upper Bounds to the Average Case Inspection of \pshoreline{c} and \shoreline{n}}
\label{sec: solution to avg partial and shoreline-n}

In this section we introduce a Non Linear Program~\eqref{NLP-avg} for finding a feasible \textsc{ExtendedPolySegment} trajectory to \pshoreline{c} and for minimizing its average case performance. The algorithm also takes as input some $k \geq 5$ and $c \in [0, 2\pi]$, ensuring that the points to be inspected on the disk's perimeter are less than $\pi/2$ apart. In fact, the values of $k$ we use are on the order of thousands, and the higher they are, the better our results for the continuous problem.

For fixed $k,\theta$, and using the notation introduced at the beginning of Section~\ref{sec: Some Preliminary Observations}, we write explicit formulas for points $A_0,A_1, \ldots,A_k$ used by algorithm \textsc{EPS}$(\theta,t_1, \ldots, t_k)$ (see the formal description in~\ref{alg: extendedpolysegment}) with respect to the NLP variables $\theta \in \reals$ and $t\in \reals^k$. For this, we also recall that $A_i = L_{\phi_i}(t_i)$ for $i=1,\ldots,k$, and that $A_0=(1,\tann{\theta})$. 

Therefore, it is easy to see that 
\begin{align*}
\|A_0A_1\|^2 &=
(-1 + c_{\theta,k}' +  s_{\theta,k}' t_1)^2 + (-s_{\theta,k}' + c_{\theta,k}' t_1 + \tan(\theta))^2, \\
\|A_iA_{i+1}\|^2 &=2 - 2 c_{\theta,k} + 2 s_{\theta,k} t_i + t_i^2 - 2 (s_{\theta,k} + c_{\theta,k} t_i)  t_{i+1} + t_{i+1}^2,
\end{align*}
where
\begin{align*}
c_{\theta,k}' &= \cos((2  (\pi + (-1 + k) \theta)) / k) \\
s_{\theta,k}' &= \sin((2  (\pi + (-1 + k) \theta)) / k), \\
c_{\theta,k} &= \cos((2  (\pi - \theta)) / k), \\
s_{\theta,k} &= \sin((2  (\pi - \theta)) / k).
\end{align*}
We are now ready to write down the promised NLP, which, based on the observations above, is defined on the variables $t \in \mathbb{R}^k$ and $\theta \in \mathbb{R}$. The value of $\epsilon > 0$ is provided as part of the input and is chosen to be sufficiently small, ensuring that the expressions are well-defined over the domain of $\theta$.

\begin{mdframed}
\begin{align*}
\min_{t\in \reals^k, \theta\in \reals} & ~~
\frac{1}{c}
\log\left( \frac{1+\sin(\theta)}{1-\sin(\theta)}\right) 
+
\left(1+ \frac{1}{k} \right)
\left(1- \frac{2\theta}{c} \right)
\left(
\|OA_0\| + \frac{1}{k+1}\sum_{i=0}^{k-1} (k-i) \|A_iA_{i+1}\|
\right)
			\tag{NLP-avg($k,c,\epsilon$)}	\label{NLP-avg} \\
\text{s.t.} & ~~ \sin\left(\frac{c-2\theta}{k}\right) t_i \geq 1-\cos\left(\frac{c-2\theta}{k}\right) , ~~i=1,\ldots,k, \\
& ~~ 0 \leq \theta \leq  \min\left\{\frac{(1-\epsilon)c}{2},\frac{\pi}{2}-\epsilon\right\}.
\end{align*}
\end{mdframed}

\begin{lemma}
\label{lem: NLP feasible to p(c) and cost is correct}
Fix $\epsilon>0$, small enough. 
For any feasible solution $(\theta,t)\in \reals^k$ to~\eqref{NLP-avg} with $k\geq 5$,
\textsc{EPS}$(\theta,t_1,\ldots,t_k)$ is a feasible inspection algorithm to \pshoreline{c}. 
Moreover the objective to the NLP quantifies the corresponding average case cost of the algorithm. 
\end{lemma}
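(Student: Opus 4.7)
The proof is essentially a verification: the constraints of~\eqref{NLP-avg} were crafted so that any feasible point $(\theta,t)$ satisfies the hypotheses of the reduction chain developed earlier in this section. My plan is to first unpack the constraints in order to establish feasibility, then chain the performance bounds of Lemmas~\ref{lem: polysegment performance for discrete}, \ref{lem: polysegment performance for continuous avg}, and~\ref{lem: reduction from discrete to continuous} to identify the NLP objective as a valid upper bound on the average inspection cost.

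For feasibility, I would first observe that the domain constraint $0\leq \theta \leq \min\{(1-\epsilon)c/2,\,\pi/2-\epsilon\}$ yields both $\theta \leq c/2$ and $\theta<\pi/2$, matching the standing hypotheses of Lemma~\ref{lem: reduction from discrete to continuous}. Since $c-2\theta \geq \epsilon c >0$ and $k\geq 5$, the angle $(c-2\theta)/k$ lies in $(0,2\pi/5)$, hence $\sin((c-2\theta)/k)>0$. Dividing the inequality $\sin((c-2\theta)/k)\,t_i \geq 1-\cos((c-2\theta)/k)$ by this positive quantity recovers exactly the threshold on $t_i$ required by Lemma~\ref{lem: polysegment performance for continuous avg}. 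That lemma then guarantees that \textsc{PS}$(t_1,\ldots,t_k)$ is feasible for \ashoreline{\infty}{\theta}{c}, and Lemma~\ref{lem: reduction from discrete to continuous feasibile} lifts this to feasibility of \textsc{EPS}$(\theta,t_1,\ldots,t_k)$ for \pshoreline{c}, proving the first claim.

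For the cost, I would chain the performance bounds as follows. Lemma~\ref{lem: polysegment performance for discrete} gives the exact discrete average cost $s=\frac{1}{k+1}\sum_{j=1}^k (k-j+1)\|A_{j-1}A_j\|$ on \ashoreline{k}{\theta}{c}. Lemma~\ref{lem: polysegment performance for continuous avg} then bounds the average cost on \ashoreline{\infty}{\theta}{c} by $(1+1/k)s$. Plugging this into Lemma~\ref{lem: reduction from discrete to continuous avg}, together with $\|OA_0\|=1/\cos\theta$, yields the upper bound
\[
\frac{1}{c}\log\!\left(\frac{1+\sin\theta}{1-\sin\theta}\right) + \left(1-\frac{2\theta}{c}\right)\!\left(\frac{1}{\cos\theta} + \left(1+\frac{1}{k}\right)s\right)
\]
on the average inspection cost of \textsc{EPS}$(\theta,t_1,\ldots,t_k)$ for \pshoreline{c}. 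Re-indexing $i=j-1$ aligns the inner sum with the one in~\eqref{NLP-avg}, and bounding the standalone $1/\cos\theta$ from above by $(1+1/k)/\cos\theta$ produces exactly the NLP objective, which is therefore a valid (and only slightly loosened) upper bound.

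The argument is largely bookkeeping; no new geometric insight is required. The main pitfall is algebraic: one must carefully track the multiplicative constants $1+1/k$, $1/(k+1)$, and $1/k$ to see that they collapse into the coefficients appearing in~\eqref{NLP-avg}, and confirm the positivity of $\sin((c-2\theta)/k)$ that allows the linear constraint to be rewritten as the $t_i$-threshold demanded by Lemma~\ref{lem: polysegment performance for continuous avg}.
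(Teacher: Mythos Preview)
Your proposal is correct and follows essentially the same approach as the paper: both argue feasibility by reading off the $t_i$-threshold from the NLP constraint and invoking Lemma~\ref{lem: polysegment performance for continuous avg} and Lemma~\ref{lem: reduction from discrete to continuous feasibile}, then obtain the cost bound by chaining Lemmas~\ref{lem: polysegment performance for discrete}, \ref{lem: polysegment performance for continuous avg}, and~\ref{lem: reduction from discrete to continuous avg}. You are in fact slightly more careful than the paper, explicitly noting that $\sin((c-2\theta)/k)>0$ is needed to divide through the constraint, and that the extra $(1+1/k)$ factor on the $1/\cos\theta$ term is a harmless loosening that matches the NLP objective exactly.
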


\begin{proof}
For a solution $(\theta,t)\in \reals^k$ to~\eqref{NLP-avg}, we observe that 
$$t_i \geq (1-\coss{\frac{c-2\theta}k})/\sinn{\frac{c-2\theta}k},$$
for $i=1,\ldots,k$. 
By Lemma~\ref{lem: polysegment performance for continuous avg}
the trajectory \textsc{PS}$(t_1,\ldots,t_k)$ is feasible to \ashoreline{\infty}{\theta}{c}, and if its average case cost to \ashoreline{k}{\theta}{c} equals $s$, then the average case cost to \ashoreline{\infty}{\theta}{c} is at most $\left(1+\frac{1}{k}\right) s$. 

Then, by Lemma~\ref{lem: reduction from discrete to continuous}, we obtain that \textsc{EPS}$(\theta,t_1,\ldots,t_k)$ is also feasible to \pshoreline{c}, with overall cost 
$\frac{1}{c} 
\log \left( \frac{1+\sinn{\theta}}{1-\sinn{\theta}} \right)
+
\left(1+\frac{1}{k}\right)\left(1 - \frac{2\theta}{c} \right) s
$.
Finally, Lemma~\ref{lem: polysegment performance for discrete} computes quantity $s$, which results to the objective of~\eqref{NLP-avg}, as promised. 
\qed  \end{proof}

We are ready to provide upper bounds to the average inspection cost for \pshoreline{c}. 
\begin{theorem}
\label{thm: upper bounds to average inspection partial}
\pshoreline{c} admits average inspection cost solution as depicted in Figure~\ref{fig: partialAVGperformance}.
\end{theorem}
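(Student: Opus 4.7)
The plan is to exhibit, for each $c\in[0,2\pi]$, an explicit \textsc{EPS}-type trajectory whose average inspection cost matches the value plotted in Figure~\ref{fig: partialAVGperformance}, using Lemma~\ref{lem: NLP feasible to p(c) and cost is correct} as the bridge between a numerically computed NLP solution and a provably feasible continuous inspection trajectory.

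First, I would fix a discretization parameter $k$ large (in practice on the order of thousands, but in any case $k\geq 5$ so that Lemma~\ref{lem: polysegment performance for continuous avg} applies) and a sufficiently small slack $\epsilon>0$ so that~\eqref{NLP-avg}($k,c,\epsilon$) is well defined over its domain. Then, for each $c$ in a fine grid of $[0,2\pi]$, I would solve~\eqref{NLP-avg}($k,c,\epsilon$) numerically, obtaining an optimizer $(\theta^\star,t_1^\star,\ldots,t_k^\star)$. By Lemma~\ref{lem: NLP feasible to p(c) and cost is correct}, the associated trajectory \textsc{EPS}$(\theta^\star,t_1^\star,\ldots,t_k^\star)$ is feasible for \pshoreline{c}, and the objective value of the NLP is a valid upper bound on its average inspection cost. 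Plotting these objective values against $c$ produces the curve in Figure~\ref{fig: partialAVGperformance}.

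To cover corner cases, I would check that the trivial regime $c=2\theta$ (where $A_0$ alone inspects the required arc by Lemma~\ref{lem: inspection range of first point}) agrees with the NLP's optimum, and that for $c=2\pi/n$ with $n\in\{1,2\}$ the returned values match the entries of Theorem~\ref{thm: general avg upper bound}, while for $c=2\pi/n$ with $n\geq 3$ the degenerate \textsc{EPS} with a single segment $OA_0$ and $\theta^\star=\pi/n$ recovers the closed-form expression $\tfrac{n}{2\pi}\log\bigl(\tfrac{1+\sinn{\pi/n}}{1-\sinn{\pi/n}}\bigr)$ directly from Lemma~\ref{lem: reduction from discrete to continuous avg} (with $s_2$ interpreted as zero, since only $P_0$ needs inspecting).

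The principal obstacle is that~\eqref{NLP-avg} is nonconvex, so a numerical solver only certifies a local optimum; this is why the statement is phrased as an upper bound and not as an exact characterization. I would mitigate this with multiple random restarts and by warm-starting each NLP with the optimizer of a nearby value of $c$, thereby tracking a continuous branch of trajectories as $c$ varies. Moreover, Lemma~\ref{lem: polysegment performance for continuous avg} ensures that the discretization overhead is a multiplicative factor of at most $1+1/k$ between the continuous objective of~\eqref{NLP-avg} and the cost actually attainable by the \textsc{EPS} family, so by choosing $k$ large the plotted upper bound is arbitrarily close to the best trajectory in this family.
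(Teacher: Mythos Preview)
Your proposal is correct and follows essentially the same approach as the paper: solve~\eqref{NLP-avg} numerically on a grid of $c$-values with large $k$, then invoke Lemma~\ref{lem: NLP feasible to p(c) and cost is correct} to certify each returned \textsc{EPS} trajectory as a feasible upper bound for \pshoreline{c}. The paper additionally pinpoints a threshold $c\approx 2.39$ below which the NLP's $\theta$-constraint binds and the optimum degenerates to the single segment $O\rightarrow(1,\tan(c/2))$, reporting the closed form $\tfrac{1}{c}\log\bigl(\tfrac{1+\sin(c/2)}{1-\sin(c/2)}\bigr)$ exactly in that regime; your ``corner case'' discussion captures the same phenomenon, just without naming the numerical threshold.
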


\begin{proof}[of Theorem~\ref{thm: upper bounds to average inspection partial}]
For $c$ from $0.01$ up to $2\pi$ and with step size $0.01$, we compute a numerical solution to \eqref{NLP-avg} with $k = 1000$ and $\epsilon = 10^{-3}$, and we apply Lemma~\ref{lem: NLP feasible to p(c) and cost is correct}. The numerical solutions were obtained using Julia's JuMP and Ipopt~\cite{Ipopt,dunning2017jump}, which use an interior point method for solving NLPs and terminate with a certificate of local optimality. Hence, even though a general proof of global optimality eludes us, the reported results are provably locally optimal.

It is important to note that all upper bounds were obtained by trajectories \textsc{EPS}$(\theta,t_1,\ldots,t_k)$, which were originally designed for the discrete problem of inspecting $\Theta(k)$ disk points. Figure~\ref{fig: worstcaseOptDeploymentAngle} on page~\pageref{fig: worstcaseOptDeploymentAngle}shows the reported (minimizer) deployment angle $\theta=\theta(c)$, which exhibits discontinuity around $c\approx 2.39$. For all values of the reported optimizer $\theta(c)$, we see that the constraint $\theta \leq \min\{(1-\epsilon)c/2,\pi/2-\epsilon\}$ is satisfied with slack for $c\geq 2.39$, and hence the value of $\epsilon$ does not affect the optimizer (also $\epsilon$ was introduced only for technical reasons in order to be able to solve the NLP) for these values of $c$. For $c\leq 2.39$, the constraint $\theta \leq (1-\epsilon)c/2$ becomes tight, and then we report the abstract average inspection cost 
$$
\frac{1}{c} \log \left( \frac{1+\sinn{c/2}}{1-\sinn{c/2}} \right),
$$ 
which we obtain by using $\theta = c/2$ in Lemma~\ref{lem: reduction from discrete to continuous}, part (c). The latter corresponds to the average inspection cost of algorithm \textsc{EPS}$(c/2,t_1,\ldots,t_k)$, where the algorithm's trajectory simplifies to $O\rightarrow A=(1,\tann{c/2})$, and hence the value of $k$ is irrelevant.

Notably, we also observe a discontinuity in the reported average inspection cost at $c=2.39$ (by an additive term of less that $10^{-3}$). This is because the analysis for $c \geq 2.39$ is formal and abstract, whereas for smaller values of $c$, we rely on numerical computations, in which explicitly, we pay an additional factor of $1+1/k$ in the objective as mandated by Lemma~\ref{lem: polysegment performance for continuous avg}, and which is the result of converting a trajectory that was originally designed for the discrete inspection problem to the continuous inspection problem. 

Finally, it is interesting to observe that there is no value of $c$ for which the optimizers $t_i=t_i(c)$ are close to being tight to the required lower bound of $\left(1-\coss{\frac{c-2\theta}k}\right)/\sinn{\frac{c-2\theta}k}=O(1/k)$. Indeed, $\min_i t_i(c)$ is decreasing in $c$, and for $c=2\pi$, we find that $\min_i t_i \geq 0.2$. 

We display some optimizers $t\in \reals^k$ for certain values of $c$ in 
Figure~\ref{fig: tvaluesn n=1,2} on page~\pageref{fig: tvaluesn n=1,2} and in Figure~\ref{fig: t values 5,6,7pOver4} on page~\pageref{fig: t values 5,6,7pOver4}, where we see that they are bounded away from $0$. 
The corresponding inspection trajectories, which are far from touching the perimeter of the disk, are depicted in Figure~\ref{fig: avg case trajectories n=1,2} (on page~\pageref{fig: avg case trajectories n=1,2}), and in Figure~\ref{fig: trajectories 5,6,7pOver4} (on page~\pageref{fig: trajectories 5,6,7pOver4}).
\qed  \end{proof}

We conjecture that our reported average inspection times are optimal. Interestingly, from our numerical results, we see that the reported average inspection cost is eventually very close to be linear, i.e. it is well approximated by $0.5395026c+0.1637722$ for $c\geq 4.3$. 
Some indicative and notable trajectories solving \pshoreline{c} and proving Theorem~\ref{thm: upper bounds to average inspection partial} are those corresponding to $c=2\pi$ and $c=\pi$, see Figure~\ref{fig: avg case trajectories n=1,2} on page~\pageref{fig: avg case trajectories n=1,2}, and to $c=i \frac{\pi}4$ for $i=5,6,7$, see Figure~\ref{fig: trajectories 5,6,7pOver4}.

\begin{figure}[h!]
    \centering
        \includegraphics[width=7cm]{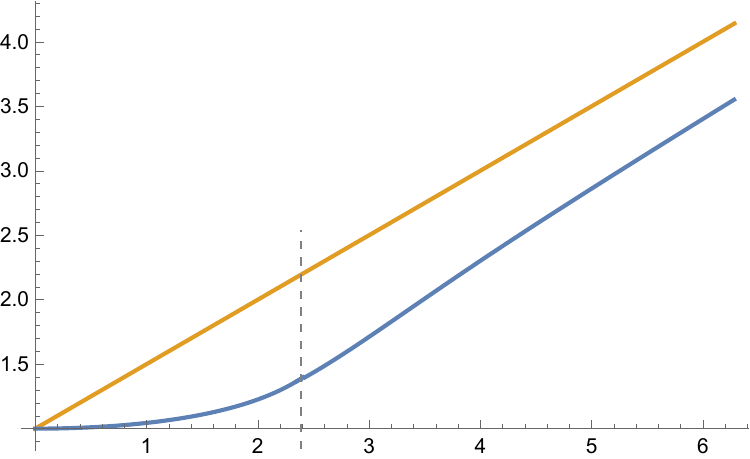}
        \caption{The reported average inspection cost (in blue) of algorithm \textsc{EPS}$(\theta,t_1,\ldots,t_k)$ as per Theorem~\ref{thm: upper bounds to average inspection  partial}. For values $c\leq 2.39$, our solution admits an analytic formula for the average cost. 
        For comparison, we also display in yellow the performance $1+c/2$ of the plain-vanilla randomized algorithm.}
        \label{fig: partialAVGperformance}
\end{figure}

\begin{figure}[h!]
    \centering
    \begin{subfigure}[t]{0.32\textwidth}
        \centering
        \includegraphics[width=5cm]{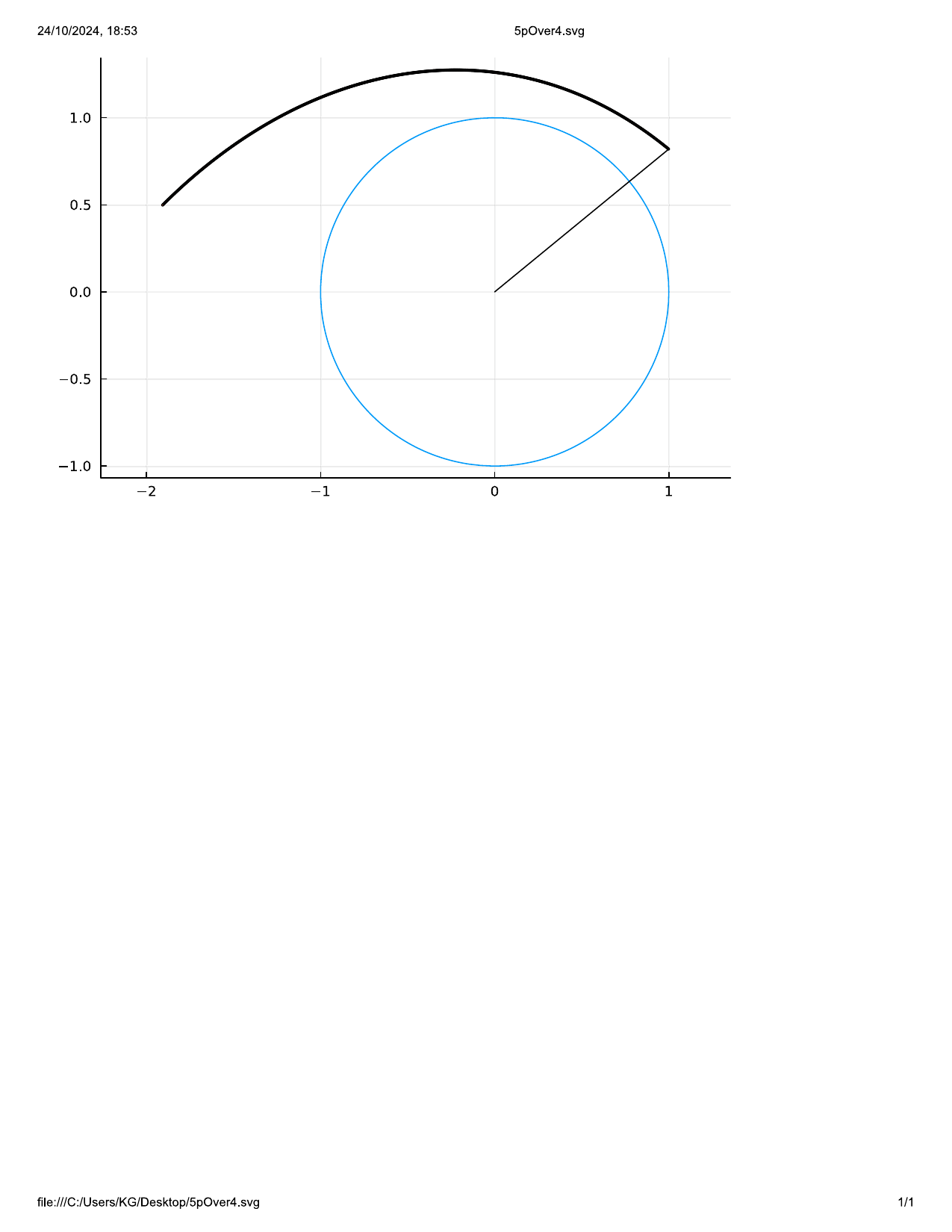}
        \caption{The trajectory \textsc{EPS}$(\theta,t_1,\ldots,t_k)$ solving \pshoreline{5\pi/4}. The corresponding $t_i$ values are seen in Figure~\ref{fig: 5pOver4-t}.}
        \label{fig: 5pOver4}
    \end{subfigure}
    \hfill
    \begin{subfigure}[t]{0.32\textwidth}
        \centering
        \includegraphics[width=4.8cm]{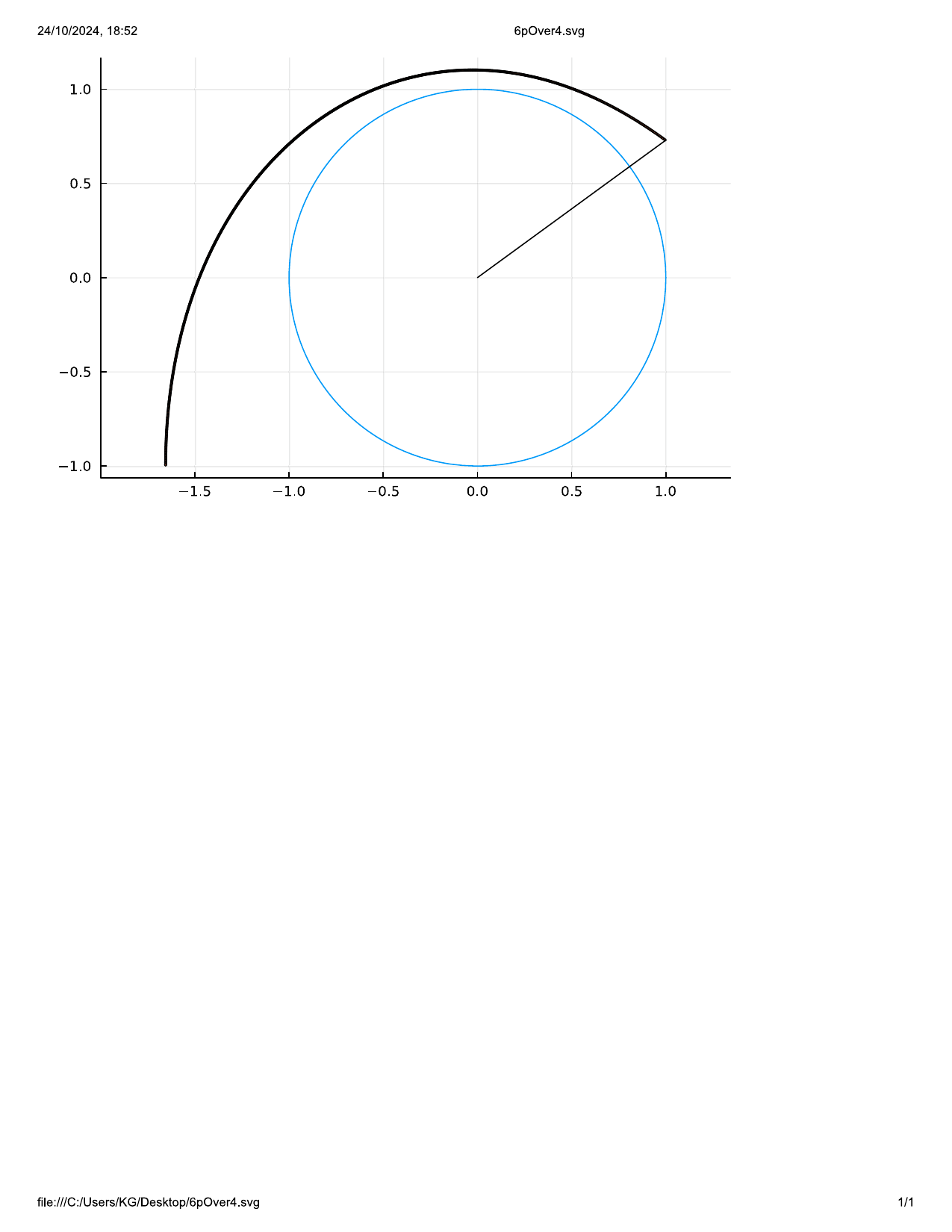}
        \caption{The trajectory \textsc{EPS}$(\theta,t_1,\ldots,t_k)$ solving \pshoreline{3\pi/2}. The corresponding $t_i$ values are seen in Figure~\ref{fig: 6pOver4-t}.}  
\label{fig: 6pOver4}
    \end{subfigure}
    \hfill
    \begin{subfigure}[t]{0.32\textwidth}
        \centering
        \includegraphics[width=4.8cm]{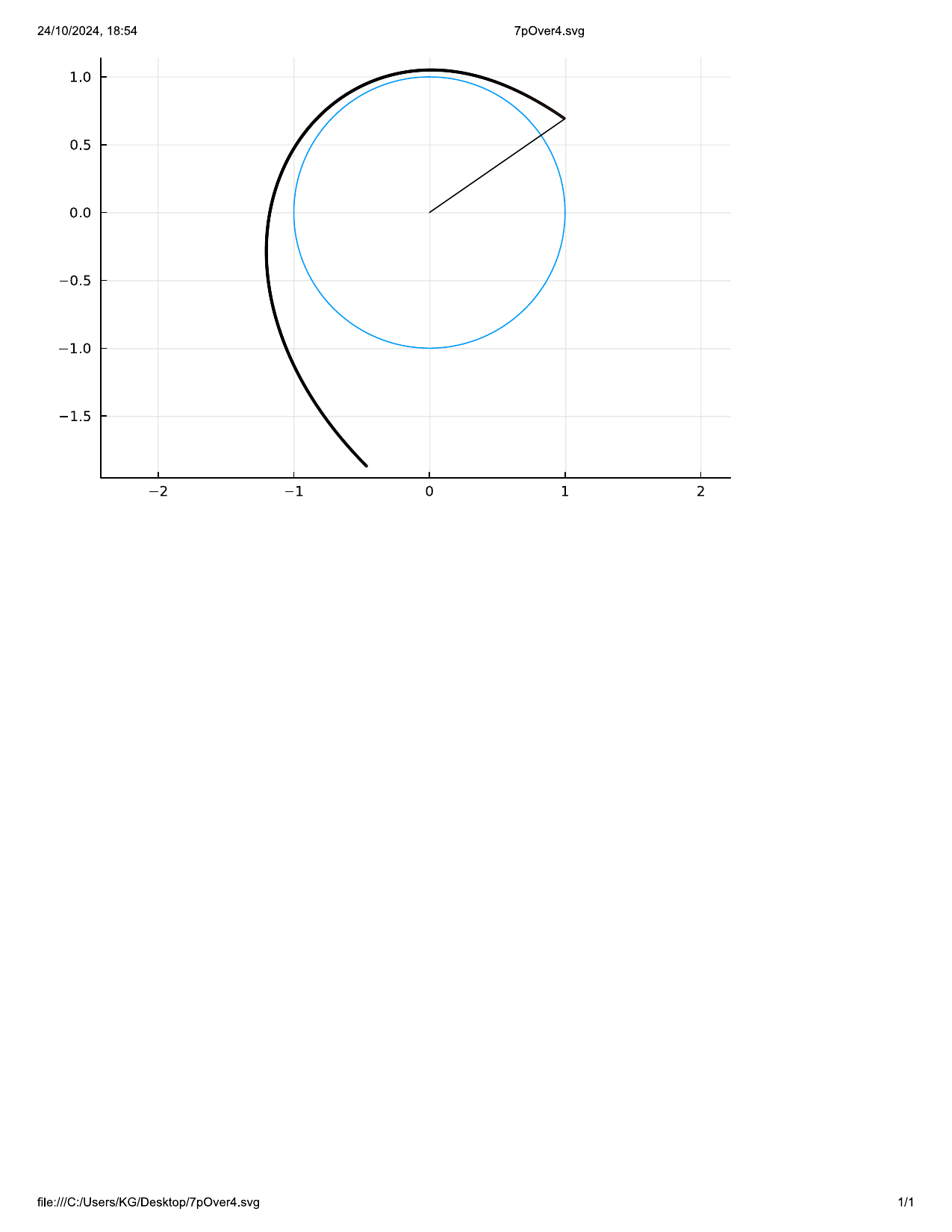}
        \caption{The trajectory \textsc{EPS}$(\theta,t_1,\ldots,t_k)$ solving \pshoreline{7\pi/4}. The corresponding $t_i$ values are seen in Figure~\ref{fig: 7pOver4-t}.}  
\label{fig: 6pOver4}
\label{fig: 7pOver4}
    \end{subfigure}
    \caption{Trajectories for minimizing the average inspection cost for \pshoreline{c}, $c=i \frac{\pi}4$, where $i=5,6,7$.     }
    \label{fig: trajectories 5,6,7pOver4}
\end{figure}


We are now ready to prove Theorem~\ref{thm: general avg upper bound}.

\begin{proof}[of Theorem~\ref{thm: general avg upper bound}]
The upper bounds on the average inspection cost for \shoreline{n} are obtained from the upper bounds of \pshoreline{c} in Theorem~\ref{thm: upper bounds to average inspection partial}, by setting $c=2\pi/n$. Since all agents are identical, each is assigned an arc of length $2\pi/n$ to inspect, and the $n$ agents' trajectories are obtained by rotating one agent's trajectory by an angle of $2\pi/n$, $n-1$ times.

We note that for $n\geq 3$ agents, the length $c=c(n)$ of the arc that each agent inspects satisfies $c\leq 2.39$, and as such the inspection trajectories we report have average cost 
$
\frac{1}{c} \log \left( \frac{1+\sinn{c/2}}{1-\sinn{c/2}} \right),
$
as per Lemma~\ref{lem: reduction from discrete to continuous}, part (c). More specifically, one agent's trajectory is given by the line segment $OA$, where $A=(1,\tann{c/2})$, and all other trajectories are rotations of this segment by $2\pi/n$.

It follows that only for $n=1,2$ are the trajectories more elaborate, requiring the numerical solution to \eqref{NLP-avg} for $c=2\pi$ and $c=\pi$, and for large enough $k$. In that direction, we refine the results reported in Theorem~\ref{thm: upper bounds to average inspection partial}, and we solve \eqref{NLP-avg} for $k=2000$ inspection points to obtain slightly improved upper bounds, as shown in Table~\ref{thm: general avg upper bound} on page~\pageref{thm: general avg upper bound}.

In Figure~\ref{fig: avg case trajectories n=1,2} on page~\pageref{fig: avg case trajectories n=1,2}, we depict the derived trajectory for \pshoreline{2\pi} and \pshoreline{\pi}, corresponding to solutions to \shoreline{1} and \shoreline{2}, respectively. In Figure~\ref{fig: tvaluesn n=1,2}, we show the corresponding values of $t_i$ returned by~\eqref{NLP-avg} which are bounded away from $0$. Finally, the values returned by the NLP for the angle $\theta$ for \shoreline{1} and \shoreline{2} were $\theta=0.5910554, 0.8054878$, respectively (while for $n\geq 3$, we use $\theta=\pi/n$ as explained before).
\qed  \end{proof}

\section{Worst/Average Case Inspection Trade-offs to \shoreline{1}}
\label{sec: tradeoffs}

The purpose of this section is twofold. First, we demonstrate that the upper bound we achieve for the average inspection cost of \shoreline{1} using \textsc{EPS}$(\theta, t_1, \ldots, t_k)$ represents a significant improvement over previously known trajectories. Second, we show that our techniques also address the multi-objective optimization problem of simultaneously minimizing the worst-case and average-case inspection times. This is achieved by considering the trade-off problem~\tradeoff{\lambda}, leading to the proof of one of our main contributions, Theorem~\ref{thm: pareto bounds}, which establishes Pareto upper bounds for the multi-objective inspection problem.

\subsection{Pareto Upper Bounds}
\label{sec: pareto upper bounds}

We begin by considering the \emph{Isbell-type} family of trajectories for \shoreline{1}, introduced in~\cite{isbell1957optimal} as part of Theorem~\ref{thm: original Isbell}, parameterized by the deployment angle $\theta$. Figure~\ref{fig: convexcurve2} shows this trajectory for \pshoreline{c}. This family of trajectories also appears in Theorem~\ref{thm: optimal worst case partial inspection sol}, where it establishes the optimal worst-case inspection solution for \pshoreline{c} with $c \in [5\pi/6, 2\pi]$ (where \pshoreline{2\pi} corresponds to \shoreline{1}).

In the \emph{Isbell-type} trajectory with parameter $\theta \in [0, \pi/2)$, we define the point $A_0 = (1, \tann{\theta})$. The agent starts from the center of the disk and moves to $A_0$ (see Figure~\ref{fig: convexcurve2} on page~\pageref{fig: convexcurve2} and Figure~\ref{fig: trajectoryl1} on page~\pageref{fig: trajectoryl1}). The agent then follows the tangent point $D' = \mathcal P_{2\theta}$ on the disk, searches the disk counterclockwise to point $D = \mathcal P_{3\pi/2}$, and finally moves along the projection to the tangent line at $\mathcal P_{2\pi}$, given by the line $x = 1$.

\begin{lemma}
\label{lem: isbell performance}
For every $\theta \in [0,\pi/2)$, the Isbell-type trajectory with deployment angle $\theta$ has the following inspection performance. 
\begin{enumerate}[label=(\alph*)]
\item The worst case inspection equals $I_W(\theta) := 1/\coss{\theta}+\tann{\theta}+3\pi/2-2\theta+1$.
\item The average case inspection cost equals 
$$
I_A(\theta):=
\frac1{2\pi}
\left( 
\log \left( \frac{1+\sinn{\theta}}{1-\sinn{\theta}} \right)
+
2 \theta ^2-4 \pi  \theta +2 (\pi -\theta ) \tan (\theta )+\frac{2 (\pi -\theta )}{\coss{\theta}}+\frac{15 \pi ^2}{8}+
\log 2
\right).
$$
\end{enumerate}
\end{lemma}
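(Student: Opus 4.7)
The plan is to decompose the Isbell-type trajectory into its four segments and compute the inspection time of each perimeter point $\mathcal P_\phi$ by determining which segment is responsible for inspecting it.

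\textbf{Part (a): Worst-case cost.} I will simply sum the four segment lengths. The segment $OA_0$ has length $1/\coss{\theta}$ since $A_0=(1,\tann{\theta})$. The segment $A_0 D'$, tangent to the disk at $D'=\mathcal P_{2\theta}$, has length $\tann{\theta}$, which follows by a direct computation (using that $OD'\perp A_0D'$ and $|OA_0|=1/\coss{\theta}$, so $|A_0D'|=\sqrt{|OA_0|^2-1}=\tann{\theta}$). The arc from $D'=\mathcal P_{2\theta}$ to $D=\mathcal P_{3\pi/2}$ has length $3\pi/2-2\theta$, and the final perpendicular segment from $D=(0,-1)$ to $Z=(1,-1)$ on the line $x=1$ has length $1$. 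Adding gives $I_W(\theta)$.

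\textbf{Part (b): Average-case cost.} I will partition $[0,2\pi]$ into three regions and identify $\inspect{\mathcal P_\phi}$ in each.
First, for $\phi\in[0,2\theta]$, the point $\mathcal P_\phi$ is inspected during the initial segment $OA_0$; by the argument inside the proof of Lemma~\ref{lem: reduction from discrete to continuous}(c), the inspection time is the distance from $O$ to the intersection of $OA_0$ with $\mathcal L_\phi$, which equals $1/\coss{\theta-\phi}$. One must observe that the subsequent tangent segment $A_0D'$ contributes no new inspections (points on this segment already lie on the far side of $\mathcal L_\phi$ for $\phi\in[0,2\theta]$), so the formula is valid.
Second, for $\phi\in[2\theta,3\pi/2]$, the agent (while on the disk boundary with $r=1$) inspects only $\mathcal P_\phi$ itself at the moment it reaches $\mathcal P_\phi$; hence $\inspect{\mathcal P_\phi}=1/\coss{\theta}+\tann{\theta}+(\phi-2\theta)$.
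Third, for $\phi\in[3\pi/2,2\pi]$, the tangent $\mathcal L_\phi$ meets the line $y=-1$ at $x=(1+\sinn{\phi})/\coss{\phi}$, giving $\inspect{\mathcal P_\phi}=1/\coss{\theta}+\tann{\theta}+(3\pi/2-2\theta)+(1+\sinn{\phi})/\coss{\phi}$.

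\textbf{Integration and simplification.} Writing $2\pi\, I_A(\theta)=\int_0^{2\pi}\inspect{\mathcal P_\phi}\,d\phi$ and splitting along the three regions, the three nontrivial integrals to evaluate are
\[
\int_0^{2\theta}\tfrac{d\phi}{\coss{\theta-\phi}},\qquad \int_{2\theta}^{3\pi/2}(\phi-2\theta)\,d\phi,\qquad \int_{3\pi/2}^{2\pi}\tfrac{1+\sinn{\phi}}{\coss{\phi}}\,d\phi.
\]
The first equals $\log\bigl((1+\sinn{\theta})/(1-\sinn{\theta})\bigr)$ (as already recorded in Lemma~\ref{lem: reduction from discrete to continuous}(c)). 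The second is elementary. For the third, the substitution $\phi=2\pi-\psi$ rewrites the integrand as $\secc{\psi}-\tann{\psi}$ whose antiderivative is $\log(1+\sinn{\psi})$, yielding the clean value $\log 2$. After combining the coefficients of $\secc{\theta}+\tann{\theta}$ (from segments (b) and (c) they sum to $2(\pi-\theta)$) and the constant terms ($9\pi^2/8+3\pi^2/4=15\pi^2/8$), dividing by $2\pi$ produces precisely $I_A(\theta)$ as stated.

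\textbf{Anticipated difficulty.} The main subtlety is not the algebra but the case-by-case justification that the formulas for $\inspect{\mathcal P_\phi}$ are correct on each subinterval, in particular that the tangent segment $A_0D'$ introduces no first inspections and that nothing in $[3\pi/2,2\pi]$ is seen prematurely during the arc traversal. Both facts follow from Lemma~\ref{lem: inspection on tangent line} applied to the appropriate tangent line, but need to be explicitly noted. The evaluation of $\int_{3\pi/2}^{2\pi}(1+\sinn{\phi})/\coss{\phi}\,d\phi$ as $\log 2$ — which is where the stray $\log 2$ in the statement comes from — is the only integral that is not immediate, and the substitution above makes it routine.
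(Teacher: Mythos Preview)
Your proposal is correct and follows essentially the same three-phase decomposition as the paper's own proof. The only cosmetic difference is in the final phase: the paper expresses the distance along $y=-1$ as $\tann{\phi/2-3\pi/4}$ via an isoceles-triangle argument and integrates using $\int\tann{x}\,dx=-\log\coss{x}$, whereas you write the same quantity as $(1+\sinn{\phi})/\coss{\phi}$ from the tangent-line equation and evaluate with the substitution $\phi=2\pi-\psi$; the two expressions are equal by a half-angle identity and both integrals give $\log 2$.
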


\begin{proof}
\begin{enumerate}[label=(\alph*)]
\item In Theorem~\ref{thm: general wrs optimal}, we analyzed the performance of the generalized Isbell-type algorithm for solving \pshoreline{c} for all $c \in [\pi/2, 2\pi]$. In our case, $c = 2\pi$ (corresponding to \shoreline{1}), resulting in the worst-case cost denoted as $f_2(\theta)$ in the proof of Theorem~\ref{thm: general wrs optimal}, which is the worst case inspection cost promised in the current lemma. 

\item The points $\mathcal P_\phi$, with $\phi \in [0,2\pi]$, are inspected in 3 phases by the Isbell-type trajectory with parameter $\theta$. 
First, the portion $O \rightarrow A_0$ of the trajectory inspects, by Lemma~\ref{lem: inspection range of first point}, all points with $\phi \in [0,2\theta]$. The cumulative cost of the points inspected in this phase is, by Lemma~\ref{lem: reduction from discrete to continuous avg} equal to $\log \left( \frac{1+\sinn{\theta}}{1-\sinn{\theta}} \right)$. 

Then, the movement $A_0 \rightarrow D'$ inspects only the point $\mathcal P_{2\theta}$ already inspected in the previous phase. This is followed by the counter clockwise movement along the arc with endpoints $D'=\mathcal P_{2\theta}$ and $D=\mathcal P_{3\pi/2}$. For each $\phi \in [2\theta,3\pi/2]$, we have $\inspect{\mathcal P_{\phi}} = OA_0+A_0D'+\phi -2\theta$, where $OA_0 = 1/\coss{\theta}$ as calculated in the proof of Lemma~\ref{lem: reduction from discrete to continuous wrs}, and $A_0D = A_0\mathcal P_0 = \tann{\theta}$, by the definition of point $A_0$ (see also Figure~\ref{fig: convexcurve2}). We conclude that the cumulative inspection cost in this phase is
\begin{align*}
\int_{2\theta}^{3\pi/2} \inspect{\phi} \dd \phi 
& = 
\int_{2\theta}^{3\pi/2}\left(\tfrac{1}{\coss{\theta}}+\tann{\theta}+\phi -2\theta\right)\dd \phi \\
& = \left(\tfrac{3\pi}{2}-2\theta\right)\left(\tfrac{1}{\coss{\theta}}+\tann{\theta} -2\theta\right)+ \tfrac{9 \pi ^2}{8}-2 \theta ^2
\end{align*}
 
Finally, the trajectory concludes with movement $\mathcal P_{3\pi/2} \rightarrow F$, where $F$ is the intersection of point of tangent lines $\mathcal L_{3\pi/2}$ and $\mathcal L_{0}=\mathcal L_{2\pi}$, that is $F=(-1,0)$, see Figure~\ref{fig: finalpointavg}. This phase inspects $\mathcal P_\phi$ with $\phi \in [3\pi/2,2\pi]$ and for each such point we find its inspection time $\inspect{\phi}$.   
 \begin{figure}[h!]
    \centering
        \includegraphics[width=5.0cm]{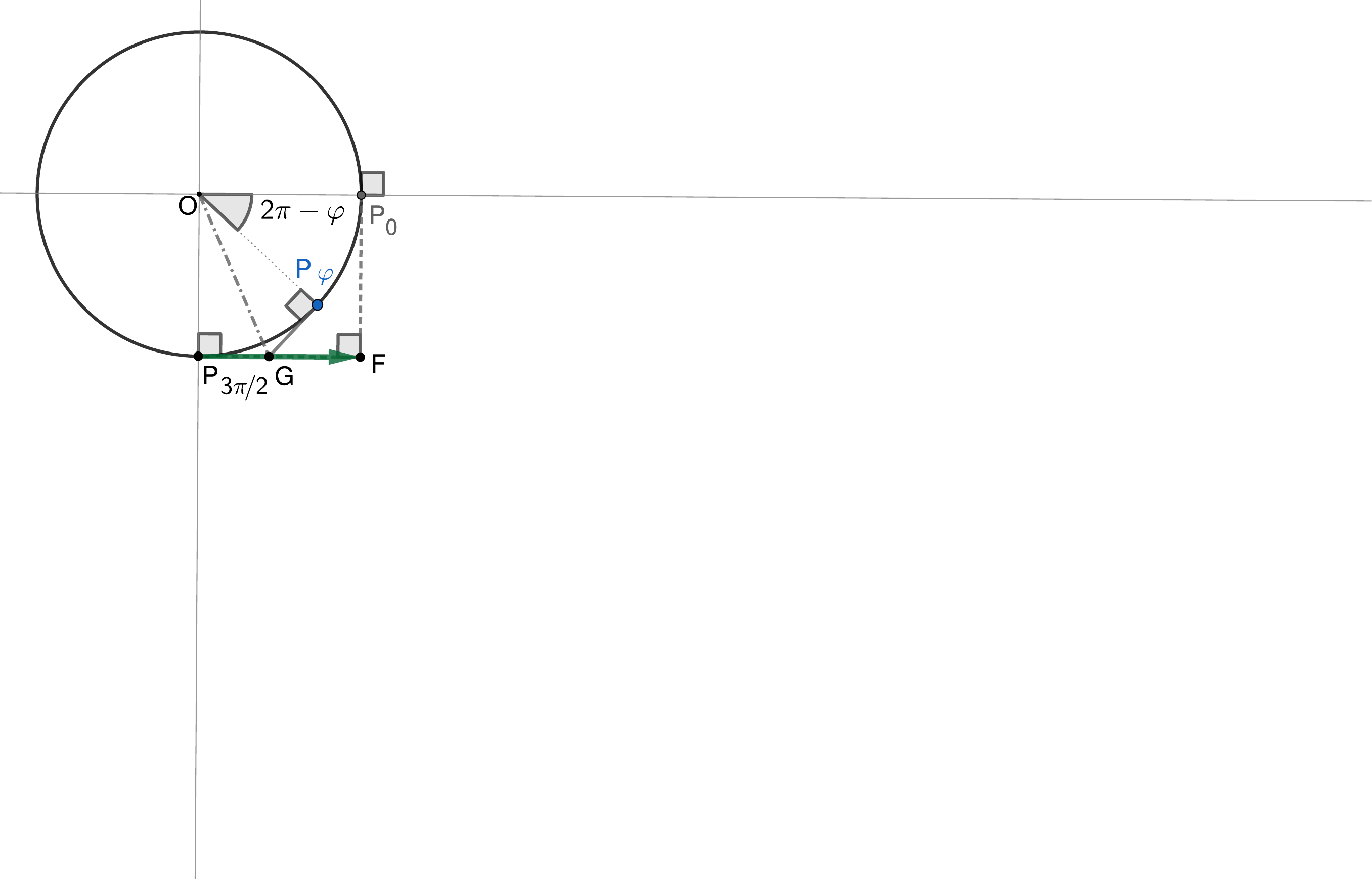}
        \caption{The last phase of the Isbell-type algorithm for \pshoreline{2\pi} (also \shoreline{1}). 
        }
        \label{fig: finalpointavg}
\end{figure}
Note that in order to reach point $\mathcal P_{3\pi/2}$, and by our previous analysis, time $1/\coss{\theta}+\tann{\theta}+3\pi/2 -2\theta$ has passed. The additional time that is needed to inspect $\mathcal P_\phi$ is by Lemma~\ref{lem: inspection on tangent line} equal to the length of $\mathcal P_{3\pi/2}G$. Now, the angle of triangle $GO\mathcal P_{3\pi/2}$ (at vertex $O$) is $(\pi/2-(2\pi-\phi))/2 = \phi/2-3\pi/4$, since triangles $GO\mathcal P_{3\pi/2}, GO\mathcal P_\phi$ are equal. But then $\mathcal P_{3\pi/2}G = \tann{\phi/2-3\pi/4}$, making the cumulative cost in this case equal to
\begin{align*}
\int_{3\pi/2}^{2\pi} \inspect{\phi} \dd \phi 
& = 
\tfrac{\pi}2
\left( \tfrac{1}{\coss{\theta}}+\tann{\theta}+3\pi/2 -2\theta \right)
+
\int_{3\pi/2}^{2\pi} \tann{\phi/2-3\pi/4} \dd \phi \\
& =
\tfrac{\pi}2
\left(\tfrac{1}{\coss{\theta}}+\tann{\theta}+3\pi/2 -2\theta\right)
+
\log 2
,
\end{align*}
where the last equality is based on that $\int \tann{x} = - \log(\coss{x})$. 
Adding then all cumulative inspection costs and scaling by $2\pi$ gives the average inspection cost, as described in the lemma. 
\end{enumerate}
\qed  \end{proof}

The value of $\theta$ that minimizes the worst-case inspection cost $I_W(\theta)$, as described in Lemma~\ref{lem: isbell performance}, is $\theta_W = \pi/6 \approx 0.523599$ (see also Theorem~\ref{thm: original Isbell} from \cite{isbell1957optimal}, or our more general Theorem~\ref{thm: general wrs optimal} for $n = 1$). For this value of $\theta_W$, we have
$I_W(\theta_W)  = 1 + \sqrt{3} + \frac{7\pi}{6} \approx 6.39724$ 
and
$I_A(\theta_W)  = \frac{5}{2\sqrt{3}} + \frac{91\pi}{144} + \frac{\log(6)}{2\pi} \approx 3.71386$.
Similarly, the value of $\theta$ that minimizes the average inspection cost $I_A(\theta)$, from Lemma~\ref{lem: isbell performance}, is $\theta_A = 0.592334$, which results in :
\begin{align*}
I_W(\theta_A) & \approx 6.406, \\
I_A(\theta_A) & \approx 3.70737.
\end{align*}
We expect the Pareto frontier of Isbell-type algorithms to be determined by $\theta \in [0.52359, 0.59233]$. However, for completeness, we plot the pairs $I(\theta) := (I_W(\theta), I_A(\theta))$ for all $\theta \in [0.4, 0.75]$ in Figure~\ref{fig: all tradeoff}, on page~\pageref{fig: all tradeoff}.

In Theorem~\ref{thm: general avg upper bound}, we achieved an average inspection cost as low as $3.5509015$. We are ready to demonstrate how to significantly improve both the worst-case and average inspection costs across the entire Pareto frontier, effectively proving Theorem~\ref{thm: pareto bounds}.

\begin{proof}[of Theorem~\ref{thm: pareto bounds}]
The upper bound to the Pareto frontier will be obtained by the family of algorithms \textsc{EPS}$(\theta,t_1,\ldots,t_k)$, with $k=1000$, and for appropriate choices of parameters $\theta,t_i$. 

For this, we recall Lemma~\ref{lem: reduction from discrete to continuous wrs}, from which for any feasible choice of the parameters, we obtain a formula $S_W(\theta,t_1,\ldots,t_k)$ for the worst case cost. 
Let also $S_A(\theta,t_1,\ldots,t_k)$ denote the objective of \eqref{NLP-avg}, with $c=2\pi$ and $\epsilon =10^{-3}$. For all NLP solutions we obtain, the optimizer values for $\theta$ are bounded away from $\pi/2$ (in fact, they admit values in $[0.5,0.6]$), and hence the value of $\epsilon$ does not affect the cost of the solution, rather only is there to justify its correctness (see also Figure~\ref{fig: theta valus for trade-offs}).
\begin{figure}[h!]
    \centering
    \includegraphics[width=7.5cm]{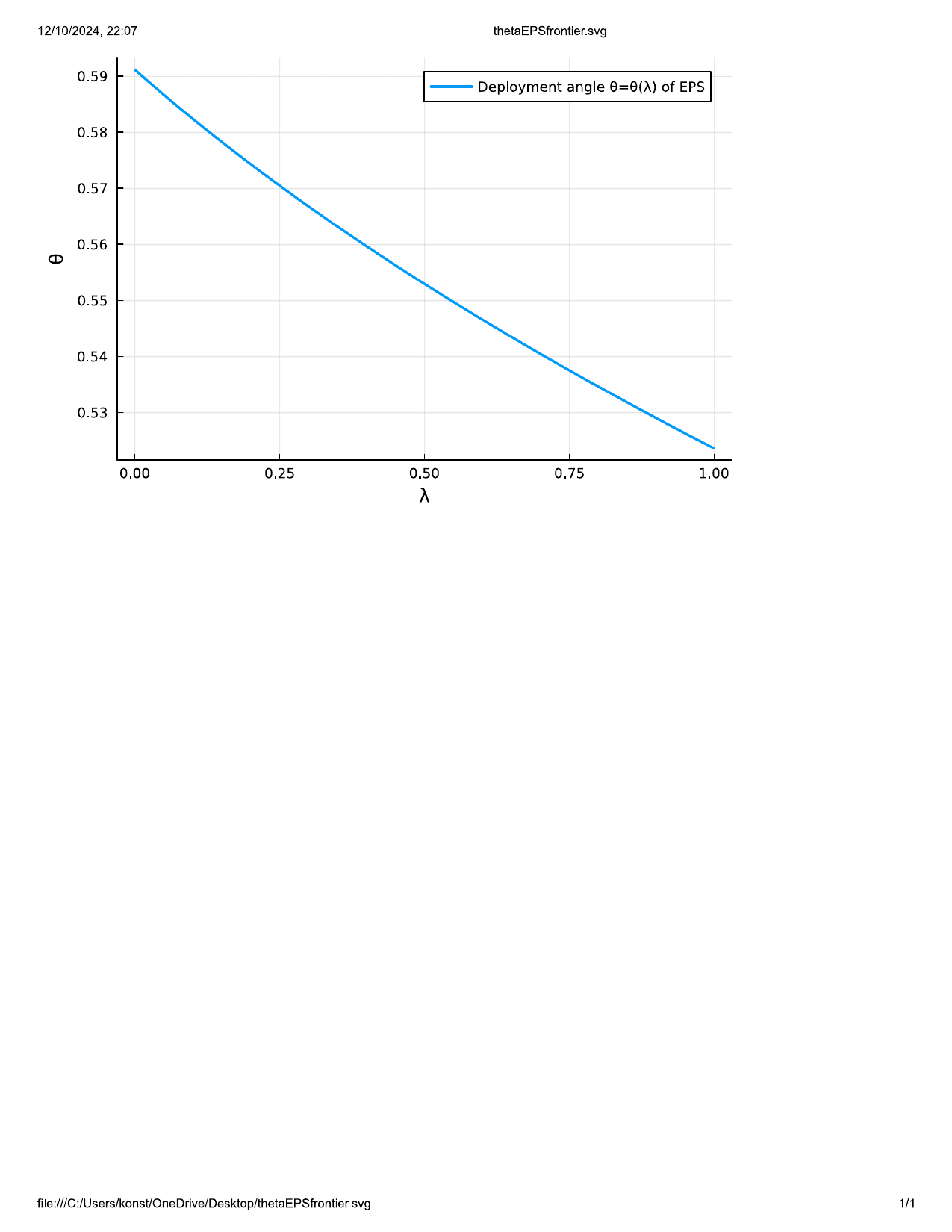}
    \caption{
The value of the initial deployment angle $\theta$ of trajectory \textsc{EPS}$(\theta,t_1,\ldots,t_k)$ designed to minimize the trade-off problem \tradeoff{\lambda}, as a function of $\lambda \in [0,1]$. 
}
\label{fig: theta valus for trade-offs}
\end{figure}
According to Lemma~\ref{lem: NLP feasible to p(c) and cost is correct}, $S_A(\theta,t_1,\ldots,t_k)$ correctly quantifies an upper bound to the average inspection cost to \textsc{EPS}$(\theta,t_1,\ldots,t_k)$, for any feasible solution $(\theta,t_1,\ldots,t_k)$ to the NLP. 

Consequently, the new nonlinear program with objective 
$$ \lambda S_W(\theta, t_1, \ldots, t_k) + (1-\lambda) S_A(\theta, t_1, \ldots, t_k) $$ and subject to the same constraints as in \eqref{NLP-avg}, correctly quantifies the objective for the trade-off problem \tradeoff{\lambda}.

For every $\lambda$ from $0$ to $1$ and with step size $0.01$, solving the new NLP we obtain a solution $(\theta(\lambda),t_1(\lambda),\ldots,t_k(\lambda))$, and we define quantities 
$
N_W(\lambda) := S_W(\theta(\lambda),t_1(\lambda),\ldots,t_k(\lambda)),
$
and
$N_A(\lambda) := S_A(\theta(\lambda),t_1(\lambda),\ldots,t_k(\lambda)).
$
Thus, $N_W(\lambda)$ and $N_A(\lambda)$ accurately quantify the worst-case inspection cost and average inspection cost, respectively, for feasible \textsc{EPS}$(\theta(\lambda), t_1(\lambda), \ldots, t_k(\lambda))$. As expected, $\lambda = 1$ provides an approximation of the worst-case optimal inspection trajectory and a precise quantification of its average inspection cost.
Indeed, $N_W(1) \approx 6.3972578$ and $N_A(1) \approx 3.71130923$ (compare these to $I_W(\theta_W)$ and $I_A(\theta_W)$ respectively of the Isbell-type algorithm, and note that due to the feasibility lower bound on the $t_i$'s imposed in the NLP, the average inspection cost is slightly improved, whereas the worst case cost is slightly compromised). 
Similarly, the solution for $\lambda = 0$ yields the average inspection upper bound $N_A(0)$ from Theorem~\ref{thm: general avg upper bound} for \pshoreline{2\pi} (equivalently, \shoreline{1}) and a worst-case inspection cost of $N_W(0) = 6.8867382$.
We conclude with Figure~\ref{fig: all tradeoff} on page~\pageref{fig: all tradeoff}, where we depict the pair $(N_W(\lambda),N_A(\lambda))$ for $\lambda \in [0,1]$. 
\qed  \end{proof}

Trajectories that certify the reported average inspection cost upper bounds for~\tradeoff{\lambda} are depicted in Figure~\ref{fig: trajectoriesl}, with $\lambda = 1$ corresponding to optimizing the worst-case cost, and for $\lambda = 2/3$ and $\lambda = 1/3$. Recall that the trajectory certifying the upper bound for~\tradeoff{0}, which minimizes the average inspection cost, is shown in Figure~\ref{fig: avg trajectory n=1} on page~\pageref{fig: avg trajectory n=1}.
The values of $t \in \mathbb{R}^k$ obtained as solutions to the nonlinear program described in the proof of Theorem~\ref{thm: pareto bounds} are shown in Figure~\ref{fig: trajectoriesl-t} on page~\pageref{fig: trajectoriesl-t} for~\tradeoff{1}, ~\tradeoff{2/3}, and ~\tradeoff{1/3}. Additionally, Figure~\ref{fig: tvaluesn1} on page~\pageref{fig: tvaluesn1} presents the values of these parameters for~\tradeoff{0}.

\begin{figure}[h!]
    \centering
    \begin{subfigure}[t]{0.32\textwidth}
        \centering
        \includegraphics[width=5cm]{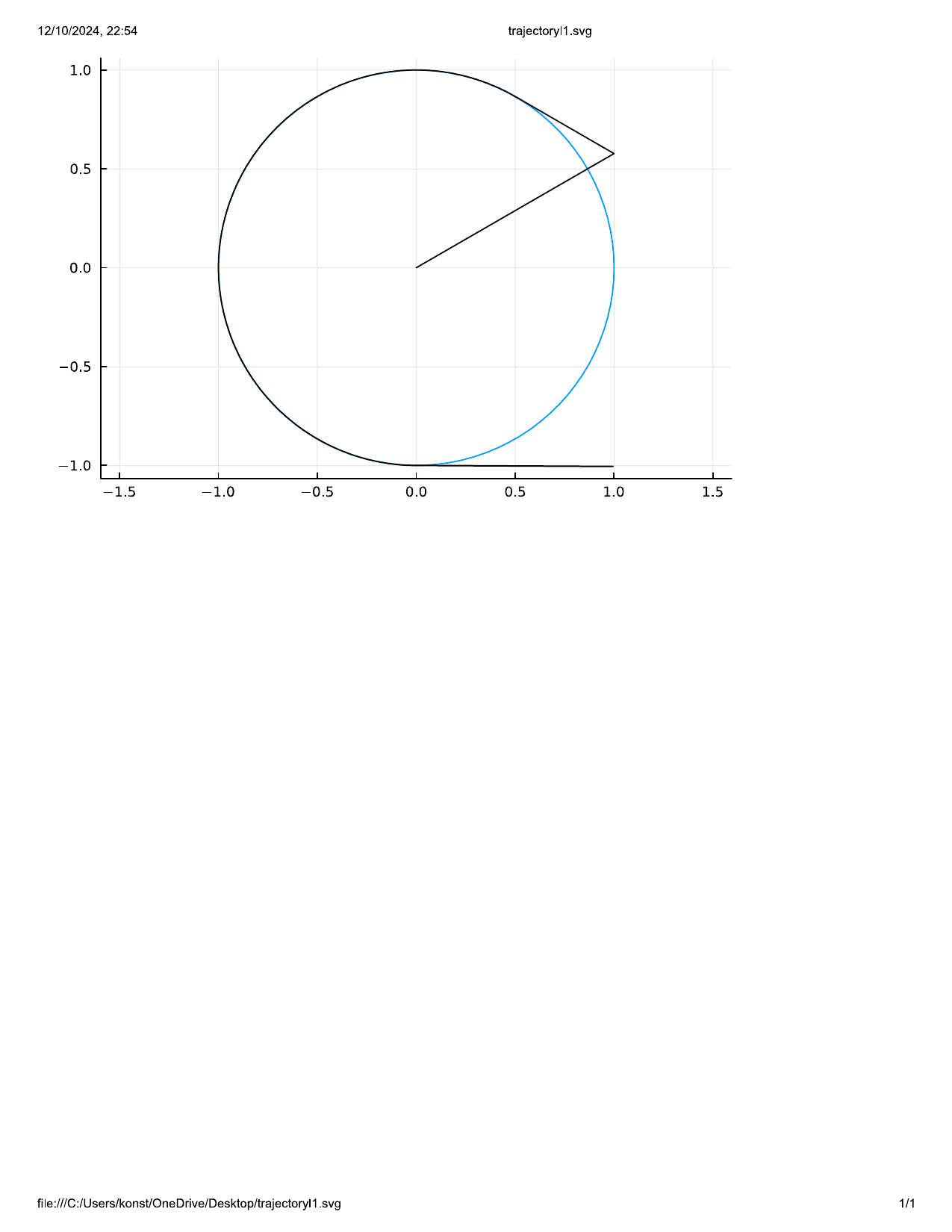}
        \caption{The trajectory of the optimal \textsc{EPS}$(\theta,t_1,\ldots,t_k)$ for solving \tradeoff{1}.
       The values of the corresponding values $t_i$ are see in Figure~\ref{fig: trajectoryl1-t}.
}
        \label{fig: trajectoryl1}
    \end{subfigure}
    \hfill
    \begin{subfigure}[t]{0.32\textwidth}
        \centering
        \includegraphics[width=4.8cm]{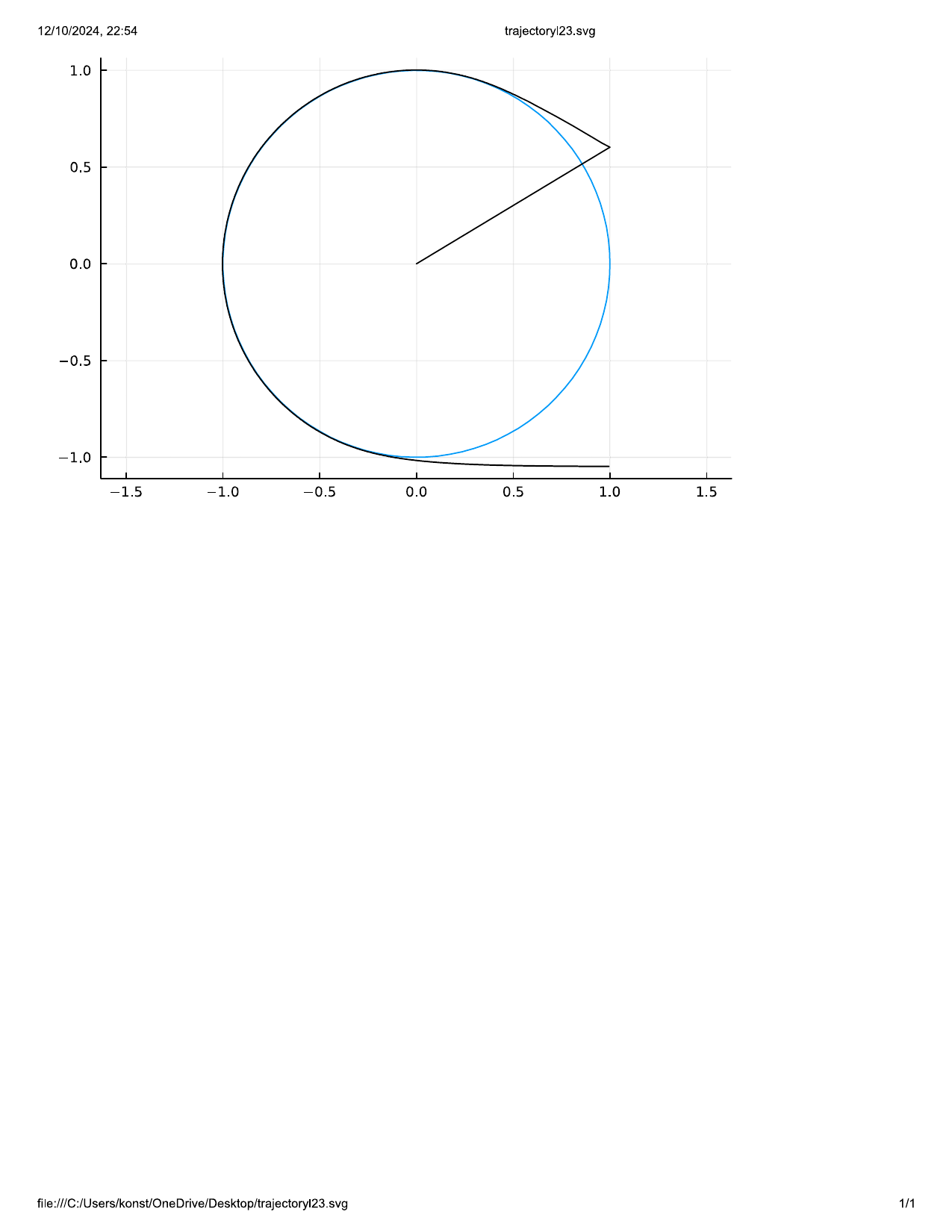}
        \caption{The trajectory of the optimal \textsc{EPS}$(\theta,t_1,\ldots,t_k)$ for solving \tradeoff{2/3}. The values of the corresponding values $t_i$ are see in Figure~\ref{fig: trajectoryl23-t}.
        }
        \label{fig: trajectoryl23}
    \end{subfigure}
    \hfill
    \begin{subfigure}[t]{0.32\textwidth}
        \centering
        \includegraphics[width=4.8cm]{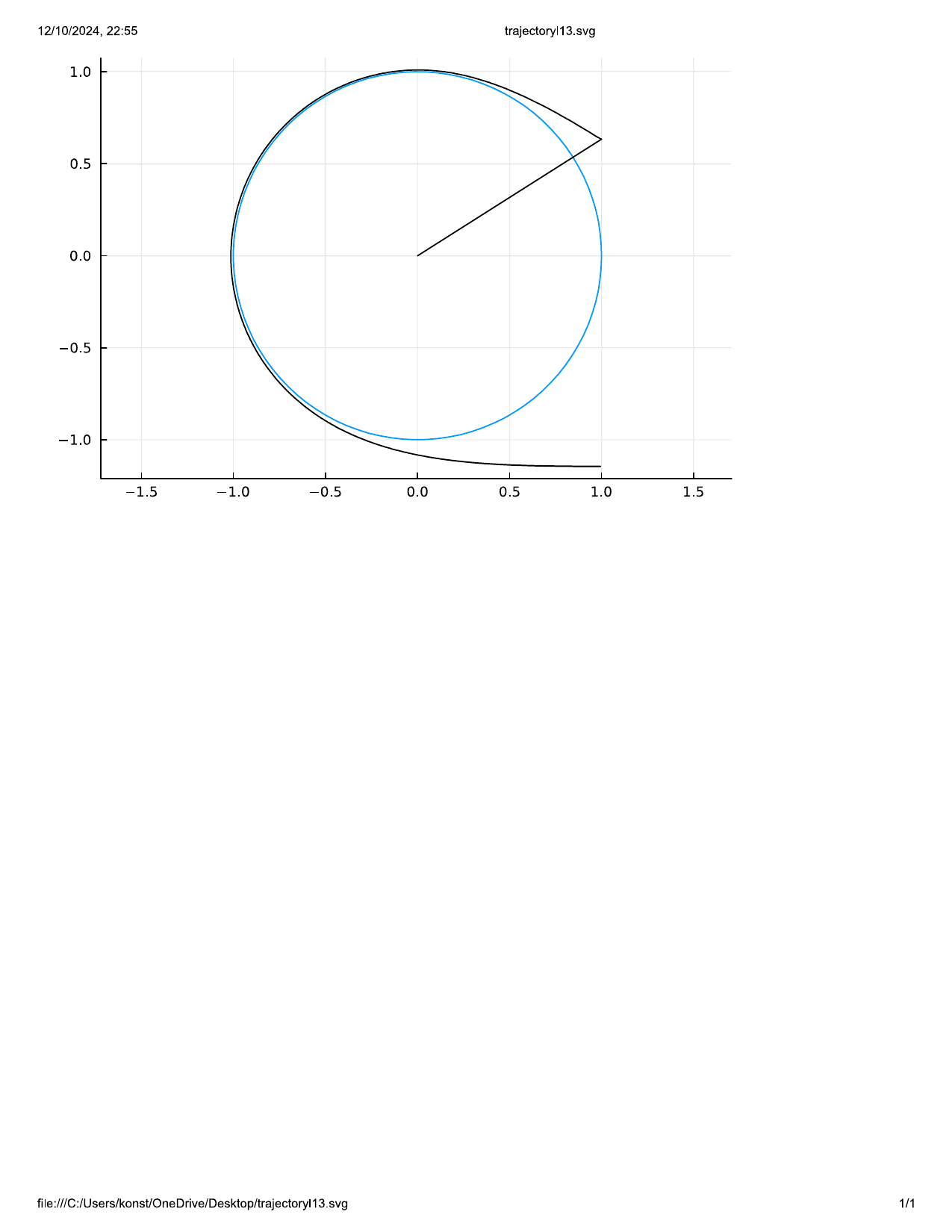}
        \caption{The trajectory of the optimal \textsc{EPS}$(\theta,t_1,\ldots,t_k)$ for solving \tradeoff{1/3}. The values of the corresponding values $t_i$ are see in Figure~\ref{fig: trajectoryl13-t}.}
        \label{fig: trajectoryl13}
    \end{subfigure}
    \caption{
   Agent's trajectories for trade-off problem \tradeoff{\lambda} of the proof of Theorem~\ref{thm: general avg upper bound}, for $\lambda=1,2/3,1/3$, using algorithm \textsc{EPS}$(\theta,t_1,\ldots,t_k)$ with $k=1000$. 
   The trajectory corresponding to $\lambda=0$ (i.e. the best average performance of \textsc{EPS}$(\theta,t_1,\ldots,t_k)$ solving \shoreline{1}) can be seen in Figure~\ref{fig: avg trajectory n=1}.
    }
    \label{fig: trajectoriesl}
\end{figure}

\section{Discussion}
\label{sec: conclusions}

Our results provide a unified framework for analyzing inspection trajectories on the disk in both worst- and average-case settings. For the worst case, we recover and extend known optimal solutions by establishing a general theorem for partial inspection problems, thereby offering a single proof that subsumes previous results. For the average case, we revisit the heuristic analysis of Gluss~\cite{gluss1961alternative}, identify numerical errors in the reported values, and present corrected bounds. Our systematic approach based on discretization and nonlinear programming yields upper bounds that improve on these corrected values, and it applies naturally to the multi-objective setting, where we derive Pareto upper bounds that balance worst- and average-case costs.

A natural limitation of our approach was the absence of a proof of global optimality for the nonlinear programs. However, subsequent work has shown that the methodology we develop here, after substantial additional technical work, does yield the true optimum of the average-case Disk-Inspection problem~\cite{georgiou2025optimalaveragediskinspectionfermats}. The optimal cost was proved to be $3.549259$, only slightly below our best reported upper bound of $3.5509015$. The proof relies crucially on the nonlinear programming framework introduced here, but most importantly it reduces the problem to a one-parameter family of trajectories using principles from optics and Fermat’s Principle of Least Time. This reduction leads to an ODE system whose solution defines the provably optimal trajectory, which, as in our numerical constructions, avoids touching the disk (contrary to the conjecture of Gluss~\cite{gluss1961alternative}).

Looking ahead, our framework suggests several promising directions. One is to study trade-offs in more general geometric environments, or in higher dimensions, where inspection or patrolling tasks arise in practical applications. Another is to adapt these techniques to related problems in mobile agent computing, such as evacuation or rendezvous, where both worst- and average-case guarantees are of interest.

\bibliographystyle{plain}
\bibliography{BiblioSearch-new}

\appendix

\section{Omitted Proofs}
\label{sec: omitted proofs}

\subsection{Proofs Omitted from Section~\ref{sec: solve inspection by partial inspection}}
\label{sec: proofs from sec solve inspection by partial inspection}

\begin{lemma}
\label{lem: minimizer of f1}
For every $c\in [0,3\pi/2]$, function $f_1(\theta) := 1/\coss{\theta}+\sinn{c-\theta-\pi/2}/\coss{\theta}+1$, subject to that $\theta \geq c/2-\pi/4$ and $\theta<\pi/2$, attains the following minima:
$$
\begin{cases}
f_1(c/2), 			& \textrm{ if}~ 0 \leq c \leq 2\pi/3\\ 
f_1(\pi-c),	 	& \textrm{ if}~ 2\pi/3 < c \leq 5\pi/6\\ 
f_1(c/2-\pi/4),		& \textrm{ if}~ 5\pi/6 < c \leq 3\pi/2.
\end{cases}
$$
\end{lemma}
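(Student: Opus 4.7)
The plan is to reduce $f_1$ to a cleanly differentiable form and then do a short case analysis on the location of the unique interior critical point. First I would use $\sinn{c-\theta-\pi/2} = -\coss{c-\theta}$ to rewrite
$$
f_1(\theta) = 1 + \frac{1 - \coss{c-\theta}}{\coss{\theta}},
$$
and then, applying the angle-addition identity $\sinn{\theta}\coss{c-\theta} + \coss{\theta}\sinn{c-\theta} = \sinn{c}$, obtain the compact derivative
$$
f_1'(\theta) = \frac{\sinn{\theta} - \sinn{c}}{\cos^2(\theta)}.
$$
On the feasible domain $\cos^2(\theta) > 0$, so the sign of $f_1'$ is controlled entirely by $\sinn{\theta} - \sinn{c}$.

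Second, I would identify the only relevant interior critical point as $\theta^\star = \pi - c$, which lies in $[0,\pi/2)$ precisely when $c \in (\pi/2,\pi]$. The two thresholds named in the lemma emerge from aligning $\theta^\star$ with the endpoints of the feasible interval, which is $[c/2-\pi/4,\,c/2]$ after incorporating the implicit upper bound $\theta \leq c/2$ inherited from the Type-1 geometry in the proof of Theorem~\ref{thm: optimal worst case partial inspection sol}: the equation $\pi - c = c/2$ yields $c = 2\pi/3$, while $\pi - c = c/2 - \pi/4$ yields $c = 5\pi/6$. I would finish by case analysis. For $c \leq 2\pi/3$ the critical point sits above the right endpoint $c/2$ (or is absent when $c<\pi/2$, where $\sinn{\theta}<\sinn{c}$ throughout the interval), so $f_1' < 0$ on the interval and the minimum is at $\theta = c/2$, where a direct substitution gives $f_1(c/2) = 1/\coss{c/2}$. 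For $2\pi/3 < c \leq 5\pi/6$, $\theta^\star \in [c/2-\pi/4,\,c/2]$ and the sign of $f_1'$ changes from $-$ to $+$, so $\theta^\star = \pi - c$ is the minimum; using $\coss{2c-\pi} = -\coss{2c}$ followed by $1+\coss{2c} = 2\cos^2(c)$, one gets $f_1(\pi-c) = 1 - 2\coss{c}$. For $5\pi/6 < c \leq 3\pi/2$, the critical point lies to the left of $c/2-\pi/4$ (or $\sinn{c}\leq 0 \leq \sinn{\theta}$ throughout when $c \geq \pi$), so $f_1' > 0$ on the interval and the minimum is at the left endpoint $\theta = c/2 - \pi/4$.

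The main obstacle I anticipate is not the calculus but bookkeeping of the feasible domain. The lemma as written only imposes $\theta \geq c/2 - \pi/4$ and $\theta < \pi/2$; taken literally this would allow $\theta = c$ (for $c<\pi/2$), where $f_1(c) = 1$, contradicting the claimed minimum $1/\coss{c/2}$. The missing upper bound $\theta \leq c/2$ is inherited from the Type-1 geometry in the proof of Theorem~\ref{thm: optimal worst case partial inspection sol}: since $A_0 = (1,\tann{\theta})$ already inspects every $\mathcal{P}_\phi$ with $\phi \in [0,2\theta]$ by Lemma~\ref{lem: inspection range of first point}, one needs $2\theta \leq c$ so that the segment $A_0 A_c$ is genuinely the cost-relevant continuation. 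Once this constraint is made explicit, the three sub-cases correspond exactly to the position of $\theta^\star = \pi - c$ relative to the two endpoints of $[c/2-\pi/4,\,c/2]$, and each closed-form value follows from a one-line trigonometric simplification as above.
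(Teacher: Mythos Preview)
Your proposal is correct and follows essentially the same approach as the paper: compute the derivative $f_1'(\theta)=(\sin\theta-\sin c)/\cos^2\theta$, locate the unique interior critical point, and do a case split on whether it falls inside the feasible interval. Your explicit flagging of the missing upper bound $\theta\le c/2$ is a genuine improvement over the paper's presentation, which silently uses that bound in its Case~1 argument (``$c/2-\pi/4\le\theta\le c/2\le c$'') without ever stating it in the lemma.
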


\begin{proof}
For every $c\in [0,3\pi/2]$, we aim to find the minimum of the function
$
f_1(\theta) = \frac{1 + \sin(c - \theta-\pi/2)}{\cos\theta} + 1,
$
subject to the constraints \(\theta \geq \dfrac{c}{2} - \dfrac{\pi}{4}\) and \(\theta < \dfrac{\pi}{2}\). For this we rely on the derivative $f_1'(\theta) = ( \sinn{\theta} - \sinn{\theta})/\coss{\theta}$. Next we examine cases for the range of parameter $c$. 

Case 1: If $0\leq c\leq \pi/2$, we have that that $f_1'(\theta)=0$ if and only if $\theta=c$ (for $0\leq \theta <\pi/2$), whereas $f_1'(0)=-\sinn{c} <0$. This means that the function is initially decreasing, up to $\theta =c$. However, $c/2-\pi/4\leq \theta \leq c/2\leq c$, and therefore $f_1(\theta)$ is minimized at the end of its domain, that is at $\theta = c/2$,

Case 2: If $\pi/2\leq c\leq \pi$, and since $0\leq \theta <\pi/2$, we have that that $f_1'(\theta)=0$ if and only if $\theta=\pi-c$. At the same time, $c/2-\pi/4\leq \theta \leq c/2$, hence we see that $c/2-\pi/4\leq \pi-c \leq c/2$ if and only if $2\pi/3 \leq c \leq 5\pi/6$. 
At the same time, for every $\epsilon>0$ sufficiently small, we have that $f_1'(\pi-c-\epsilon)=
( \sinn{c+\epsilon} - \sinn{c})/\coss{c+\epsilon}$ implying that $f_1$ is decreasing for $c\leq \pi-c$. We conclude that for $c\in [2\pi/3 \leq c \leq 5\pi/6]$, $f_1(\theta)$ is minimized at $\theta=\pi-c$. 
For $c\in [\pi/2,2\pi/3]$, we have that $\pi-c \geq c/2$. Hence, $f_1(\theta)$, is decreasing up to $\theta \leq c/2$, and hence the minimum in this case is attained at $\theta=c/2$. 
Finally, for the case 2, if $c\in [5\pi/6,\pi]$, we have that $\pi-c \leq c/2-\pi/4$, hence $f_1(\theta)$ is increasing when $\theta \in [c/2-\pi/4,c/2]$, inducing the minimizer $\theta=c/2-\pi/4$. 

Case 3: Finally, if $\pi\leq c\leq 3\pi/2$, we have that $\sinn{c}\leq 0$, and hence $f_1'(\theta) = ( \sinn{\theta} - \sinn{\theta})/\coss{\theta} \geq 0$. For this reason, $f_1(\theta)$ is increasing in $[c/2-\pi/4,\pi/2)$, and so the minimum is attained at $\theta=c/2-\pi/4$.
\qed  \end{proof}

\begin{lemma}
\label{lem: minimizer of f2}
For every $c\in [\pi/2,2\pi]$, function $f_2(\theta) = 1/\coss{\theta}+\tann{\theta}+c-\pi/2-2\theta+1$, subject to that $0\leq \theta \leq c/2-\pi/4$ and $\theta<\pi/2$, attains the following minima:
$$
\begin{cases}
f_2(c/2-\pi/4),		& \textrm{ if}~ \pi/2 \leq c \leq 5\pi/6 \\
f_2(\pi/6), 			& \textrm{ if}~ 5\pi/6 < c \leq 2\pi.
\end{cases}
$$
\end{lemma}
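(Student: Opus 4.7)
The plan is a straightforward single-variable calculus argument. I will compute $f_2'(\theta)$, locate its unique zero on $[0,\pi/2)$, establish that $f_2'$ is monotone, and then intersect the unconstrained minimizer with the feasible interval $[0,c/2-\pi/4]$ to split into the two cases announced in the statement.

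First I would differentiate, obtaining
\[
f_2'(\theta)=\frac{\sin\theta}{\cos^{2}\theta}+\sec^{2}\theta-2
=\frac{1+\sin\theta}{\cos^{2}\theta}-2.
\]
Setting $f_2'(\theta)=0$ and using $\cos^{2}\theta=(1-\sin\theta)(1+\sin\theta)$, the equation reduces to $1=2(1-\sin\theta)$ (after dividing by $1+\sin\theta>0$ on $[0,\pi/2)$), which gives the unique critical point $\theta^{\star}=\pi/6$ in the domain. A short second-derivative calculation, or equivalently rewriting
\[
f_2''(\theta)=\frac{(1+\sin\theta)^{2}}{\cos^{3}\theta}\ge 0
\qquad \text{on } [0,\pi/2),
\]
shows that $f_2'$ is nondecreasing on $[0,\pi/2)$, so $f_2$ is strictly decreasing on $[0,\pi/6]$ and strictly increasing on $[\pi/6,\pi/2)$, with global minimum over $[0,\pi/2)$ attained uniquely at $\theta^{\star}=\pi/6$.

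It remains to intersect this picture with the feasibility constraint $\theta\le c/2-\pi/4$. The condition $c/2-\pi/4\ge \pi/6$ is equivalent to $c\ge 5\pi/6$. Hence for $5\pi/6<c\le 2\pi$ the critical point $\pi/6$ is feasible and is the constrained minimizer, yielding $f_2(\pi/6)$. For $\pi/2\le c\le 5\pi/6$ the feasible interval $[0,c/2-\pi/4]$ lies entirely to the left of $\pi/6$, so $f_2$ is strictly decreasing on the feasible set and the minimum is achieved at the right endpoint $\theta=c/2-\pi/4$, yielding $f_2(c/2-\pi/4)$.

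There is no real obstacle here; the only mild care is to divide by $1+\sin\theta$ only after noting it is strictly positive on the domain (so that the spurious root $\sin\theta=-1$ is discarded), and to verify the sign of $f_2''$ so as to justify unimodality globally on $[0,\pi/2)$ rather than only locally at $\pi/6$. Both are routine and the two cases above match the statement of the lemma exactly.
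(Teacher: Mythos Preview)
Your proposal is correct and follows essentially the same approach as the paper: compute $f_2'$, verify convexity via $f_2''\ge 0$, identify the unique critical point $\theta=\pi/6$ on $[0,\pi/2)$, and then intersect with the feasibility interval $[0,c/2-\pi/4]$ to obtain the two cases. The only cosmetic difference is that you solve $f_2'(\theta)=0$ algebraically via the factorization $\cos^2\theta=(1-\sin\theta)(1+\sin\theta)$, whereas the paper simply asserts the root; both arrive at the same split at $c=5\pi/6$.
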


\begin{proof}
For each fixed parameter $c\in [\pi/2,2\pi]$, we study the monotonicity of $f_2(\theta) = 1/\coss{\theta}+\tann{\theta}+c-\pi/2-2\theta+1$ in the domain identified by that $0\leq \theta \leq c/2-\pi/4$ and $\theta<\pi/2$. For this we see that 
$f_2'(\theta)=(1/\coss{\theta}+\tann{\theta})/\coss{\theta} -2$, 
and that 
$f_2''(\theta)=(1/\coss{\theta}+\tann{\theta})^2/\coss{\theta}$.

The above imply that $f_2(\theta)$ is convex, and $f_2'(\theta)$ is increasing in $[0,\pi/2)$. The only root of $f_1'(\theta)$ in $[0,\pi/2)$ is at $\theta = \pi/6$, while it is also easy to see that $f_2'(0)=-1$. Therefore, $f_2(\theta_2)$ is initially decreasing, it is minimized at $\theta=\pi/6$, and then it is increasing for up to $\theta <\pi/2$. 

We recall the domain of $f_2(\theta)$, identified by that $0\leq \theta \leq c/2-\pi/4$ and $\theta<\pi/2$. Hence, $\theta=\pi/6$ is a minimizer if and only if $\pi/6 \leq c/2-\pi/4$ iff and only if $c \in [5\pi/6,2\pi]$. 
For parameters $c\in [\pi/2, 5\pi/6]$, we see that $f_2(\theta)$ is decreasing in its domain, and hence it is minimized at $\theta = c/2-\pi/4$. 
\qed  \end{proof}

\begin{lemma}
\label{lem: f1-f2 comparison}
For every $c\in [\pi/2,3\pi/2]$, let $f_1(\theta), f_2(\theta)$ be as in Lemma~\ref{lem: minimizer of f1} and Lemma~\ref{lem: minimizer of f2}. Let $\theta_1,\theta_2$ denote the corresponding minimizers. Then, 
$$
\min\{f_1(\theta_1), f_2(\theta_2) \}=
\begin{cases}
f_1(c/2) = 1/\coss{c/2}			& \textrm{ if}~ \pi/2 \leq c \leq 2\pi/3 \\
f_1(\pi-c) = 1-2\coss{c}		& \textrm{ if}~ 2\pi/3 < c \leq 5\pi/6 \\
f_2(\pi/6)= \sqrt3+c-5\pi/6+1	& \textrm{ if}~ 5\pi/6 < c \leq 2\pi 
\end{cases}
$$
\end{lemma}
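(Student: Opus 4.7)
The plan is to exploit the structural information in Lemmas~\ref{lem: minimizer of f1} and~\ref{lem: minimizer of f2}, which already pinpoint the minimizers $\theta_1,\theta_2$ in each subinterval of $c$, and to reduce the question to the comparison of two values per case. The key observation, obtained by direct substitution together with the identity $\sinn{c-(c/2-\pi/4)-\pi/2}=\sinn{c/2-\pi/4}$, is that at the common boundary $\theta=c/2-\pi/4$ of the two domains the two functions coincide:
\[
f_1(c/2-\pi/4)\;=\;\frac{1+\sinn{c/2-\pi/4}}{\coss{c/2-\pi/4}}+1\;=\;f_2(c/2-\pi/4),
\]
because the linear term $c-\pi/2-2(c/2-\pi/4)$ appearing in $f_2$ vanishes precisely at $\theta=c/2-\pi/4$. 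This identity is exactly what one would expect geometrically, since Type-1 and Type-2 curves degenerate to the same configuration on the boundary of their respective regimes.

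Given this boundary identity, the three cases are handled uniformly. In Case~A, $\pi/2\le c\le 2\pi/3$: Lemmas~\ref{lem: minimizer of f1} and~\ref{lem: minimizer of f2} give $\theta_1=c/2$ and $\theta_2=c/2-\pi/4$; since $f_1$ was shown in the proof of Lemma~\ref{lem: minimizer of f1} to be decreasing on $[c/2-\pi/4,c/2]$, we obtain $f_1(c/2)\le f_1(c/2-\pi/4)=f_2(c/2-\pi/4)$, and a one-line simplification using $\sinn{c/2-\pi/2}=-\coss{c/2}$ yields $f_1(c/2)=1/\coss{c/2}$. In Case~B, $2\pi/3<c\le 5\pi/6$: $\theta_1=\pi-c$ lies in the interior of the domain of $f_1$ and is an unconstrained critical point, so by optimality $f_1(\pi-c)\le f_1(c/2-\pi/4)=f_2(c/2-\pi/4)$; the identity $\sinn{2c-3\pi/2}=\coss{2c}$ combined with $1+\coss{2c}=2\cos^2 c$ then reduces $f_1(\pi-c)$ to $1-2\coss{c}$. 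In Case~C, $5\pi/6<c\le 3\pi/2$: $\theta_1=c/2-\pi/4$ is the boundary minimizer of $f_1$, while $\theta_2=\pi/6$ is the unconstrained minimizer of $f_2$, so $f_2(\pi/6)\le f_2(c/2-\pi/4)=f_1(c/2-\pi/4)$, and substituting $\theta=\pi/6$ gives $f_2(\pi/6)=1+\sqrt3+c-5\pi/6$.

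There is no real obstacle once the boundary identity is in place: the heavy lifting (the monotonicity of $f_1,f_2$ and the location of their minimizers) has already been supplied by the preceding two lemmas, and each of the three comparisons reduces either to ``$f_1$ decreases past the common boundary'' (Case~A) or to ``the other function's unconstrained minimum lies below its value at that boundary'' (Cases~B and~C). The simplifications of $f_1(c/2)$, $f_1(\pi-c)$ and $f_2(\pi/6)$ into the claimed closed forms are each a short trigonometric evaluation.
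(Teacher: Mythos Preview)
Your proof is correct and takes a genuinely different, considerably shorter route than the paper's. The paper computes the closed forms of $f_1(\theta_1)$ and $f_2(\theta_2)$ in each subinterval and then, for each of the three ranges of $c$, introduces an auxiliary difference function $g_i(c)$ and shows $g_i(c)\ge 0$ by brute-force calculus: checking an endpoint value, differentiating, performing trigonometric substitutions (e.g.\ $x=\sin c$ or $x=\cos(c/4+\pi/8)$), and in one case reducing to the sign of a cubic polynomial. Your argument bypasses all of this via the single boundary identity $f_1(c/2-\pi/4)=f_2(c/2-\pi/4)$, which is exactly the statement that Type-1 and Type-2 curves coincide when the constraint $2\theta=c-\pi/2$ is tight. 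Once that identity is in place, the comparison in each subinterval is immediate: whichever of $\theta_1,\theta_2$ happens to equal the shared boundary value $c/2-\pi/4$ (this is $\theta_2$ in Cases~A and~B, and $\theta_1$ in Case~C) yields the larger of the two minima, simply by the minimizer property already established in Lemmas~\ref{lem: minimizer of f1} and~\ref{lem: minimizer of f2}. Your approach is more conceptual and exploits the geometric meaning of the boundary between the two curve types; the paper's approach is self-contained calculus but pays for it with a page of computations.
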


\begin{proof}
For each $c\in [0,2\pi]$, let $\theta_1=\theta_1(c),\theta_2=\theta_2(c)$ denote the minimizers of $f_1(\theta), f_2(\theta)$ as in Lemma~\ref{lem: minimizer of f1} and Lemma~\ref{lem: minimizer of f2}, and for the values of $c$ that each is defined. By direct substitution we see that 
$$
f_1(\theta_1)=
\begin{cases}
1/\coss{c/2}, 			& \textrm{ if}~ 0 \leq c \leq 2\pi/3\\ 
1-2\coss{c},	 	& \textrm{ if}~ 2\pi/3 < c \leq 5\pi/6\\ 
1+1/\sinn{c/2+\pi/4}+\tann{c/2-\pi/4},		& \textrm{ if}~ 5\pi/6 < c \leq 3\pi/2.
\end{cases}
$$
and
$$
f_2(\theta_2)=
\begin{cases}
1+1/\sinn{c/2+\pi/4}+\tann{c/2-\pi/4},		& \textrm{ if}~ \pi/2 \leq c \leq 5\pi/6 \\
\sqrt3+c-5\pi/6+1, 			& \textrm{ if}~ 5\pi/6 < c \leq 2\pi.
\end{cases}.
$$
For $c\leq \pi/2$ only $f_1(\theta_1)$ is defined, while for $c\geq 3\pi/2$ only $f_2(\theta_2)$ is defined. Taking also the piece-wise definition of the minimizers, we conclude that $\min\{f_1(\theta_1),f_2(\theta_2)\}$ over $c \in [\pi/2,3\pi/2]$ should be analyzed in the following three subintervals. 

\begin{align*}
M_1:=& \min_{c \in [\pi/2,2\pi/3]}\{f_1(\theta_1),f_2(\theta_2)\}
=
\min
\left\{
\frac{1}{\coss{c/2}}, 1+\frac{1}{\sinn{c/2+\pi/4}}+\tann{c/2-\pi/4}
\right\}
\\
M_2:=&
\min_{c \in [2\pi/3, 5\pi/6]}\{f_1(\theta_1),f_2(\theta_2)\}
=
1+ \min 
\left\{
-2\coss{c},
\frac{1}{\sinn{c/2+\pi/4}}+\tann{c/2-\pi/4}
\right\}
\\
M_3:=&
\min_{c \in [5\pi/6, 3\pi/2]}\{f_1(\theta_1),f_2(\theta_2)\}
=
1+ \min
\left\{
\frac{1}{\sinn{c/2+\pi/4}}+\tann{c/2-\pi/4}
,\sqrt3+c-5\pi/6
\right\}
\end{align*}

We claim that $M_1 = \coss{c/2}$, and for this we consider function 
$$
g_1(c) =   \frac{1}{\sinn{c/2+\pi/4}}+\tann{c/2-\pi/4}+1 - \frac1{\coss{c/2}}
$$
over $c\in [\pi/2,2\pi/3]$. We will show that $g_1(c)\geq 0$. For this we see that $g_1(2\pi/3) = \left(\sqrt{2}-1\right) \left(\sqrt{3}-1\right) >0$. Hence, it suffices to show that $g_1(c)$ is decreasing. First it is easy to see that 
$$
g_1'(c) =
\frac{1}{16} \sec ^2\left(\frac{c}{2}\right) \sec ^2\left(\frac{1}{8} (2 c+\pi )\right) \left(-4 \sin \left(\frac{c}{2}\right)-2 \sin \left(c+\frac{\pi }{4}\right)+2 \cos (c)+\sqrt{2}+2\right).
$$
Hence, it suffices to show that $h_1(c) = -4 \sin \left(\frac{c}{2}\right)-2 \sin \left(c+\frac{\pi }{4}\right)+2 \cos (c)+\sqrt{2}+2 <0$ for $c\in [\pi/2,2\pi/3]$. Indeed, we have $h_1(c) = 2-2\sqrt2 <0$. Hence, it further suffices to show that $h_1(c)$ is decreasing. 
To that end we note that $\sinn{c}$ is strictly decreasing in $c\in [\pi/2,2\pi/3]$, and hence, after performing the transformation $\sinn{c} = x$, we obtain function 
$$
s_1(x) = h_1(\arcsin(x)) = -\sqrt{2-2 x^2}-\left(\sqrt{2}-2\right) x-2 \sqrt{2-2 x}+\sqrt{2}+2,
$$
over the domain $x\in [\sqrt3/2,1]$, which we need to show is increasing. For that, we calculate $s_1'(x) = \frac{2 x}{\sqrt{2-2 x^2}}+\frac{2}{\sqrt{2-2 x}}-\sqrt{2}+2$, which is well defined for all $x\in [\sqrt3/2,1)$. But then, note that for all $x\geq 0$, we have that $s_1'(x)>0$, hence indeed, $s_1(x)$ is increasing as wanted.

Next we claim that $M_2 = 1-2\coss{c}$, and for this we consider function 
$$
g_2(c) =   \frac{1}{\sinn{c/2+\pi/4}}+\tann{c/2-\pi/4}+2\coss{c}
$$
over $c\in [2\pi/3, 5\pi/6]$. We see that $g_2(5\pi/6) =0$, and hence it is enough to show that $g_2(c)\geq 0$. For this we calculate $g_2'(c) = \frac{1}{4} \sec ^2\left(\frac{1}{8} (2 c+\pi )\right)- 2 \sin (c)$, and we show that $g_2'(c)\leq 0$, for all $c\in [2\pi/3, 5\pi/6]$, implying that $g_2(c)$ is decreasing, and hence concluding what we claimed. To show that the derivative is negative, we apply transformation $x=\coss{c/4+\pi/8}$, hence defining function 
$$
s_2(x) = g_2(4\arccos(x) - \pi/2)= 2 + 1/(4 x^2) - 16 x^2 + 16 x^4
$$
with $1/2\leq x \leq \sinn{5\pi/24}\approx 0.608761$. Clearly, it suffices to show that $s_2(x) \leq 0$ in the latter interval. Applying also the transformation $2x^2=y$, we obtain  
$ s_2(\sqrt{y/2}) = 4 y^2-8 y+\frac{1}{2 y}+2 = \frac{8 y^3-16 y^2+4 y+1}{2 y}$, and hence it suffices to prove that 
$q_2(y) = 8 y^3-16 y^2+4 y+1 \leq 0$, in the domain $1/2 \leq y \leq 2 \sin ^2\left(\frac{5 \pi }{24}\right) \approx 0.741181$ obtained from the transformation. The three real roots of  $q_2(y)$ are easy to derive: $y_1=1/2, y_2=\frac{1}{4} \left(3-\sqrt{13}\right) \approx -0.151388$ and $y_3=\frac{1}{4} \left(\sqrt{13}+3\right) \approx 1.65139$. Since the leading coefficient of $q_2(y)$ is positive, it follows that $q_2(x)$ is negative for all $1/2 \leq y \leq \frac{1}{4} \left(\sqrt{13}+3\right)$. The main claim then follows by noticing that 
$2 \sin ^2\left(\frac{5 \pi }{24}\right) \leq \frac{1}{4} \left(\sqrt{13}+3\right)$. 

Finally, we claim that $M_3 = 1+\sqrt3+c-5\pi/6$, and for this we consider function 
$$
g_3(c) = \frac{1}{\sinn{c/2+\pi/4}}+\tann{c/2-\pi/4} - \sqrt3 - c + 5\pi/6 
$$
over $c\in [5\pi/6,3\pi/2]$. We need to show that $g_3(c) \geq 0$, and we see that $g_3(5\pi/6) = 0$. Therefore, it suffices to prove that $g_3'(c) = -1+1/2\cos^2(c/4+\pi/8) \geq 0$. 
Recall that $c\in [5\pi/6,3\pi/2]$, and hence $c/4+\pi/8 \in [\pi/3,\pi/2]$. But then, since $\cos(\cdot)$ is decreasing in $[\pi/3,\pi/2]$, we see that $\cos(c/4+\pi/8)\leq \coss{\pi/3} =1/2$. This implies that 
$g_3'(c) = -1+1/4\cos^2(c/4+\pi/8) \geq -1+1=0$ as wanted. 
\qed  \end{proof}

\section{Omitted Figures}
\label{sec: omitted figures}

\subsection{Figures Omitted from Section~\ref{sec: solution to avg partial and shoreline-n} }
\label{sec: Images Omitted from Section sec: solution to avg partial and shoreline-n}

\begin{figure}[h!]
    \centering
    \begin{subfigure}[t]{0.47\textwidth}
        \centering
        \includegraphics[width=5cm]{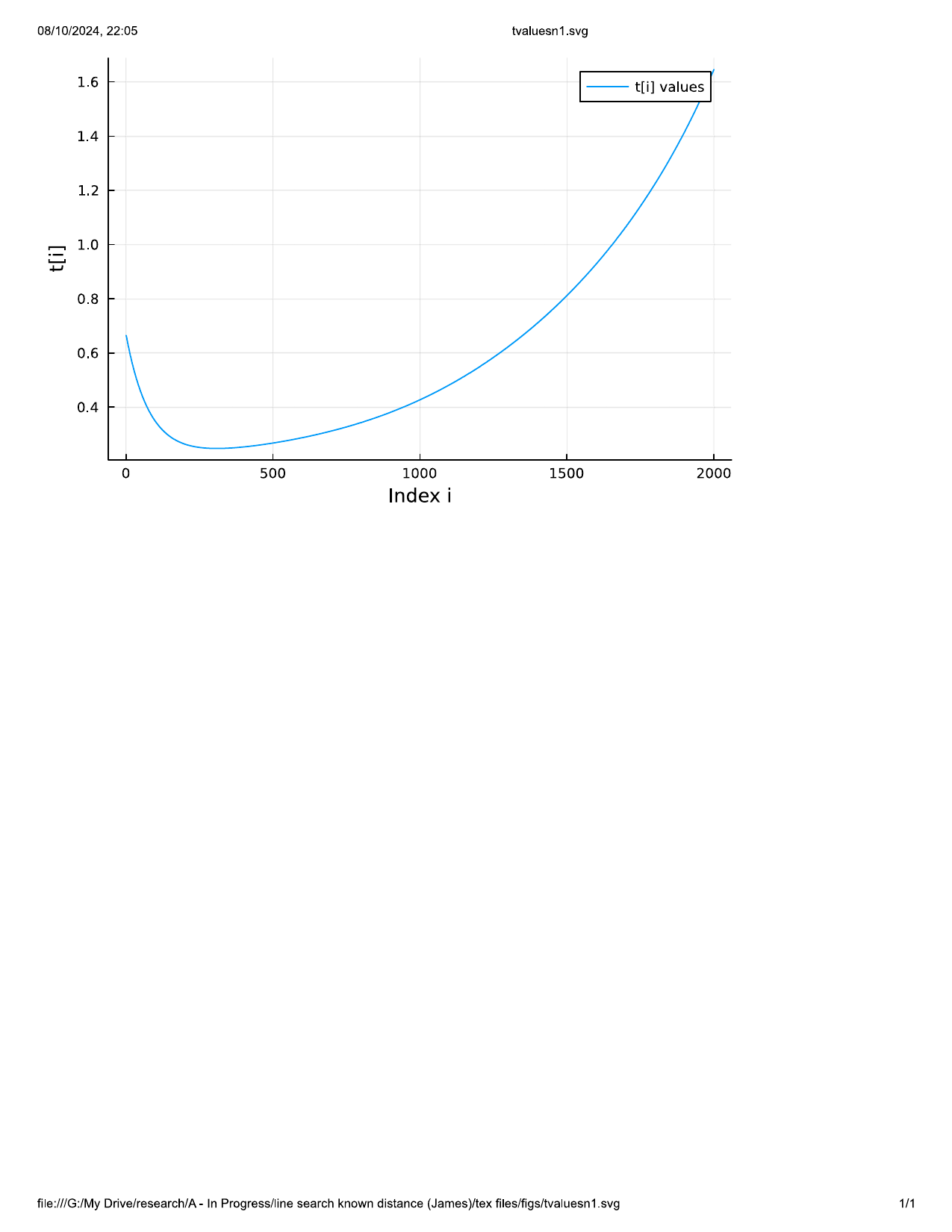}
        \caption{The values of $t_i$ for inspection trajectory \textsc{EPS}$(0.5910554,t_1,\ldots,t_k)$, with $k=2000$, for \pshoreline{2\pi} as well as for \shoreline{1}.}
        \label{fig: tvaluesn1}
    \end{subfigure}
    \hfill
    \begin{subfigure}[t]{0.47\textwidth}
        \centering
        \includegraphics[width=4.8cm]{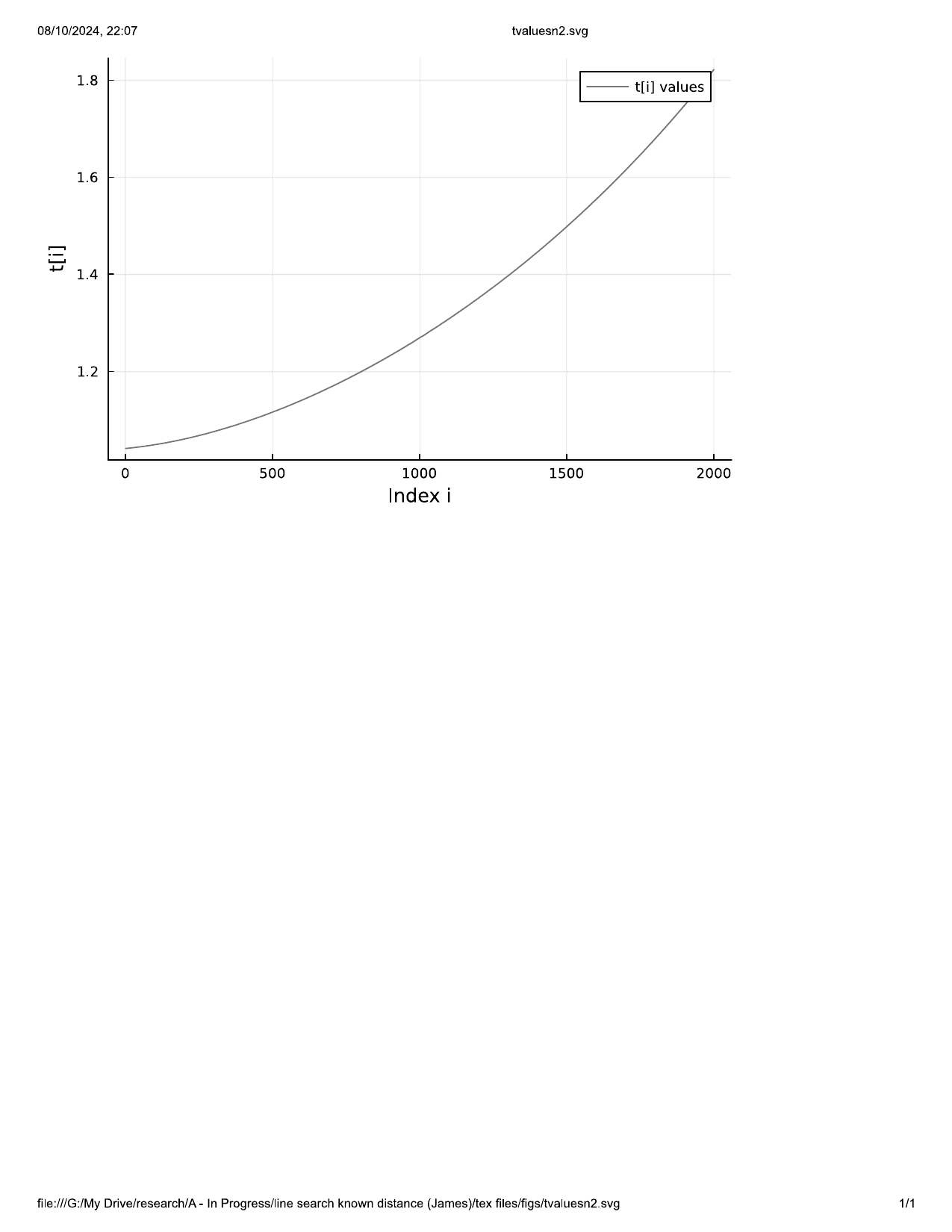}
        \caption{The values of $t_i$ for inspection trajectory \textsc{EPS}$(0.8054878,t_1,\ldots,t_k)$, with $k=2000$, for \pshoreline{\pi} as well as for \shoreline{2}.}
        \label{fig: tvaluesn1}
    \end{subfigure}
    \caption{Inspection parameter values of $t_i$ for the extended polysegment algorithm, as used in Theorem~\ref{thm: general avg upper bound}.}
    \label{fig: tvaluesn n=1,2}
\end{figure}

\begin{figure}[h!]
    \centering
    \begin{subfigure}[t]{0.32\textwidth}
        \centering
        \includegraphics[width=5cm]{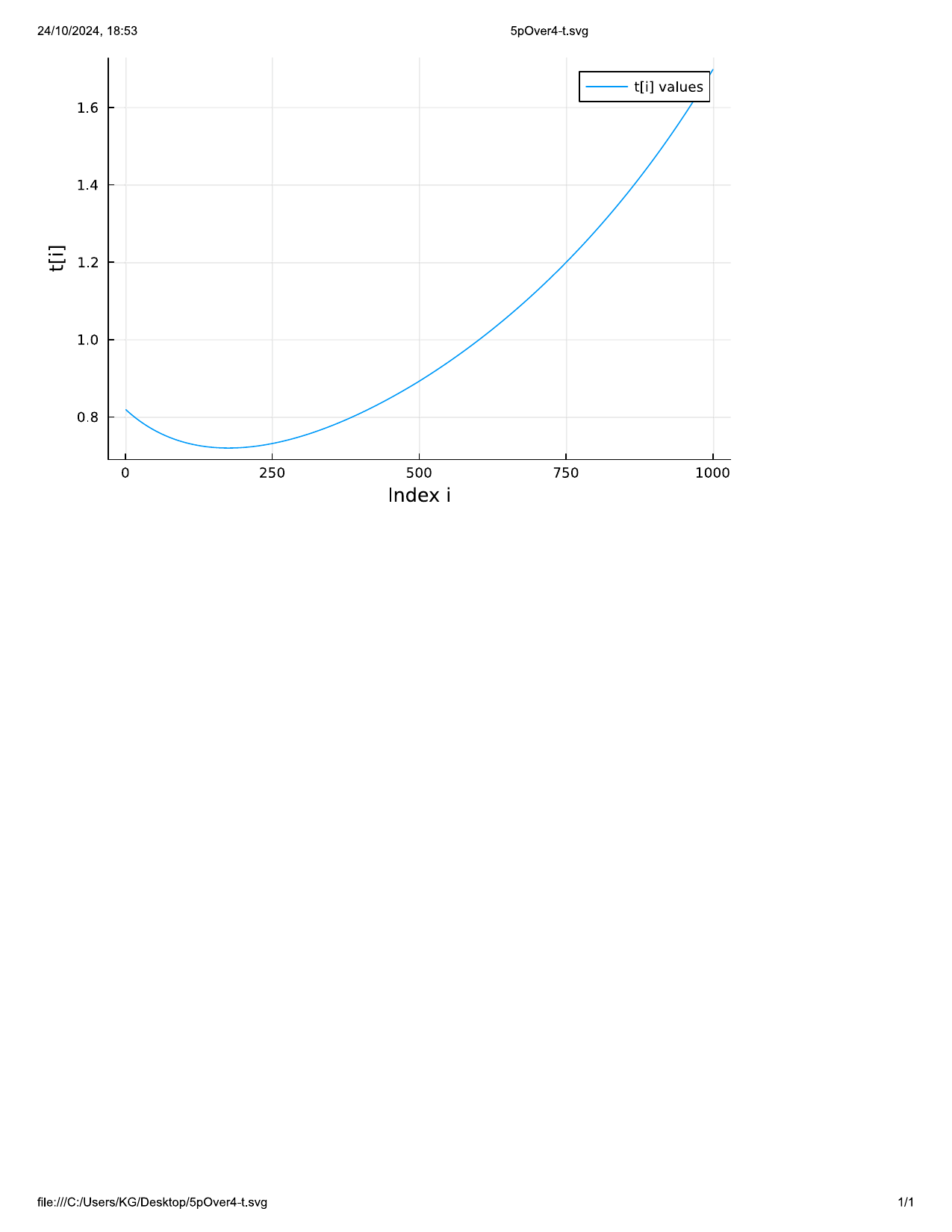}
        \caption{The values of $t_i$ of trajectory \textsc{EPS}$(\theta,t_1,\ldots,t_k)$ solving \pshoreline{5\pi/4}. The corresponding trajectory is seen in Figure~\ref{fig: 5pOver4}.
}
        \label{fig: 5pOver4-t}
    \end{subfigure}
    \hfill
    \begin{subfigure}[t]{0.32\textwidth}
        \centering
        \includegraphics[width=4.8cm]{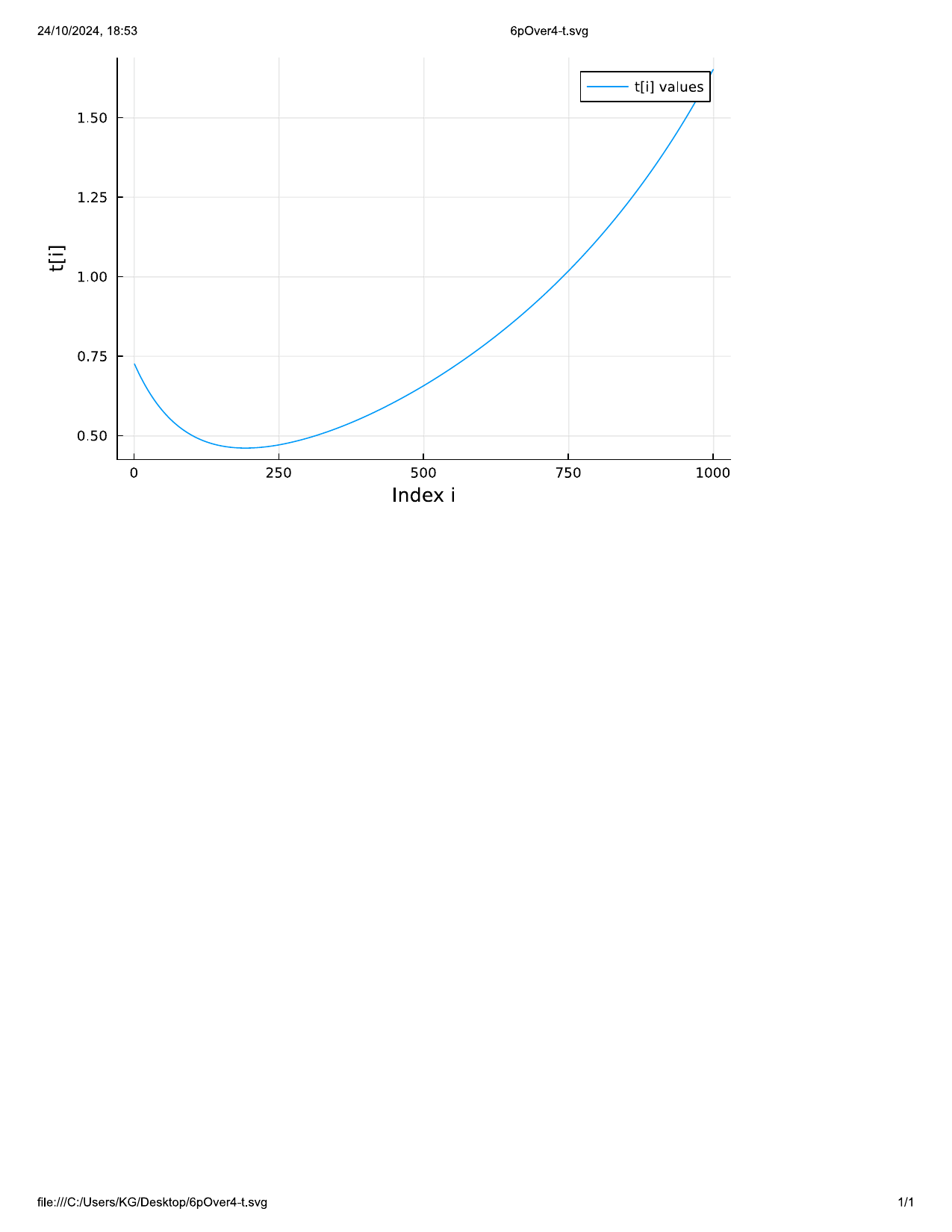}
        \caption{The values of $t_i$ of trajectory \textsc{EPS}$(\theta,t_1,\ldots,t_k)$ solving \pshoreline{6\pi/4}. The corresponding trajectory is seen in Figure~\ref{fig: 6pOver4}.
        }
        \label{fig: 6pOver4-t}
    \end{subfigure}
    \hfill
    \begin{subfigure}[t]{0.32\textwidth}
        \centering
        \includegraphics[width=4.8cm]{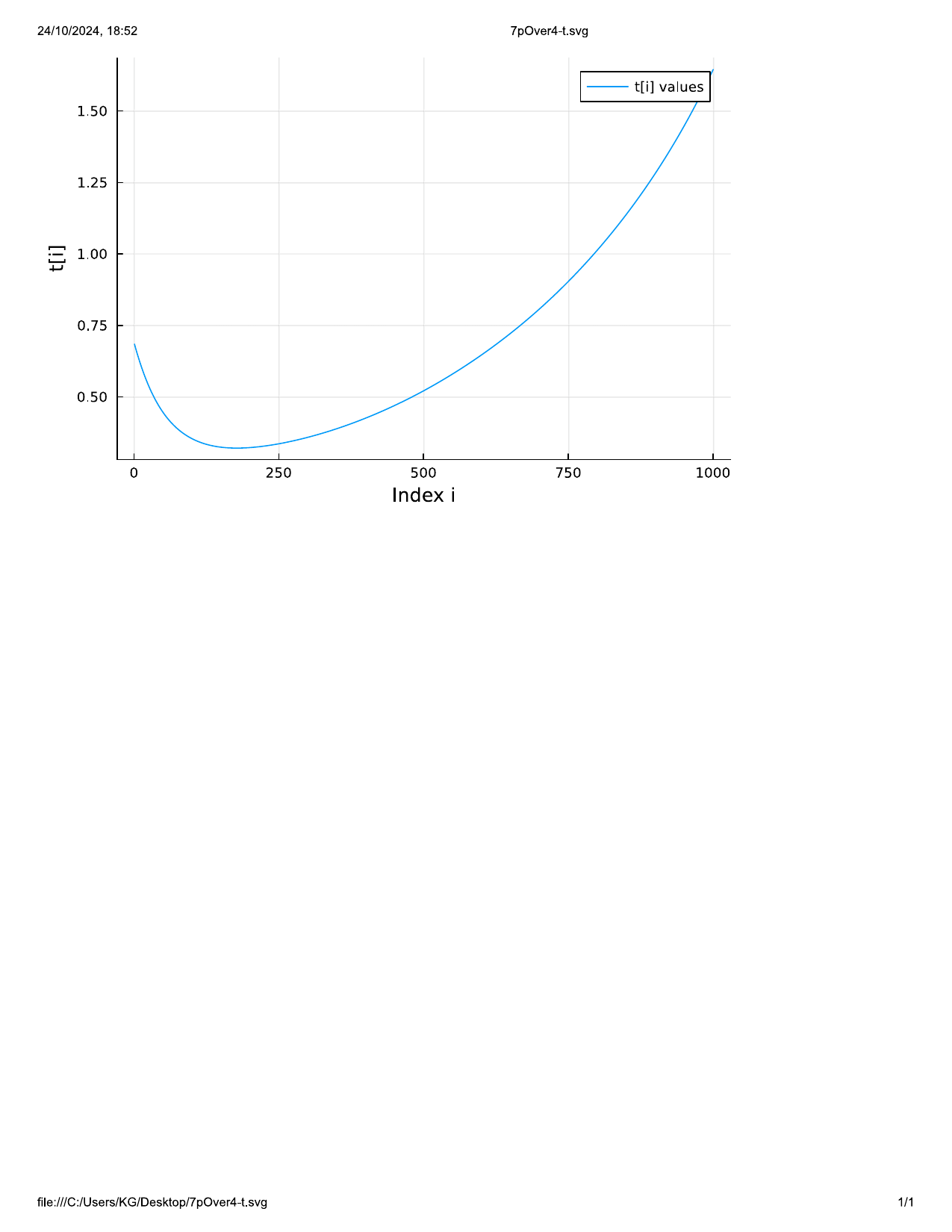}
        \caption{The values of $t_i$ of trajectory \textsc{EPS}$(\theta,t_1,\ldots,t_k)$ solving \pshoreline{7\pi/4}. The corresponding trajectory is seen in Figure~\ref{fig: 7pOver4}.
        }
        \label{fig: 7pOver4-t}
    \end{subfigure}
    \caption{Trajectories for minimizing the average inspection cost for \pshoreline{c}, $c=i \frac{\pi}4$, where $i=5,6,7$.   }
    \label{fig: t values 5,6,7pOver4}
\end{figure}

\subsection{Figures Omitted from Section~\ref{sec: tradeoffs}}
\label{sec: Figures Omitted from Section sec: tradeoffs}

\begin{figure}[h!]
    \centering
    \begin{subfigure}[t]{0.32\textwidth}
        \centering
        \includegraphics[width=5cm]{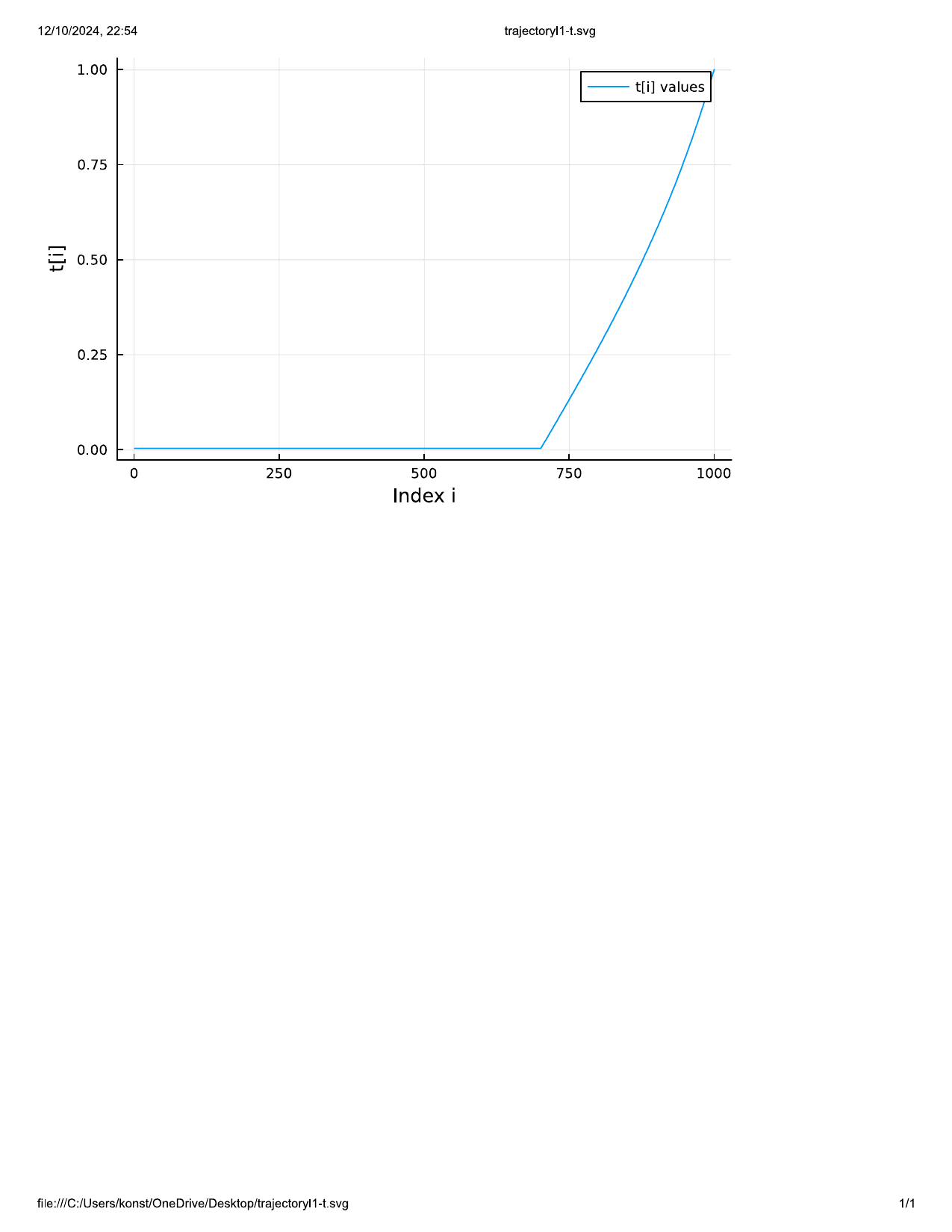}
        \caption{The values of $t_i$ of trajectory \textsc{EPS}$(\theta,t_1,\ldots,t_k)$ solving \tradeoff{1}. The corresponding trajectory is seen in Figure~\ref{fig: trajectoryl1}.
}
        \label{fig: trajectoryl1-t}
    \end{subfigure}
    \hfill
    \begin{subfigure}[t]{0.32\textwidth}
        \centering
        \includegraphics[width=4.8cm]{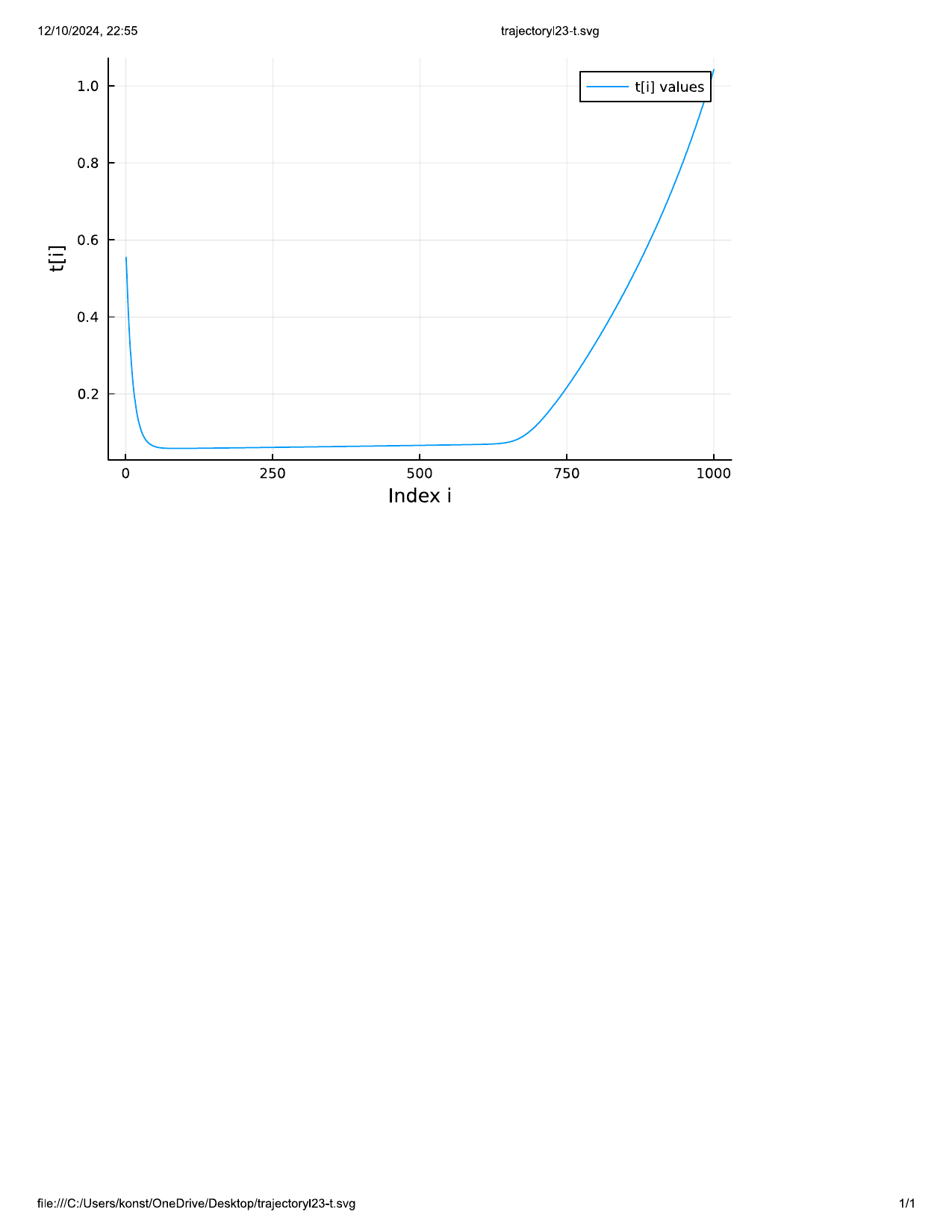}
        \caption{The values of $t_i$ of trajectory \textsc{EPS}$(\theta,t_1,\ldots,t_k)$ solving \tradeoff{2/3}. The corresponding trajectory is seen in Figure~\ref{fig: trajectoryl23}.
        }
        \label{fig: trajectoryl23-t}
    \end{subfigure}
    \hfill
    \begin{subfigure}[t]{0.32\textwidth}
        \centering
        \includegraphics[width=4.8cm]{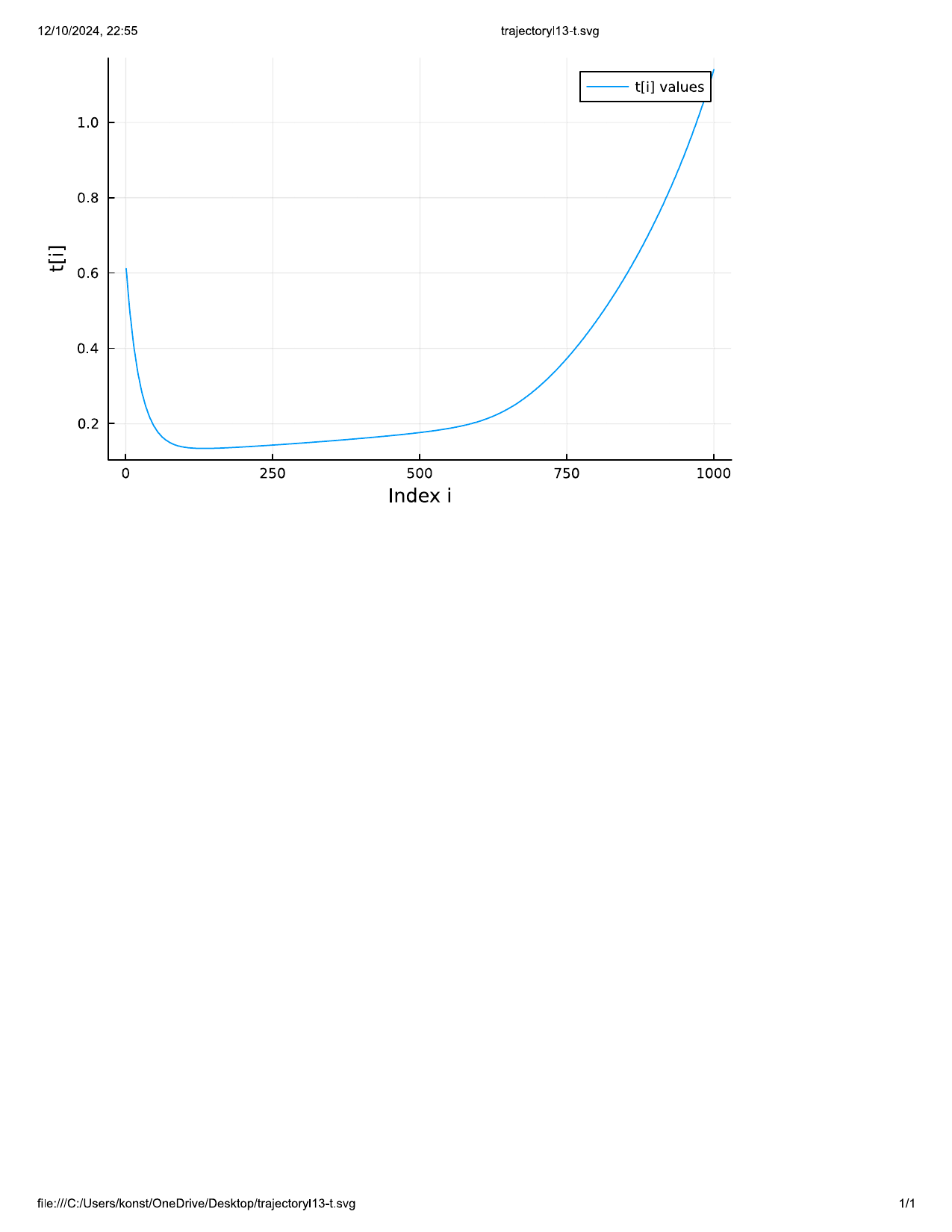}
        \caption{The values of $t_i$ of trajectory \textsc{EPS}$(\theta,t_1,\ldots,t_k)$ solving \tradeoff{1/3}. The corresponding trajectory is seen in Figure~\ref{fig: trajectoryl13}.
        }
        \label{fig: trajectoryl13-t}
    \end{subfigure}
    \caption{
   Agent's trajectories' parameter values $t_i$ for trade-off problem \tradeoff{\lambda} of the proof of Theorem~\ref{thm: general avg upper bound}, for $\lambda=1,2/3,1/3$. The parameter values corresponding to $\lambda=0$ (i.e. the best average performance of \textsc{EPS}$(\theta,t_1,\ldots,t_k)$ solving \shoreline{1}) can be seen in Figure~\ref{fig: tvaluesn1}.
   Notably, for $\lambda=1$, parameter values are nearly $0$, satisfying constraint $t_i \geq (1-\coss{\frac{c-2\theta}k})/\sinn{\frac{c-2\theta}k}$ tightly. For $\lambda=2/3$ the values are bounded away from the threshold. For $\lambda=1/3$, both the behavior of parameters $t_i$ as well as the trajectory resemble those corresponding to $\lambda=1$.
    }
    \label{fig: trajectoriesl-t}
\end{figure}

\end{document}